\numberwithin{equation}{section}
\theoremstyle{plain}
\newtheorem{thm}{Theorem}
\newtheorem*{thmA}{Theorem A}
\newtheorem*{thmB}{Theorem B}
\newtheorem{prop}{Proposition}[section]
\newtheorem{lem}{Lemma}[section]
\newtheorem*{thm*}{Theorem}
\newtheorem{cor}{Corollary}
\newtheorem*{conj*}{Conjecture}
\newtheorem{conj}{Conjecture}
\newtheorem{question}{Question}
\theoremstyle{definition}
\newtheorem{defn}{Definition}[section]
\theoremstyle{remark}
\newtheorem{rk}{Remark}[section]
\newcounter{parentnumber}
\renewcommand{\Bbb}{\mathbb}
\newcommand{\ve}{\varepsilon}
\newcommand{\les}{\lesssim}
\newcommand{\diag}{\operatorname{diag}}
\newcommand{\tr}{\operatorname{tr}}
\newcommand{\Ric}{\mathrm{Ric}}
\newcommand{\Div}{\operatorname{div}}
\DeclareMathOperator{\adiv}{\slashed\Div}
\DeclareMathOperator{\ohat}{\hat\otimes}
\DeclareMathOperator{\arot}{\slashed{\mathrm{rot}}}
\begin{document}

\title{Event horizon gluing and black hole formation in vacuum: \\ the very slowly rotating case}

\author[1]{Christoph~Kehle\thanks{christoph.kehle@eth-its.ethz.ch}}
\author[2]{Ryan Unger\thanks{runger@math.princeton.edu}}
\affil[1]{\small  Institute~for~Theoretical~Studies \& Department of Mathematics, ETH~Zürich, 

8092~Zürich,~Switzerland \vskip.1pc \ 
}
\affil[2]{\small  Department of Mathematics, Princeton University, 
	Washington~Road,~Princeton~NJ~08544,~United~States~of~America \vskip.1pc \  
	}

 \date{\today}
\maketitle

\begin{abstract} In this paper, we initiate the study of characteristic event horizon gluing \emph{in vacuum}. More precisely, we prove that Minkowski space can be glued along a null hypersurface to \emph{any} round symmetry sphere in a Schwarzschild black hole  spacetime as a $C^2$ solution of the Einstein vacuum equations. The method of proof is fundamentally nonperturbative and is closely related to our previous work in spherical symmetry \cite{KU22} and Christodoulou's \emph{short pulse} method \cite{Christo09}. We also make essential use of the perturbative characteristic gluing results of Aretakis--Czimek--Rodnianski \cite{ACR1, Czimek2022-cl}.

 As an immediate corollary of our methods, we obtain characteristic gluing of Minkowski space to the event horizon of very slowly rotating Kerr with prescribed mass $M$ and specific angular momentum $a$.  Using our characteristic gluing results, we construct examples of vacuum gravitational collapse to very slowly rotating Kerr black holes in finite advanced time with prescribed $M$ and $0\le |a|\ll M$. 

 Our construction also yields the first example of a spacelike singularity arising from one-ended, asymptotically flat gravitational collapse in vacuum. 
\end{abstract}

\tableofcontents

\section{Introduction} \label{sec:introduction}

\emph{Characteristic gluing} is a powerful new method for constructing solutions of the \emph{Einstein field equations}
\begin{equation} \label{eq:Einstein-equations}
    \Ric(g) - \tfrac 12   R(g) g =   2 T
\end{equation} for a spacetime metric $g$ and coupled matter fields 
by gluing together two existing solutions along a null hypersurface. The setup of characteristic gluing is depicted in \cref{fig:char-gluing-setup} below and we will repeatedly refer to this diagram for definiteness. 

\begin{figure}[ht]
\centering{
\def\svgwidth{10pc}
\begingroup%
  \makeatletter%
  \providecommand\color[2][]{%
    \errmessage{(Inkscape) Color is used for the text in Inkscape, but the package 'color.sty' is not loaded}%
    \renewcommand\color[2][]{}%
  }%
  \providecommand\transparent[1]{%
    \errmessage{(Inkscape) Transparency is used (non-zero) for the text in Inkscape, but the package 'transparent.sty' is not loaded}%
    \renewcommand\transparent[1]{}%
  }%
  \providecommand\rotatebox[2]{#2}%
  \newcommand*\fsize{\dimexpr\f@size pt\relax}%
  \newcommand*\lineheight[1]{\fontsize{\fsize}{#1\fsize}\selectfont}%
  \ifx\svgwidth\undefined%
    \setlength{\unitlength}{161.65088563bp}%
    \ifx\svgscale\undefined%
      \relax%
    \else%
      \setlength{\unitlength}{\unitlength * \real{\svgscale}}%
    \fi%
  \else%
    \setlength{\unitlength}{\svgwidth}%
  \fi%
  \global\let\svgwidth\undefined%
  \global\let\svgscale\undefined%
  \makeatother%
  \begin{picture}(1,0.99999993)%
    \lineheight{1}%
    \setlength\tabcolsep{0pt}%
    \put(0,0){\includegraphics[width=\unitlength,page=1]{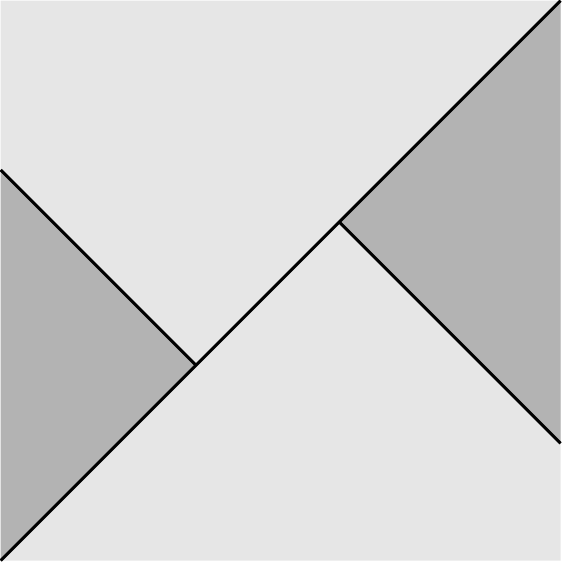}}%
    \put(0.43736499,0.48376128){\color[rgb]{0,0,0}\rotatebox{45}{\makebox(0,0)[lt]{\lineheight{1.25}\smash{\begin{tabular}[t]{l}$C$\end{tabular}}}}}%
    \put(0,0){\includegraphics[width=\unitlength,page=2]{char-gluing.pdf}}%
    \put(0.33621429,0.25091349){\color[rgb]{0,0,0}\makebox(0,0)[lt]{\lineheight{1.25}\smash{\begin{tabular}[t]{l}$S_1$\end{tabular}}}}%
    \put(0.57242169,0.67472436){\color[rgb]{0,0,0}\makebox(0,0)[lt]{\lineheight{1.25}\smash{\begin{tabular}[t]{l}$S_2$\end{tabular}}}}%
    \put(0.80856072,0.58474549){\color[rgb]{0,0,0}\makebox(0,0)[lt]{\lineheight{1.25}\smash{\begin{tabular}[t]{l}$\mathfrak R_2$\end{tabular}}}}%
    \put(0.10590686,0.33183049){\color[rgb]{0,0,0}\makebox(0,0)[lt]{\lineheight{1.25}\smash{\begin{tabular}[t]{l}$\mathfrak R_1$\end{tabular}}}}%
    \put(0.28938771,0.82717113){\color[rgb]{0,0,0}\makebox(0,0)[lt]{\lineheight{1.25}\smash{\begin{tabular}[t]{l}$(\mathcal M,g,\dotsc)$\end{tabular}}}}%
  \end{picture}%
\endgroup%
}
\caption{Penrose diagram depicting the setup of characteristic gluing. The null hypersurface $C$ is declared to be ``outgoing.''}
\label{fig:char-gluing-setup}
\end{figure}

In \cref{fig:char-gluing-setup}, the two dark gray regions $\mathfrak R_1$ and $\mathfrak R_2$ carry Lorentzian metrics and matter fields which satisfy  \eqref{eq:Einstein-equations}. The goal is to embed these regions into a spacetime $(\mathcal M,g,\dotsc)$ which satisfies \eqref{eq:Einstein-equations} globally, in the configuration depicted in \cref{fig:char-gluing-setup}. The characteristic gluing problem reduces to constructing characteristic data along a null hypersurface $C$ going between spheres $S_1\subset\mathfrak R_1$ and $S_2\subset\mathfrak R_2$, so that after constructing the light gray regions in \cref{fig:char-gluing-setup} by solving a characteristic initial value problem, the resulting spacetime is of the desired global regularity. 

Characteristic gluing is a useful tool for constructing spacetimes that share features of two existing solutions, and therefore display interesting behavior. This technique was recently pioneered for the \emph{Einstein vacuum equations}
\begin{equation}
\Ric(g)=0\label{eq:EVE}
\end{equation}
by Aretakis, Czimek, and Rodnianski \cite{ACR1}, and we will give an overview of their work in \cref{sec:ACR} below, including the obstruction-free gluing of Czimek--Rodnianski \cite{Czimek2022-cl}, which we will make crucial use of. Characteristic gluing was also recently used by the present authors to disprove the so-called \emph{third law of black hole thermodynamics} in the Einstein--Maxwell-charged scalar field model in spherical symmetry \cite{KU22}; see already \cref{sec:third-law}. We refer the reader to \cite{KU22} for a general formalism for the characteristic gluing problem. The present work is the first in a series of papers aimed at extending \cite{KU22} to the Einstein vacuum equations \eqref{eq:EVE}.

The most basic question in the study of characteristic gluing is the following:
\begin{question}\label{question-1}
    Which spheres $S_1$ and $S_2$ in which vacuum spacetimes can be characteristically glued as in \cref{fig:char-gluing-setup} as a solution of the Einstein vacuum equations \eqref{eq:EVE}? Are there any nontrivial obstructions? If so, can they be characterized geometrically? 
\end{question}

For example, a genuine obstruction arises from \emph{Raychaudhuri's equation} (see already \eqref{eq:Ray-v}), which implies that $S_2$ cannot be strictly outer untrapped if $S_1$ is (marginally) outer trapped. Another genuine obstruction arises from the rigidity of the positive mass theorem, which implies that if $\mathfrak R_2$ is Minkowski space, then $\mathfrak R_1$ is either Minkowski space or must be singular or incomplete in some sense. 

Our first theorem shows that the characteristic gluing of Minkowski space to (positive mass) Schwarzschild solutions is completely unobstructed, provided that we aim to glue to a symmetry sphere in Schwarzschild. In the statements of our theorems, refer to \cref{fig:char-gluing-setup}.

\begin{thm}\label{thm:Schwarzschild-gluing}
    Let $M>0$. Let $S_2$ be any non-antitrapped symmetry sphere in the Schwarzschild solution of mass $M$. Then $S_2$ can be characteristically glued to a sphere $S_1$ as depicted in \cref{fig:char-gluing-setup}, to order $C^2$ as a solution of the Einstein vacuum equations \eqref{eq:EVE}, where $S_1$ is a spacelike sphere in Minkowski space which is arbitrarily close to a round symmetry sphere. 
\end{thm}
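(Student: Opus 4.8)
The plan is to build the characteristic data on $C$ by interpolating between a Minkowskian sphere and the Schwarzschild sphere $S_2$, using a long null hypersurface whose length is a free large parameter $L$ that we get to choose. On the Schwarzschild end, the outgoing null hypersurface emanating from $S_2$ carries explicit self-similar-free data; on the Minkowski end, we want data that, after solving the characteristic Cauchy problem to the past, closes up to a smooth spacelike sphere $S_1$ that is $C^2$-close (indeed $C^\infty$-close, up to the freedom in the construction) to a round sphere. The core mechanism, as in \cite{KU22} and Christodoulou's short pulse method \cite{Christo09}, is that \emph{transversal} derivatives of the metric — encoded in the shear $\hat\chi$ and the ingoing expansion — can be made large in a controlled window while the induced metric on the spheres stays close to round; concentrating a large ``pulse'' of shear on $C$ forces the Hawking mass to jump from $0$ to $M$ across the hypersurface via Raychaudhuri's equation and its ingoing analogue. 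Concretely, I would: (i) normalize $C$ so that $S_2$ sits at one end with its Schwarzschild-induced geometry; (ii) prescribe the conformal class of the induced metric to be round along all of $C$, and choose the free shear data $\hat\chi$ to be a large bump supported in an interior subinterval of $C$, scaled so that the total ``flux'' $\int |\hat\chi|^2$ realizes the required mass gap $M$; (iii) solve the null structure (constraint/transport) equations along $C$ to recover the full data, checking that the solution exists on all of $C$ and that near the Minkowski end the Hawking mass and all the relevant curvature components vanish to the order needed.

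The reduction to perturbative gluing is where Aretakis–Czimek–Rodnianski \cite{ACR1} and the obstruction-free refinement of Czimek–Rodnianski \cite{Czimek2022-cl} enter. After the large-pulse segment has done the job of creating the mass, the data at the Minkowski end of that segment is only \emph{approximately} Minkowskian — it is a small (in an appropriate norm, after the large parameter has been spent) perturbation of the data induced by a round sphere in Minkowski space. I would then append a further null segment and invoke the obstruction-free characteristic gluing of \cite{Czimek2022-cl} to glue this approximately-Minkowskian sphere exactly to a genuine sphere in Minkowski space, to the required $C^2$ regularity. This is legitimate precisely because the only obstructions to perturbative characteristic gluing (the $10$ charges: mass-aspect/energy, linear momentum, angular momentum, etc.) are killed in the obstruction-free scheme at the cost of moving slightly in moduli; here we do not care about the precise location of $S_1$, only that it is close to round, so the freedom is harmless. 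The sphere $S_1$ produced this way is spacelike and arbitrarily close to a round symmetry sphere by taking the perturbation parameter small, i.e., the large-pulse parameter $L$ large.

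The main obstacle I expect is (ii)–(iii): showing that the nonlinear null structure equations along $C$ can actually be solved \emph{all the way across} the large-pulse region without the solution degenerating — in particular that the sphere geometry stays uniformly close to round (so that no trapped surface forms prematurely on $C$, which Raychaudhuri would forbid if we want to reach an untrapped Minkowskian end, consistent with the hypothesis that $S_2$ is non-antitrapped), and that the transversal components we need to hand to the perturbative gluing step land in the right smallness regime. This is a genuinely nonperturbative, quantitative estimate: one must track how the large shear pulse propagates through the Raychaudhuri and Codazzi equations, exploit the structure-preserving hierarchy of scales à la Christodoulou, and confirm that the ``debt'' incurred (deviation from roundness, nonzero charges) is repayable by \cite{Czimek2022-cl}. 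A secondary point to handle carefully is the matching at $S_2$: one must verify that \emph{every} non-antitrapped symmetry sphere — including those inside the black hole region, where the area radius is decreasing along $C$ — can serve as an endpoint, which is where the sign conditions in Raychaudhuri's equation and the precise meaning of ``non-antitrapped'' versus ``trapped'' become essential and dictate the orientation of the pulse.
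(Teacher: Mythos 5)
Your overall strategy---create the Hawking mass by firing a large shear pulse along $C$ while keeping the induced metric close to round, then invoke the obstruction-free gluing of Czimek--Rodnianski at the Minkowski end---is indeed the route the paper takes. But there is one point where your plan, as written, would run into a wall, plus a secondary point you have not addressed.

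The main gap is in where the nonperturbative stage is supposed to land. You propose to push the mass from $M$ all the way down to \emph{approximately zero}, i.e., to a sphere that is ``a small perturbation of the data induced by a round sphere in Minkowski space,'' and then hand that sphere to the obstruction-free gluing theorem. But the Czimek--Rodnianski theorem (\cref{thm:CR}) has a genuinely quantitative coercivity hypothesis: with $\ve_\flat$ denoting the smallness of the data relative to Minkowski, one needs $\Delta\mathbf E > C_\mathbf E \ve_\flat$ together with $|\Delta\mathbf L|<\ve_\flat^2$ and $|\Delta\mathbf P|+|\Delta\mathbf G|<C_*\Delta\mathbf E$. If the top sphere of the perturbative stage is itself approximately Minkowskian, then $\Delta\mathbf E$ is of the same size as the perturbation, not much larger, and the coercivity fails. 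The charges are not ``killed by moving slightly in moduli''---they are matched by injecting high-frequency perturbations along the cone, and this can only \emph{increase} the Hawking mass; the scheme therefore needs a definite mass gap between the two spheres that dominates everything else. The fix, which is what the paper does, is to have the nonperturbative stage stop at an \emph{intermediate} sphere $S_*$ close to a Schwarzschild sphere of small but strictly positive mass $M_*$, and to arrange the hierarchy $0<\ve_\sharp\ll M_*\ll R$ so that the mass gap to Minkowski (which is then $\approx M_*$) strictly dominates the perturbation size $\ve_\sharp$. Without this intermediate landing spot, the reduction to \cite{Czimek2022-cl} does not close.

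The secondary gap concerns your step (ii)--(iii). It is not enough that the shear be a ``large bump in an interior subinterval of $C$''; one must also arrange the $\ell\ge 1$ modes of $\tr\chi$ at the bottom sphere to be quadratically small (of order $\delta^2$ after the $\delta$-rescaling/boost), since otherwise $S_*$ is only $O(\delta)$-close to round and the subsequent norms do not fall into the right smallness regime. The paper achieves this by imposing the averaging condition $\slashed\nabla\int_0^1 e\,dv=0$ on the squared shear along each generator, and this in turn forces a nontrivial spatial structure on the pulse: by Poincar\'e--Hopf, $\hat\chi$ must vanish somewhere on each cross-section, and the averaging condition is incompatible with the zero staying on a fixed generator. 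The paper resolves this with a two-pulse configuration (one pulse avoiding the north pole, one avoiding the south pole, in disjoint $v$-intervals, with $f_1^2+f_2^2=1$). Your proposal of a single interior bump of $\hat\chi$ does not account for this, and as stated would leave an $O(\delta)$ deviation in $\tr\chi$ on $S_*$ that the perturbative step cannot absorb. With those two points repaired, your outline matches the proof of the paper (where the ``large parameter $L$'' is implemented as a boost $\mathfrak b_\delta$ applied to data on the fixed cone $[0,1]\times S^2$, as in \cref{def:boost} and \cref{prop:Schw-perturb-Schw}).
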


For the precise statement of this theorem, see already \hyperlink{thmA}{Theorem A} in \cref{sec:thmA} below. Our method also immediately generalizes to very slowly rotating Kerr, and gives the following particularly clean statement about event horizon gluing:

\begin{thm}\label{thm:Kerr-gluing} There exists a constant $0<\mathfrak a_0\ll 1$  such that if $S_2$ is  a spacelike section of the event horizon of a Kerr black hole with mass $M>0$ and specific angular momentum $a$ satisfying $0\le|a|/M\le \mathfrak a_0$, then $S_2$ can be characteristically glued to a sphere $S_1$ as depicted in \cref{fig:char-gluing-setup}, to order $C^2$ as a solution of the Einstein vacuum equations \eqref{eq:EVE}, where $S_1$ is a spacelike sphere in Minkowski space which is close to a round symmetry sphere.
\end{thm}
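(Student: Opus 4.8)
The plan is to deduce \cref{thm:Kerr-gluing} as a perturbation of the construction behind \cref{thm:Schwarzschild-gluing}. Recall that the latter proceeds in two stages: starting from nearly round characteristic data on $S_1\subset$ Minkowski, a Christodoulou-type short pulse \cite{Christo09} is used to build the outgoing cone $C$ so that at its far end the induced data is within an arbitrarily small prescribed tolerance of the data of a symmetry sphere in Schwarzschild; then the obstruction-free perturbative characteristic gluing of Aretakis--Czimek--Rodnianski and Czimek--Rodnianski \cite{ACR1, Czimek2022-cl} is applied to glue the far end of $C$ \emph{exactly} onto the target Schwarzschild sphere $S_2$, to order $C^2$ and as a solution of \eqref{eq:EVE}. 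The structural fact I would isolate is that the second stage is \emph{open}: there is a constant $\epsilon_0>0$, depending only on $M$, such that Minkowski space can be glued as in \cref{fig:char-gluing-setup} (with $S_1$ close to a round symmetry sphere, in $C^2$, solving \eqref{eq:EVE}) to \emph{any} sphere whose characteristic data, after a suitable normalization of the null generator of $C$ and reparametrization of the sphere, lies within $\epsilon_0$ of the data of a symmetry sphere on the Schwarzschild event horizon. Granting this, the theorem reduces to a computation on the Kerr event horizon.

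Second, I would quantify the proximity of Kerr event-horizon data to Schwarzschild. In ingoing (Kerr-star) coordinates the metric extends smoothly across $\mathcal{H}^+=\{r=r_+\}$, with $r_+=M+\sqrt{M^2-a^2}=2M-\tfrac{a^2}{2M}+O(a^4/M^3)$, and the induced geometry of a spacelike section of $\mathcal{H}^+$ --- the conformal class of the first fundamental form, the (vanishing) outgoing expansion $\tr\chi$ and shear $\hat\chi$, the ingoing expansion and shear, and the torsion one-form, whose $\ell=1$ curl part carries the angular momentum $J=Ma$ --- converges as $|a|/M\to 0$ to the corresponding data of a section of the Schwarzschild horizon $\{r=2M\}$. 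Exploiting the stationarity of Kerr (the horizon-generating Killing field $\partial_t+\Omega_{\mathcal{H}^+}\partial_\phi$, under whose flow all sections it relates carry isometric induced data) together with the freedom in the gluing problem to renormalize the null generator of $C$ and reparametrize $S_2$, one reduces an \emph{arbitrary} spacelike section to one whose data is $O(|a|/M)$-close to Schwarzschild, uniformly. Choosing $\mathfrak a_0$ sufficiently small then places every section covered by the theorem within the $\epsilon_0$-basin of the openness statement.

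Third, combining these points proves the theorem: for $0\le|a|/M\le\mathfrak a_0$ the data of $S_2$ lies in the basin of the first step, so Minkowski space can be glued to $S_2$ to order $C^2$ as a solution of \eqref{eq:EVE}, with $S_1$ a spacelike sphere in Minkowski close to a round symmetry sphere. No symmetry of $S_2$ is used: the perturbative gluing of \cite{ACR1, Czimek2022-cl} operates on general sphere data near Schwarzschild, so the non-axisymmetric, non-round sections of the Kerr horizon are covered on the same footing.

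The hard part is establishing the \emph{openness} asserted in the first step --- that the nonperturbative short-pulse stage and the perturbative gluing stage interface so as to reach a \emph{full neighborhood} of Schwarzschild horizon-sphere data rather than a single data set. One must show either that small modifications of the short-pulse profile move the far-end data of $C$ freely within a small ball around Schwarzschild, or, more economically, that the Czimek--Rodnianski gluing already has an $\epsilon_0$-sized basin of admissible targets, so that once the short pulse brings the data $\tfrac{\epsilon_0}{2}$-close it suffices to finish to any target $\tfrac{\epsilon_0}{2}$-close, the Kerr sections included. A secondary technical point is the \emph{uniformity} of the $O(|a|/M)$ estimate over all spacelike sections, which rests on smoothness of the Kerr metric up to and across $\mathcal{H}^+$ and on absorbing the section-dependent (gauge) parts of the torsion and the first fundamental form into the gauge freedom of the gluing --- the same reduction already required, in the $a=0$ case, to permit non-round target spheres.
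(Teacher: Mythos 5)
Your proposal has the right skeleton---reduce to Kerr coordinate spheres, observe that the Kerr horizon data converges to Schwarzschild data as $|a|/M\to 0$ (this is \eqref{eq:Kerr-approx} in the paper), and argue that some kind of openness of the Schwarzschild construction then extends it to very slowly rotating Kerr. The reduction to coordinate spheres by flowing along the Kerr Killing field is exactly the paper's observation. But the paragraph where you try to locate the needed openness contains a genuine confusion and misses the actual resolution.

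Your second alternative is wrong as stated: the Czimek--Rodnianski obstruction-free gluing (\cref{thm:CR}) is a theorem about sphere data in a small neighborhood of \emph{Minkowski} data, applied at the \emph{bottom} of the short-pulse cone where the intermediate sphere $S_*$ has tiny Hawking mass $M_*\ll R$. It has no ``$\epsilon_0$-sized basin of admissible targets'' around the exact Schwarzschild or Kerr event-horizon sphere, which has mass $M$ of order one and is nowhere near Minkowski. So there is no sense in which the final sphere $S_2$ lies in a Czimek--Rodnianski basin, and ``the Kerr sections included'' cannot be finished off by that step alone.

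What the paper actually does is closer to your first alternative, but with a specific and essential move you do not identify: one reruns the \emph{nonperturbative} short-pulse stage (\cref{prop:Schw-perturb-Schw}) with the Kerr induced metric substituted directly into the Christodoulou seed data. Concretely, in the ansatz \eqref{defn:g-hat} the base metric $\tilde\gamma$ is replaced by $R^{-2}\slashed g_{M,a,R}$, and one takes the Kerr rather than Schwarzschild boosted components as the $v=1$ data of \cref{lem:seed-data}. The short-pulse construction is a sequence of ODE integrations along the generators of $C$, so the resulting bottom sphere data $\mathfrak b_\delta(x(0))$ depends continuously (Cauchy stability for the ODE system in the proof of \cref{prop:Schw-perturb-Schw}) on the seed data, which by \eqref{eq:Kerr-approx} converges to the $a=0$ seed as $a\to 0$. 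The openness then enters at the \emph{bottom}, in the form of \cref{cor:CR}: the Czimek--Rodnianski theorem, having been verified for the $a=0$ bottom sphere $S_*$, continues to apply for the nearby bottom sphere obtained from Kerr seed data, because all its hypotheses are open conditions. Without this step your argument does not close: you have not explained how to produce a cone $C$ whose top sphere data is $\mathfrak k_{M,a,R}$ exactly (not merely $O(|a|/M)$-close to Schwarzschild), and merely being close to Schwarzschild at the top is not enough for the final gluing statement.
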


This theorem is a special case of \hyperlink{thmB}{Theorem B} in \cref{sec:thmB} below.

\begin{rk}
    Strictly speaking, \hyperlink{thmB}{Theorem B} applies to Kerr coordinate spheres on the event horizon. However, it is easy to see that \emph{any} spacelike section on the event horizon can be connected to a Kerr coordinate sphere. 
\end{rk}

\begin{rk}
    There is an apparent asymmetry in the statements of \cref{thm:Schwarzschild-gluing} and \cref{thm:Kerr-gluing} about the allowable $S_2$'s. In fact, in \hyperlink{thmB}{Theorem B} below, we show that \emph{any} Kerr coordinate sphere can be connected to a sphere in Schwarzschild with smaller mass, but the maximum value of allowed angular momentum depends on the sphere in a non-explicit way that we prefer to explain later in the paper, see already  \cref{sec:ref-sphere-data}.
\end{rk}

\begin{rk}
    In \cref{thm:Schwarzschild-gluing}, the bottom sphere $S_1$ can be made arbitrarily close to an exact symmetry sphere in Minkowski, whereas in \cref{thm:Kerr-gluing}, the closeness to an exact symmetry sphere is limited by the size of $a/M$.
\end{rk}

\begin{rk}
The $C^2$ regularity of the spacetime metric in \cref{thm:Schwarzschild-gluing} and \cref{thm:Kerr-gluing} is due to limited regularity in the direction transverse to $C$. The regularity of the metric in directions tangent to $C$ can be made arbitrarily high (but finite). 
\end{rk}

By using \cref{thm:Kerr-gluing} and solving the Einstein equations backwards, we can construct examples of gravitational collapse to a black hole of prescribed very small angular momentum. The proof will be given in \cref{sec:proofs-of-corollaries} below. 

\begin{cor}[Gravitational collapse with prescribed $M$ and $0\le |a|\ll M$]\label{cor:main}
   Let $\mathfrak a_0$ be as in \cref{thm:Kerr-gluing}. Then for any mass $M>0$ and specific angular momentum $a$ satisfying $0\le|a|/ M\le\mathfrak a_0$, there exist one-ended asymptotically flat Cauchy data $(g_0,k_0)\in H^{7/2-}_\mathrm{loc}\times H^{5/2-}_\mathrm{loc}$ for the Einstein vacuum equations \eqref{eq:EVE} on $\Sigma\cong\Bbb R^3$, satisfying the constraint equations, such that the maximal future globally hyperbolic development $(\mathcal M^4,g)$ contains a black hole $\mathcal{BH}\doteq \mathcal M\setminus J^-(\mathcal I^+)$ and has the following properties: 
 \begin{itemize}
     \item The Cauchy surface $\Sigma$ lies in the causal past of future null infinity, $\Sigma\subset J^-(\mathcal I^+)$. In particular, $\Sigma$ does not intersect the event horizon $\mathcal H^+\doteq\partial(\mathcal{BH})$ or contain trapped surfaces. 
      \item $(\mathcal M,g)$ contains trapped surfaces. 
     \item For sufficiently late advanced times $v\ge v_0$, the domain of outer communication, including the event horizon $\mathcal H^+$, is isometric to that of a Kerr solution with parameters $M$ and $a$. For $v\ge v_0$, the event horizon of the spacetime can be identified with the event horizon of Kerr. 
 \end{itemize}
\end{cor}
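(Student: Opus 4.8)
The plan is to deduce \cref{cor:main} from \cref{thm:Kerr-gluing} by a backwards-evolution argument, using the glued characteristic data as seed data for a characteristic initial value problem that is then extended to a one-ended Cauchy development.

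\medskip

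\emph{Step 1: Set up the glued characteristic data.} Fix $M>0$ and $a$ with $0\le|a|/M\le\mathfrak a_0$. Apply \cref{thm:Kerr-gluing} to a spacelike section $S_2$ of the event horizon $\mathcal H^+$ of the Kerr black hole with parameters $(M,a)$. This yields an outgoing null hypersurface $C$ carrying $C^2$ vacuum characteristic data interpolating between $S_1$, a spacelike sphere in Minkowski space close to a round symmetry sphere, and $S_2$. By solving the characteristic initial value problem ``to the future'' off $C$ together with the ingoing cone emanating from $S_2$ (which carries the Kerr data on $\mathcal H^+$), one builds a vacuum spacetime region $\mathfrak R_2$ which for late advanced times $v\ge v_0$ is isometric to the domain of outer communication of Kerr, including $\mathcal H^+$; this is the content of the third bullet. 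The $C^2$ matching across $C$, combined with local existence and uniqueness for the characteristic Cauchy problem in low regularity, is what forces the stated $H^{7/2-}_{\mathrm{loc}}\times H^{5/2-}_{\mathrm{loc}}$ Sobolev regularity of the eventual Cauchy data.

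\medskip

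\emph{Step 2: Fill in the Minkowskian side and build a one-ended Cauchy hypersurface.} On the other side of $C$, the sphere $S_1$ bounds a ``flat'' region: since $S_1$ is a spacelike sphere in Minkowski, one can glue in a compact piece of Minkowski space (a truncated cone or a spacelike ball) so that the null cone $C$ connects it to $\mathfrak R_2$. Then construct a spacelike Cauchy hypersurface $\Sigma\cong\Bbb R^3$ threading through this configuration: it passes through the flat region, crosses $C$ transversally below $S_1$, and extends into $\mathfrak R_2$ on a slice lying in the causal past of $\mathcal I^+$ and strictly below the event horizon and below any trapped surfaces. Because the flat region is one-ended and the gluing is compactly localized, $\Sigma$ is diffeomorphic to $\Bbb R^3$ and asymptotically flat, with one end. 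Restricting the spacetime metric and second fundamental form to $\Sigma$ gives Cauchy data $(g_0,k_0)$ solving the constraints, with the claimed regularity.

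\medskip

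\emph{Step 3: Verify the global structure of the maximal development.} The maximal future globally hyperbolic development $(\mathcal M,g)$ of $(\Sigma,g_0,k_0)$ contains the region $\mathfrak R_2$ by domain-of-dependence/uniqueness considerations, hence it contains the Kerr exterior for $v\ge v_0$ and in particular has a nonempty black hole $\mathcal{BH}=\mathcal M\setminus J^-(\mathcal I^+)$ with event horizon $\mathcal H^+$ identified with that of Kerr for late $v$. Since $\Sigma$ was placed in the causal past of $\mathcal I^+$ and below $\mathcal H^+$, we get $\Sigma\subset J^-(\mathcal I^+)$, so $\Sigma\cap\mathcal H^+=\emptyset$; and a spacelike hypersurface in the past of $\mathcal I^+$ in a spacetime with this asymptotic structure contains no trapped surfaces (a trapped surface on $\Sigma$ would lie in $\mathcal{BH}$, contradicting $\Sigma\subset J^-(\mathcal I^+)$). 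On the other hand, the Kerr region for $v\ge v_0$ visibly contains trapped surfaces (e.g.\ symmetry spheres inside $\mathcal H^+$), so $(\mathcal M,g)$ contains trapped surfaces. This establishes all three bullets.

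\medskip

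The main obstacle is \emph{Step 2}: one must arrange the Minkowskian filling and the Cauchy hypersurface $\Sigma$ so that it is genuinely one-ended, globally $\cong\Bbb R^3$, asymptotically flat, and simultaneously avoids the black hole region and all trapped surfaces, while being compatible with the limited $C^2$ transversal regularity of the glued data along $C$ — in particular checking that the constraint equations are satisfied with the asserted Sobolev regularity and that no trapped surface has sneaked onto $\Sigma$. The backwards characteristic and Cauchy evolution, and the identification of the late-$v$ region with Kerr, are comparatively routine given \cref{thm:Kerr-gluing} and standard well-posedness theory.
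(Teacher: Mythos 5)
Your Step~1 and Step~3 match the paper's outline: apply \cref{thm:Kerr-gluing} (Theorem~B), solve the characteristic Cauchy problem forward from the glued cone and the ingoing Kerr cone, paste with Minkowski on the other side, and read off the black-hole and trapped-surface structure. However, you have correctly identified Step~2 as ``the main obstacle'' and then not resolved it, and it is precisely there that the paper's proof does nontrivial work. The glued characteristic data only produce, via local existence, a thin double-null slab around the cone $C=\{u=0\}$; when you take the induced spacelike data $(g_*,k_*)$ on a hypersurface $\Sigma_*$ threading through the bottom gluing sphere, the region between $\Sigma_*$ and that slab (near $\dot{\mathcal H}^+_-$, where the horizon generators ``begin'') is \emph{not} covered by any characteristic development you have in hand. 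The paper fills this in by a cutoff-and-Cauchy-stability argument (following \cite[Lemma~5.1]{KU22}): cut the data off near the junction, show via a Hardy-inequality/interpolation estimate (\cref{lem:Sob}) that the cut-off data converges to Minkowski data in $H^s\times H^{s-1}$ for $s<3$, and invoke low-regularity well-posedness \cite{HKM} around Minkowski to conclude the development extends across the gap; propagation of regularity then upgrades the resulting solution to one consistent with $H^{7/2-}_{\rm loc}\times H^{5/2-}_{\rm loc}$ data. None of this appears in your proposal, and ``glue in a compact piece of Minkowski space so that $C$ connects it to $\mathfrak R_2$'' does not by itself produce a one-ended Cauchy development containing the full Kerr exterior for $v\ge v_0$. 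Note also that the glued bottom sphere is only a Minkowski sphere after a sphere diffeomorphism and a transversal perturbation (cf.\ $x_1'$ in \cref{thm:CR}), which must be tracked when attaching the flat region.

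Two smaller points. First, the ingoing cone from $S_2$ does not ``carry the Kerr data on $\mathcal H^+$'': the event horizon is the \emph{outgoing} hypersurface $u=0$ for $v\ge v_0$; what you feed into the local existence theorem is ingoing Kerr data on $\underline C_{v_0}$. Second, your trapped-surface argument (``symmetry spheres inside $\mathcal H^+$'') presumes the constructed spacetime contains genuine Kerr spheres inside the horizon, which it does not: the interior of the black hole is not Kerr, and the metric is only $C^2$ across $\mathcal H^+$. The paper instead verifies $\underline D(\Omega\tr\chi)<0$ at the reference Kerr horizon sphere (using \eqref{eq:Kerr-approx} and \eqref{eq:wave-eqn-r}), which combined with $C^2$ regularity yields trapped spheres just inside $\mathcal H^+$; your observation can be repaired along these lines, as the paper's footnote notes, but as written it is a gap.
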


 \begin{figure}[ht]
\centering{
\def\svgwidth{20pc}
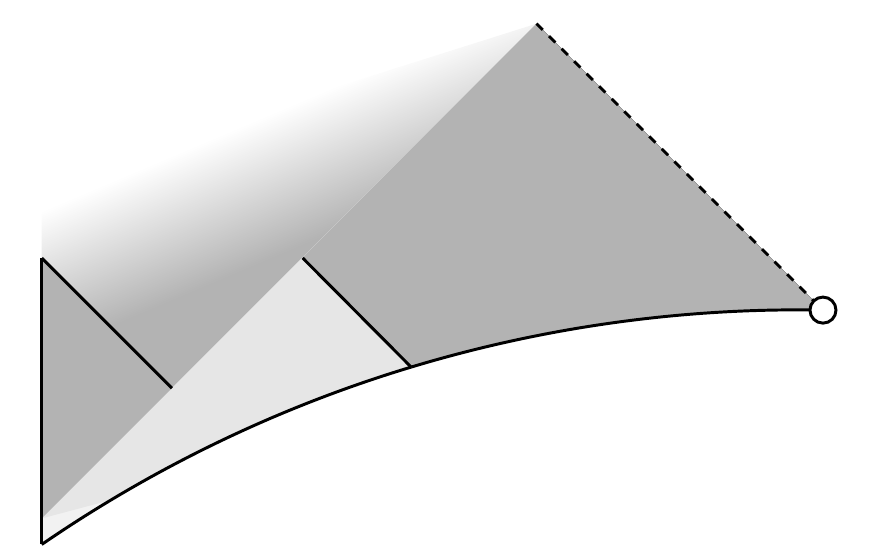}
\caption{Penrose diagram for \cref{cor:main}. The textured line segment is where the gluing data constructed in \cref{thm:Kerr-gluing} live.}
\label{main-corollary-diagram}
\end{figure}

For the relevant Penrose diagram, consult \cref{main-corollary-diagram} below. 

\begin{rk}
    It is a classical result that the Einstein equations are well posed in $H^{7/2-}_\mathrm{loc}\times H^{5/2-}_\mathrm{loc}$, see \cite{HKM} and also \cite{Planchon-Rodnianski, C-low-reg}. 
\end{rk}

By performing characteristic gluing as in \cref{thm:Schwarzschild-gluing} of Minkowski space to spheres in a Schwarzschild solution lying just inside the horizon and using Cauchy stability, we also obtain:

\begin{cor}[Gravitational collapse with a spacelike singularity]\label{cor:spacelike}
      There exist one-ended asymptotically flat Cauchy data $(g_0,k_0)\in H^{7/2-}_\mathrm{loc}\times H^{5/2-}_\mathrm{loc}$ for the Einstein vacuum equations \eqref{eq:EVE} on $\Sigma\cong\Bbb R^3$, satisfying the constraint equations, such that the maximal future globally hyperbolic development $(\mathcal M^4,g)$ contains a black hole $\mathcal{BH}\doteq \mathcal M\setminus J^-(\mathcal I^+)$ and has the following properties: 
 \begin{itemize}
     \item The Cauchy surface $\Sigma$ lies in the causal past of future null infinity, $\Sigma\subset J^-(\mathcal I^+)$. In particular, $\Sigma$ does not intersect the event horizon $\mathcal H^+\doteq\partial(\mathcal{BH})$ or contain trapped surfaces. 
     \item For sufficiently late advanced times $v\ge v_0$, the domain of outer communication, together with a full double null slab lying in the interior of the black hole, is isometric to a portion of a Schwarzschild solution as depicted in \cref{fig:spacelike-singularity}. The double null slab terminates in the future at a spacelike singularity, isometric to the ``$r=0$'' singularity in Schwarzschild. 
 \end{itemize}
\end{cor}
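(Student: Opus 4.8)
The plan is to feed \cref{thm:Schwarzschild-gluing} a top sphere $S_2$ lying \emph{strictly inside} the Schwarzschild event horizon, develop the resulting glued characteristic data into a spacetime containing a Schwarzschild black hole all the way down to its spacelike singularity, and then locate inside this spacetime a complete, one-ended, asymptotically flat Cauchy hypersurface $\Sigma\cong\R^3$ sitting outside the black hole. Concretely, I would fix $M>0$ and take $S_2$ to be a symmetry sphere of area radius $r(S_2)\in(0,2M)$ --- close to $2M$ for definiteness, as in \cref{fig:spacelike-singularity} --- in the Schwarzschild solution of mass $M$. Such an $S_2$ is trapped, in particular non-antitrapped, so \cref{thm:Schwarzschild-gluing} (in the precise form of \hyperlink{thmA}{Theorem A}) applies and produces a $C^2$ vacuum spacetime realizing the configuration of \cref{fig:char-gluing-setup}: the outgoing cone $C$ runs from a nearly round spacelike sphere $S_1\subset\R^{3,1}$ to $S_2$, a neighborhood $\mathfrak R_1$ of $S_1$ is isometric to a piece of Minkowski space (with a regular center available below $S_1$), and a neighborhood $\mathfrak R_2$ of $S_2$ is isometric to a piece of Schwarzschild. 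I would moreover use that the obstruction-free gluing can be arranged so that the characteristic data induced on $C$ agrees \emph{exactly} with Schwarzschild data on the portion $\{v\ge v_0\}$ of $C$, for some advanced time $v_0$.

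Next I would produce the black hole and its singularity. Since $S_2$ lies strictly inside $\{r=2M\}$, a full double null neighborhood of $S_2$ in Schwarzschild extends both outward, through the event horizon and into the domain of outer communication, and inward toward $\{r=0\}$, which an outgoing cone through $S_2$ reaches in finite affine parameter. Because the data prescribed on the relevant portions of $C$ and of the incoming cone through $S_2$ are \emph{exactly} Schwarzschild, solving the characteristic initial value problem and invoking uniqueness (domain of dependence) would yield a region of the development isometric to the corresponding portion of Schwarzschild: the domain of outer communication, the event horizon $\{r=2M\}$, and a double null slab of the black hole interior terminating in the future at a piece of the spacelike singularity $\{r=0\}$, exactly as in \cref{fig:spacelike-singularity}. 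In particular $\mathcal I^+$ for $v\ge v_0$ is isometric to that of Schwarzschild and future-complete, and $\mathcal{BH}=\mathcal M\setminus J^-(\mathcal I^+)$ is nonempty.

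Then comes the passage to Cauchy data. I would choose $\Sigma$ to agree near its center with a spacelike slice of $\mathfrak R_1$ having a regular center (so that $\Sigma\cong\R^3$), to cross the cone $C$ at a sphere on which $C$ is still untrapped, and to agree, from there out to $i^0$, with an asymptotically flat slice of the Schwarzschild domain of outer communication; this makes $\Sigma$ complete, one-ended, and asymptotically flat of ADM mass $M$, with induced data $(g_0,k_0)$ of the asserted Sobolev regularity (reflecting the $C^2$-across-$C$ regularity of \cref{thm:Schwarzschild-gluing}). Using the quantitative control on $C$ and on the transition region from the proof of \cref{thm:Schwarzschild-gluing}, I would arrange that $\Sigma$ meets $C$ strictly on the $S_1$-side of the sphere where the event horizon crosses $C$; then $\Sigma\subset J^-(\mathcal I^+)$, so $\Sigma$ contains no trapped surfaces and is disjoint from $\mathcal H^+$. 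Finally, Cauchy stability --- well-posedness of the Cauchy problem about the constructed spacetime, together with uniqueness of the maximal development --- would show that the maximal future globally hyperbolic development of $(\Sigma,g_0,k_0)$ contains the whole future of $\Sigma$ in the constructed spacetime, hence in particular the Schwarzschild domain of outer communication for $v\ge v_0$ together with the interior double null slab and its $\{r=0\}$ singularity, as claimed.

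The main obstacle, granting \cref{thm:Schwarzschild-gluing} (which carries all the analytic weight), is the causal-theoretic bookkeeping of this last step: controlling the genuinely non-perturbative, non-explicit transition region precisely enough to thread a single hypersurface $\Sigma$ through the Minkowski region, the transition region, and the Schwarzschild exterior so that it is simultaneously diffeomorphic to $\R^3$, complete, one-ended, asymptotically flat, free of trapped surfaces, disjoint from $\mathcal H^+$, and Cauchy for a development containing the black hole and its spacelike singularity. The quantitative estimates from the gluing construction, fed into the standard Cauchy-stability argument, are what make this step go through.
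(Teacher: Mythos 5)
Your proposal captures the essential idea and, at a high level, follows the same route as the paper: apply \hyperlink{thmA}{Theorem~A} to glue Minkowski space to a trapped Schwarzschild symmetry sphere of radius $R = 2M-\varepsilon$ strictly inside the horizon; observe that the continuation of $C$ beyond $S_2$ into $\mathfrak R_2$ together with the ingoing cone $\underline C$ through $S_2$ carry genuine Schwarzschild data, whose development by uniqueness is a Schwarzschild region including a double null slab reaching $\{r=0\}$; and extract a Cauchy hypersurface via Cauchy stability, mimicking the proof of \cref{cor:main}. This is the correct plan.

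Where you diverge from the paper is the mechanism for justifying the last step. You write that you would use ``quantitative control on $C$ and on the transition region from the proof of \cref{thm:Schwarzschild-gluing}'' to thread $\Sigma$ and ensure the backward development is large enough. But the Czimek--Rodnianski step of the gluing is an implicit-function-theorem scheme and does not hand you explicit control over $x_1'$ or the resulting transition region, so at a fixed $\varepsilon$ you have no a priori guarantee that the backward development reaches far enough to capture the full event horizon. The paper's sketch substitutes for this missing control with a compactness argument: it shows that as $\varepsilon\to 0$ (possibly along a subsequence) the $r=2M-\varepsilon$ gluing data converges in an appropriate norm to the $\varepsilon=0$ horizon gluing data used in the proof of \cref{cor:main}, and then invokes Cauchy stability \emph{around that already-constructed spacetime} to conclude that, for $\varepsilon$ sufficiently small, the backward development still contains the full event horizon. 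Your proposal omits this $\varepsilon\to 0$ step, and the handwave about quantitative control is exactly the gap it is designed to close. A secondary, smaller inaccuracy: the gluing data of \hyperlink{thmA}{Theorem~A} coincides with (boosted) Schwarzschild sphere data only at the single top sphere $v=1$, not on an interval $\{v\ge v_0\}\subset[0,1]$ of $C$; what is Schwarzschild, and what you actually need, is the bifurcate data $C|_{v>1}\cup\underline C\subset\mathfrak R_2$, which you invoke correctly in the next sentence.
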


We will sketch the proof of this result in \cref{sec:proofs-of-corollaries} below. 

 \begin{figure}[ht]
\centering{
\def\svgwidth{18pc}
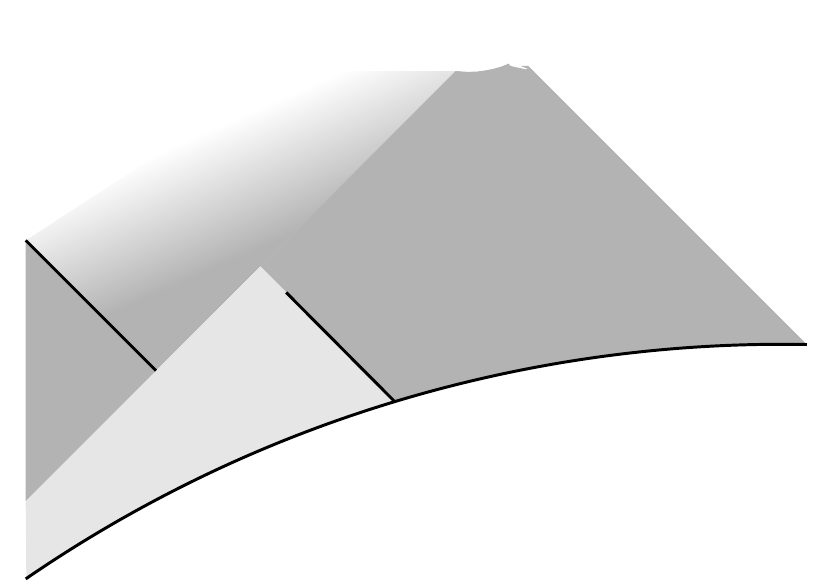}
\caption{Example of gravitational collapse with a piece of a spacelike singularity emanating from timelike infinity $i^+$. Characteristic gluing is performed along the textured line segment, where the top gluing sphere has radius very close to the Schwarzschild radius of the black hole to be formed.}
\label{fig:spacelike-singularity}
\end{figure}

\subsection{The characteristic gluing problem}\label{sec:previous-work}

The study of the characteristic gluing problem was initiated by Aretakis for the linear scalar wave equation
\begin{equation}
    \Box_g\phi=0 \label{eq:linear-wave}
\end{equation}
on general spacetimes $(\mathcal M^{3+1},g)$ in \cite{AretakisLinear}. Aretakis showed that there is always a finite-dimensional (but possibly trivial) space of obstructions to the characteristic gluing problem. More precisely, he showed that there are at most finitely many (possibly none) \emph{conserved charges} that are computed from the given solutions at $S_1$ and $S_2$ in \cref{fig:char-gluing-setup} that determine whether characteristic gluing can be performed. These charges are conserved along $C$ for any solution of \eqref{eq:linear-wave}. This gives a definitive answer\footnote{\cite{AretakisLinear} only deals with $C^1$ characteristic gluing, but is definitive in this regularity class.} to \cref{question-1} for \eqref{eq:linear-wave}: There is a precise characterization of which spheres can be glued---the matching of all conserved charges is both necessary and sufficient. 

\subsubsection{Characteristic gluing for the Einstein vacuum equations}\label{sec:ACR}

Characteristic gluing for the Einstein vacuum equations \eqref{eq:EVE} was initiated by Aretakis, Czimek, and Rodnianski in a fundamental series of papers \cite{ACR1, ACR2, ACR3}. They study the perturbative regime around Minkowski space, that is, when both spheres $S_1$ and $S_2$ in \cref{fig:char-gluing-setup} are close to symmetry spheres in Minkowski space. The strategy employed is to linearize the Einstein equations around Minkowski space in the framework of Dafermos--Holzegel--Rodnianski \cite{DHR19}, solve the characteristic gluing problem for the linearized Einstein equations, and then conclude a small-data nonlinear gluing result by an implicit function theorem argument. 

In the course of their argument, they discover that the linearized Einstein equations around Minkowski space in double null gauge possess \emph{infinitely many conserved charges}. However, it turns out that all but ten of these charges are due to \emph{gauge invariance} of the Einstein equations (cf.~the \emph{pure gauge solutions} of \cite{DHR19}). The remaining charges, which we define precisely in \cref{def:charges} below, are genuine obstructions to the linear characteristic gluing problem, and must therefore be assumed to be equal at $S_1$ and $S_2$ in order for the inverse function scheme to yield a genuine solution. 

The conserved charges of Aretakis--Czimek--Rodnianski can be identified with the ADM energy, linear momentum, angular momentum, and center of mass. This identification is used in \cite{ACR3} to give a new proof of the spacelike gluing results of \cite{Corvino, CorvinoSchoen, Carlotto} using characteristic gluing. 

Later, Czimek and Rodnianski \cite{Czimek2022-cl} made the fundamental observation that the linear conservation laws can be violated at the nonlinear level by certain explicit ``high frequency'' seed data for the characteristic initial value problem.\footnote{The high frequency perturbations are becoming singular in the Minkowski limit and hence do not linearize in a regular fashion.} They then use these high frequency perturbations to \emph{adjust} the linearly conserved charges in the full nonlinear theory, so that the main theorem of \cite{ACR1} applies. The result, which we state as \cref{thm:CR} in \cref{sec:CR} below, is that two spheres close to Minkowski space can be glued if the differences of the conserved charges satisfy a certain \emph{coercivity} condition. Roughly, the assumption is that the change in the Hawking mass be larger than the changes in the other conserved charges and that the change in angular momentum is itself much smaller than the distance of $S_1$ and $S_2$ to spheres in Minkowski space. Their result has the remarkable corollary of obstruction-free spacelike gluing of asymptotically flat Cauchy data to Kerr in the far region.

We note at this point that the analysis of \cite{ACR1,ACR2,ACR3,Czimek2022-cl} is limited to $C^2$ regularity in the ingoing direction $u$, but allows for arbitrarily high regularity in $v$ and angular directions.\footnote{In this paragraph we are referring to the results in \cite{ACR1,ACR2,ACR3,Czimek2022-cl} that have to do with characteristic gluing as is depicted in \cref{fig:char-gluing-setup}. Aretakis--Czimek--Rodnianski also consider another type of characteristic gluing, \emph{bifurcate characteristic gluing}, which works to arbitrarily high order of differentiability.} It is not clear whether their analysis (especially \cite{Czimek2022-cl}) can be generalized to higher order transverse derivatives. 

The linearized characteristic gluing problem for \eqref{eq:EVE} was redone in Bondi gauge and extended to incorporate a cosmological constant and different topologies of the cross sections of the null hypersurface $C$ by Chru\'sciel and Cong \cite{Chrusciel-Cong}. This work also addresses linearized characteristic gluing of higher order transverse derivatives. 

\begin{question}
    Is there a general geometric characterization of conservation laws associated to the linearized Einstein equations around a fixed background? Is there always a finite number of conservation laws? Is the generic spacetime free of conservation laws at the linear level? 
\end{question}

One might also wonder if there is a precise connection between the conservation laws observed in the null setting with the cokernel of the linearized constraint map studied in the spacelike gluing problem \cite{Corvino, CorvinoSchoen, ChruscielDelay}. We refer the reader to \cite{Gluing-variations} for more discussion about these issues. 

\subsubsection{Characteristic gluing in spherical symmetry}

The present authors have studied the characteristic gluing problem for the Einstein--Maxwell-charged scalar field system in spherical symmetry \cite{KU22}. Our main theorem can be stated as follows:

\begin{thm}[Theorem 2 in \cite{KU22}]
\label{thm-informal-statement}
Let $k\in \Bbb N$ be a regularity index, $\mathfrak q\in [-1,1]$  a charge to mass ratio,  and  $\mathfrak e\in \Bbb R\setminus\{0\}$ a fixed coupling constant. For any $M_f$ sufficiently large depending on $k$, $\mathfrak q$, and $\mathfrak e$, and any $0\le M_i\le \frac 18 M_f$, and $2M_i<R_i\le \tfrac 12 M_f$, 
there exist spherically symmetric characteristic data for the Einstein--Maxwell-charged scalar field system with coupling constant $\mathfrak e$ gluing the Schwarzschild symmetry sphere of mass $M_i$ and radius $R_i$ to the Reissner--Nordstr\"om event horizon with mass $M_f$ and charge $e_f=\mathfrak{q}M_f$ up to order $k$. 
\end{thm}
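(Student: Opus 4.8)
The plan is to work on the two-dimensional quotient $\mathcal Q = \mathcal M/SO(3)$ in a double null gauge $g = -\Omega^2\,du\,dv + r^2\gamma_{S^2}$, in which the Einstein--Maxwell-charged scalar field system reduces to a coupled system for $r$, $\Omega^2$, the complex scalar field $\phi$, the charge function $Q$ (the Maxwell field being non-dynamical in spherical symmetry, reducing to the single function $Q$), and the renormalized Hawking mass $\varpi$, with $\mu \doteq \tfrac{2\varpi}{r} - \tfrac{Q^2}{r^2}$ so that $1 - \mu = -4\Omega^{-2}\,\partial_u r\,\partial_v r$. On an outgoing cone $C = \{u = 0\}$ the only free datum is the complex function $v\mapsto\phi(0,v)$; once it is fixed and the jet at $S_1$ is prescribed --- which is \emph{forced} to be the Schwarzschild jet, since a neighborhood of $S_1$ in $\mathfrak R_1$ is exactly Schwarzschild of mass $M_i$ --- the entire data and all of its transverse derivatives $\partial_u^j(\cdot)|_C$ are recovered by integrating the Raychaudhuri constraint and the transport equations in $v$. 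The whole problem is therefore to design $\phi|_C$ on an interval $[v_1, v_2]$, vanishing identically near both endpoints (so that the induced data is exactly Schwarzschild of mass $M_i$ near $v_1$ and exactly Reissner--Nordstr\"om near $v_2$), such that $\lambda \doteq \partial_v r > 0$ on $[v_1, v_2)$ with $\lambda(v_2) = 0$, and such that the order-$k$ jet at $S_2$ equals that of the Reissner--Nordstr\"om event horizon of mass $M_f$ and charge $e_f = \mathfrak q M_f$.

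The core is the ``mass and charge raising'' maneuver. In the untrapped region $\{\lambda > 0\}$ one has Raychaudhuri, $\partial_v(\Omega^{-2}\lambda) = -r\,\Omega^{-2}\,|D_v\phi|^2 \le 0$, so the pulse is precisely what drives $\lambda$ down to zero and creates the marginally trapped sphere $S_2$; the mass law $\partial_v\varpi = \tfrac12(1-\mu)\,r^2\lambda^{-1}|D_v\phi|^2 \ge 0$, so the pulse pumps Hawking mass into $C$; and the charge law $\partial_v Q = \mathfrak e\,r^2\,\mathrm{Im}(\bar\phi\,D_v\phi)$, whose sign is set by the phase of $\phi$ (this is where $\mathfrak e \ne 0$ is used). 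Following Christodoulou's short-pulse philosophy, in the bulk of $[v_1,v_2]$ I would take $\phi$ of the schematic form $\delta^{1/2} f(v)\,e^{i\vartheta(v)}$, tuning the amplitude profile $f$ so that $\varpi$ climbs from $M_i$ to $M_f$ and the phase $\vartheta$ so that $Q$ climbs from $0$ to $e_f$, \emph{in a coordinated way} so that both targets are hit exactly as $r$ reaches the outer horizon radius $r_+ = M_f\bigl(1 + \sqrt{1 - \mathfrak q^2}\bigr)$ --- not before, since a marginally trapped sphere at the wrong radius (in particular at the inner root $r_- = M_f(1 - \sqrt{1-\mathfrak q^2})$, a real danger when $|\mathfrak q|$ is close to $1$) would abort the construction. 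The hypotheses ($M_f$ large, $0 \le M_i \le \tfrac18 M_f$, $2M_i < R_i \le \tfrac12 M_f$) enter precisely here, furnishing the quantitative room to run the short-pulse hierarchy of sizes, to absorb all error terms, and to keep $\lambda > 0$ until the very end.

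After the bulk maneuver the zeroth-order data is $C^k$-close to the Reissner--Nordstr\"om event horizon, but its transverse jet $\partial_u^j(\cdot)|_C$, $1\le j\le k$, which satisfies linear transport equations in $v$ with the zeroth-order data as coefficients, does not yet agree with the Reissner--Nordstr\"om jet. I would reserve a final short portion of $C$ on which $\phi|_C$ is a small correction, linearize the system about the near-Reissner--Nordstr\"om reference, and match the finitely many transverse-jet components at $S_2$ to the exact event-horizon values by an implicit-function argument. The obstruction to this perturbative gluing is finite-dimensional and governed by the monotonicity of the Hawking mass (and analogous coercivity-type inequalities), which is why it can be met only when $M_f$ is large relative to $M_i$, $R_i$, and $k$ --- this is the source of the $k$-dependence in the theorem. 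The residual $u$- and $v$-reparametrization freedom is then spent normalizing $\Omega^2$ and $\partial_u r$ at $S_1$ and $S_2$ to their reference values.

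I expect the main obstacle to be the mass-and-charge-raising maneuver and, within it, the coordinated control of $(\varpi, Q, \lambda)$: one must extract enough $|D_v\phi|^2$ to lift $\varpi$ by the full amount $M_f - M_i$ and enough current $\mathrm{Im}(\bar\phi\,D_v\phi)$ to lift $Q$ to $\mathfrak q M_f$, while never letting $\lambda$ vanish prematurely --- an early zero of $\lambda$ would trap a sphere before the mass and charge reach their targets --- and while steering the eventual marginally trapped sphere onto the \emph{outer} horizon radius $r_+$ rather than $r_-$. Arranging all of these simultaneously is what forces both the largeness of $M_f$ and the precise form of the short-pulse ansatz, and is where essentially all of the analysis lies; the transverse-jet correction is comparatively soft because it is linear, and the Schwarzschild and Reissner--Nordstr\"om buffers near the endpoints are routine.
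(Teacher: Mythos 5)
The paper cites this result as Theorem 2 of \cite{KU22} without reproving it, but the paragraph immediately after the statement lists the three key ingredients of the \cite{KU22} argument, so that is the comparison benchmark. Your proposal agrees with the cited proof on the overall setup (double null gauge in the spherically symmetric quotient, $\phi|_C$ as the only free datum, compactly supported short-pulse ansatz with $\delta^{1/2}$ amplitude giving exact Schwarzschild and Reissner--Nordstr\"om buffers near the endpoints) and on ingredient (ii): the charge is indeed glued by exploiting the sign of $\partial_v Q = \mathfrak e\, r^2\,\mathrm{Im}(\bar\phi D_v\phi)$, steered by the phase of $\phi$, which is the ``monotonicity property of Maxwell's equation specific to spherical symmetry'' that the paper names.

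The two remaining ingredients are where you diverge. For the Hawking mass, ingredient (i) is a judicious choice of which endpoint to launch the transport equations from, \emph{combined with} a hand boost of the double null gauge --- the operation $\mathfrak b_\lambda$ that the present paper reuses as \cref{def:boost} and that converts a $\delta$-hierarchy into genuine smallness. You instead rely only on Raychaudhuri monotonicity and relegate the gauge freedom to a final normalization step; this underplays that the boost is what makes the sphere data at $S_2$ land exactly on the reference jet rather than just on the right Hawking mass.

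The more serious departure is ingredient (iii). The paper stresses that the \cite{KU22} proof is ``fundamentally nonperturbative,'' and the transverse-jet matching to order $k$ is done there by a parity symmetry of the Einstein--Maxwell-charged scalar field system together with the Borsuk--Ulam theorem --- a topological, degree-theoretic argument. You instead propose reserving a short final segment, linearizing about the near-Reissner--Nordstr\"om reference, and closing the $k$-th order jet gap by an implicit-function-theorem shooting argument, which you describe as ``comparatively soft because it is linear.'' That is precisely the part \cite{KU22} could not treat as soft: the shooting map for the transverse jet is a fully coupled finite-dimensional map whose linearized surjectivity is not established, and whose spectral behavior would in general degenerate with $k$; the Borsuk--Ulam argument was introduced exactly to avoid having to prove any such invertibility, giving the result uniformly in $k$ once $M_f$ is large. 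As written, your proposal replaces the paper's central topological device with an unjustified invertibility claim, and that is the genuine gap.
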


Our proof is fundamentally \emph{nonperturbative} in that we work directly with the nonlinear equations and do not require a perturbative analysis. This is made possible by three ingredients:
\begin{enumerate}[(i)]
    \item The Hawking mass is glued by judiciously initiating the transport equations at $S_1$ \emph{or} $S_2$ and directly exploiting gauge freedom in the form of boosting the double null gauge by hand.  

    \item The charge of the Maxwell field is glued by exploiting a monotonicity property of Maxwell's equation specific to spherical symmetry.

    \item Transverse derivatives of the scalar field are glued by exploiting a parity symmetry of the Einstein--Maxwell-charged scalar field system specific to spherical symmetry and invoking the Borsuk--Ulam theorem. 
\end{enumerate}

The argument in the present paper has two crucial ingredients: the implementation of  idea (i) above in the context of the Einstein vacuum equations, and the obstruction-free characteristic gluing of Czimek--Rodnianski \cite{Czimek2022-cl}, which replaces the Borsuk--Ulam argument in vacuum. See already \cref{sec:proof-outline} for the outline of our proof.

\subsection{The third law of black hole thermodynamics in vacuum}\label{sec:third-law}

The \emph{third ``law'' of black hole thermodynamics} is the conjecture that a subextremal black hole cannot become extremal in finite time by any continuous process, \underline{no matter how idealized}, in  which the spacetime and matter fields remain regular and obey the weak energy condition \cite{BCH, Israel-third-law}. The main application of our previous characteristic gluing result in \cite{KU22} is the following definitive \emph{disproof} of the third law: 

\begin{thm}[Theorem 1 in \cite{KU22}]\label{thm:third-law}
The ``third law of black hole thermodynamics'' is false. More precisely, subextremal black holes can become extremal in finite time, evolving from regular initial data. In fact, there exist regular one-ended Cauchy data for the Einstein--Maxwell-charged scalar field system 
which undergo gravitational collapse and form an exactly Schwarzschild apparent horizon, only for the spacetime to form an exactly extremal Reissner--Nordstr\"om event horizon at a later advanced time. 
\end{thm}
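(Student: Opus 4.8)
The plan is to deduce \cref{thm:third-law} from the characteristic gluing result \cref{thm-informal-statement} by assembling a spherically symmetric spacetime from three pieces: a \emph{neutral} gravitational collapse region in the past, the gluing cone supplied by \cref{thm-informal-statement} in the middle, and an exactly extremal Reissner--Nordström domain of outer communication in the future. Concretely, fix a regularity index $k$ (as large as one likes), a nonzero coupling constant $\mathfrak e$, and take $\mathfrak q=1$ in \cref{thm-informal-statement}; then choose $M_f$ sufficiently large, a mass $0<M_i<\tfrac{1}{8}M_f$, and a radius $2M_i<R_i\le\tfrac{1}{2}M_f$. \cref{thm-informal-statement} provides spherically symmetric characteristic data, to order $C^k$, on an outgoing cone $C$ interpolating between the Schwarzschild symmetry sphere $S_1$ of mass $M_i$ and radius $R_i$ and the \emph{extremal} Reissner--Nordström event horizon sphere $S_2$ of mass $M_f$ and charge $e_f=M_f$ (compare \cref{fig:char-gluing-setup}, with $\mathfrak R_1$ Schwarzschild and $\mathfrak R_2$ extremal Reissner--Nordström).

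First I would build the past region. Around a regular center $\Gamma$, start with a ball of Minkowski space and send in a \emph{neutral} scalar field pulse (with vanishing Maxwell field), compactly supported in advanced time and strong enough to form a trapped sphere; by Birkhoff's theorem the development to the future of the pulse is exactly Schwarzschild, and by tuning the pulse we arrange that its mass is $M_i$, that it contains a marginally trapped sphere at $r=2M_i$, and that the symmetry sphere $S_1$ of radius $R_i$ sits in the Schwarzschild region. This yields a spacetime $(\mathcal M_{\mathrm{past}},g_{\mathrm{past}})$ with a one-ended, asymptotically flat Cauchy surface $\Sigma\cong\R^3$ whose apparent horizon, over a slab of advanced times, coincides exactly with the Schwarzschild horizon $r=2M_i$. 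Let $\underline C$ denote the ingoing cone through $S_1$ inside the Schwarzschild region.

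Next I would solve the characteristic initial value problem with data the gluing data on $C$ together with the Schwarzschild data induced by $\mathcal M_{\mathrm{past}}$ on $\underline C$ (these agree at $S_1$ by construction), obtaining a region that interpolates between $\mathcal M_{\mathrm{past}}$ and a neighborhood of $S_2$. Because the gluing data on $C$ coincide near $S_2$ with the data induced on its event horizon by extremal Reissner--Nordström of mass $M_f$, one can attach an explicit extremal Reissner--Nordström domain of outer communication to the future along the cone through $S_2$; the resulting maximal development $(\mathcal M,g,F,\phi)$ therefore has nonempty future null infinity $\mathcal I^+$, a Cauchy surface $\Sigma\subset J^-(\mathcal I^+)$ free of trapped surfaces, and, at late advanced times $v\ge v_0$, a domain of outer communication and event horizon $\mathcal H^+=\partial J^-(\mathcal I^+)$ isometric to those of extremal Reissner--Nordström, so $\mathcal H^+=\{r=M_f\}$ there. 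Finally, one reads off the verification: the total stress-energy tensor $T$ satisfies the weak energy condition, the spacetime and all fields are regular (order $C^k$, with $k$ arbitrary), the Cauchy data are one-ended and asymptotically flat, and the apparent horizon is \emph{exactly} the Schwarzschild horizon $r=2M_i$ (surface gravity $(4M_i)^{-1}>0$, hence subextremal) at an advanced time \emph{before} the later advanced time at which the event horizon is \emph{exactly} the extremal Reissner--Nordström horizon $r=M_f$ (surface gravity $0$, hence extremal). This exhibits a continuous, energy-condition-respecting collapse driving a subextremal black hole to extremality in finite advanced time, contradicting the third law as classically formulated.

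The main obstacle — apart from the characteristic gluing itself, which is precisely \cref{thm-informal-statement} and is not reproved here — is controlling the \emph{global} causal structure of $(\mathcal M,g,F,\phi)$: one must verify that the finite-regularity glued data evolve to a domain of outer communication that genuinely settles down to extremal Reissner--Nordström (so that ``the event horizon'' in the statement is well defined and equals $\{r=M_f\}$), that $\Sigma$ lies strictly to the past of $\mathcal I^+$ with no trapped surfaces, and that no premature singularities form in the exterior before the extremal horizon is reached. This requires combining local existence for the characteristic initial value problem with Cauchy stability and the explicit structure of the extremal Reissner--Nordström exterior, together with a careful accounting of the Penrose diagram of the assembled spacetime, all of which is carried out in \cite{KU22}.
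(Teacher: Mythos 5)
The statement you are proving is Theorem~1 of \cite{KU22}; the present paper only cites it. Your reconstruction correctly identifies the two-stage structure --- (i)~form an exactly Schwarzschild apparent horizon, (ii)~charge it up to extremality by applying \cref{thm-informal-statement} with $\mathfrak q = 1$ --- and your verification of the conclusions (apparent horizon subextremal, event horizon extremal at later $v$, weak energy condition, one-endedness) is organized in the right way, modulo the global causal structure bookkeeping that you correctly flag and defer.

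However, there is a genuine gap in the way you propose to build the past region. You assert that a neutral scalar pulse ``compactly supported in advanced time'' yields, by Birkhoff, a development that is \emph{exactly} Schwarzschild to the future of the pulse, so that the apparent horizon sits exactly at $r=2M_i$ over a slab of advanced times and the sphere $S_1$ carries exact Schwarzschild data. This is not correct. In spherical symmetry the quantity $\psi\doteq r\phi$ satisfies a $1+1$ wave equation with a non-vanishing potential (proportional to the Hawking mass over $r^3$), so even if $\psi$ is prescribed to vanish on the initial outgoing cone for $v>v_1$, the backscattered field re-enters the region $\{v>v_1\}$; solutions generically exhibit tails rather than finite propagation of the support of $\psi$ in $v$. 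Consequently the region to the future of the pulse is only approximately vacuum, the apparent horizon is not \emph{exactly} the $r=2M_i$ Schwarzschild horizon, and the exact Schwarzschild sphere data at $S_1$ --- which is what \cref{thm-informal-statement} needs as its bottom endpoint --- is not actually available by your construction. Your appeal to Birkhoff presupposes a vacuum region that the Christodoulou-type pulse does not actually produce.

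The paper's own discussion points to how \cite{KU22} avoids this: the first stage is itself a \emph{characteristic gluing} step (Minkowski sphere to Schwarzschild event horizon, \cite[Section~4.1]{KU22}; see the remark that \cref{thm:Schwarzschild-gluing} ``can be viewed as a generalization of this first step''), and the remainder of the spacetime, in particular the regular center in the past, is produced by solving the Einstein equations \emph{backwards} from the gluing cone and invoking Cauchy stability near the center --- exactly the scheme the present paper uses in the vacuum setting (\cref{cor:main} and \cref{Penrose-diagram-proof}). Because the characteristic data is prescribed rather than evolved, one simply chooses it to be exactly Schwarzschild from the relevant sphere onward, so backscattering never arises. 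If you replace your collapse step by such a gluing-plus-backward-solving argument, the rest of your outline goes through.
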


We refer the reader to \cite{KU22} for an extensive discussion of the history and physics of the third law. The black holes in \cref{thm:third-law} are constructed in two stages: First the scalar field is used to form an exact Schwarzschild apparent horizon in finite time, which is then charged up to extremality by exploiting the coupling of the scalar field with the electromagnetic field. \cref{thm:Schwarzschild-gluing} above can be viewed as a generalization of this first step. We conjecture that the second step can also be generalized to vacuum: 

\begin{conj}
The Schwarzschild symmetry sphere of mass $M_i$ and radius $R_i$ can be characteristically glued to any non-antitrapped Kerr coordinate sphere with radius $R_f\gg R_i$ in a Kerr solution with mass $M_f\gg M_i$ and specific angular momentum $0\le |a_f|\le M_f$.
\end{conj}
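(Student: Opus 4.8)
Since the statement is a conjecture, what follows is a research program rather than a proof, and I will indicate where the genuine difficulty lies. The conjecture is best viewed as the synthesis of two extensions of \cref{thm:Kerr-gluing}. The first --- replacing the Minkowskian bottom sphere $S_1$ by a Schwarzschild symmetry sphere of mass $M_i>0$ --- is essentially bookkeeping: one lets the gluing hypersurface $C$ begin with a spherically symmetric Schwarzschild segment of mass $M_i$ and large affine length, flowing from $S_1$ out to a convenient reference sphere, after which the construction proceeds as in the present paper. A positive $M_i$ is, if anything, favorable, since it only increases the amount of Hawking mass available at the start. The substantive extension is the second: allowing $|a_f|$ to range over the \emph{entire} interval $0\le|a_f|\le M_f$, rather than the very small range $0\le|a_f|/M_f\le\mathfrak a_0$ permitted by the Czimek--Rodnianski coercivity condition. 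Here the geometry of $S_2$ --- in particular a near-extremal Kerr horizon --- is no longer close to any rescaled Schwarzschild or Minkowski sphere, so the perturbative reduction underlying \cref{thm:Schwarzschild-gluing} and \cref{thm:Kerr-gluing} is unavailable, and a genuinely nonperturbative mechanism for injecting a large quantity of angular momentum along $C$ is required. This mechanism would be the vacuum analogue of ingredient (ii) of \cite{KU22}, where the electromagnetic charge is dialed all the way to extremality by exploiting a monotonicity property of Maxwell's equations; indeed the conjecture is the direct vacuum counterpart of the second stage of the third-law construction of \cite{KU22} (see \cref{thm:third-law}), in which an exactly Schwarzschild apparent horizon is charged up to an exactly extremal Reissner--Nordstr\"om event horizon.

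Concretely, I would split $C$ (reading upward from $S_1$) into three regions: (a) the spherically symmetric Schwarzschild segment above; (b) a long ``spinning short pulse'' region, modeled on Christodoulou's short pulse ansatz \cite{Christo09} but with the leading-order seed shear chosen to be dominated by the $\ell=1$ angular modes dual to the rotation Killing fields of the target Kerr, so that the ingoing gravitational radiation carries \emph{net} angular momentum while simultaneously driving the Hawking mass up from $\approx M_i$ toward $\approx M_f$; and (c) a final perturbative region, handled after rescaling by the obstruction-free gluing theorem of Czimek--Rodnianski (\cref{thm:CR}), whose only job is to correct the remaining ADM-type charges of \cref{def:charges} --- linear momentum and center of mass --- to their Kerr values and to attach $C$ precisely to the Kerr coordinate sphere $S_2$. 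The hypotheses $M_f\gg M_i$ and $R_f\gg R_i$ are exactly what make region (b) possible: by the analogue of Christodoulou's mass formula the Hawking mass increment along a null segment of affine length $\sim R_f$ is quadratic in the seed shear and of size $\sim R_f$, while the injected angular momentum is likewise quadratic in the shear and bounded in terms of the mass increment; reaching $|a_f|$ as large as $M_f$ forces essentially all of the available mass increment into rotational energy rather than irreducible mass, which is precisely why a bespoke, rotation-adapted pulse profile is unavoidable. One could also imagine reaching $|a_f|/M_f$ bounded away from $1$ by iterating a hypothetical ``Kerr-to-Kerr with small angular-momentum increment'' gluing result in place of the single region (b); but any such scheme still bottoms out at the near-extremal regime, which is where the real work is.

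The main obstacle --- and the reason this is only a conjecture --- is exactly that near-extremal regime $|a_f|\to M_f$ in region (b). Two difficulties compound. First, the Kerr event horizon degenerates as extremality is approached (its surface gravity vanishes and the near-horizon geometry changes character), so one must check that the semiglobal existence and estimates underlying the short-pulse method survive the degeneration of the target, and that the hand-off in region (c) remains compatible with the limited ($C^2$ transverse) regularity framework of \cite{Czimek2022-cl} at a degenerate horizon. Second, and more seriously, the coercivity hypothesis of Czimek--Rodnianski requires the angular-momentum mismatch at the top of region (b) to be small relative to the distance of that sphere to a Minkowski sphere; hence region (b) must deliver essentially the \emph{exact} target angular momentum, not merely an order-one fraction of it, together with control of the lower-order charges. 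Solving the resulting nonlinear map from the pulse profile to the realized $(M_f,a_f)$ and ADM charges appears to require either an iteratively corrected spinning-pulse ansatz closed by a fixed-point argument for the angular-momentum map, or --- more in the spirit of \cite{KU22} --- a new monotonicity or conservation-type identity for the angular-momentum flux along $C$, playing the structural role that the electromagnetic monotonicity plays there. I expect this last point, a robust nonlinear handle on the angular momentum all the way up to extremality, to be the genuine difficulty; growing the mass and clearing the remaining charges should otherwise follow the template already established in the present paper and in \cite{KU22}.
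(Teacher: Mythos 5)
This statement appears in the paper only as a \emph{conjecture}; the authors supply no proof, and you were right to frame your response as a research program rather than an argument. Your identification of the two required extensions is accurate and matches the paper's framing: promoting the bottom sphere from Minkowski to Schwarzschild is indeed routine (the Czimek--Rodnianski black box already targets any small Hawking mass, and the Schwarzschild-to-Schwarzschild reduction in \cref{prop:Schw-perturb-Schw} handles the rest), while pushing $|a_f|$ to extremality is the genuine open problem. The authors themselves describe the conjecture as the vacuum counterpart of the second (charge-up) stage of \cite{KU22}, and the remarks following \cref{conj:third-law-vac} make explicit that the missing ingredient is a nonperturbative mechanism for injecting angular momentum in vacuum, exactly as you say.

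Two refinements are worth flagging in your region (b) sketch. First, the paper's remark on the Komar angular momentum forces the spinning pulse data to break axisymmetry: if the null data along $C$ (and hence the resulting characteristic development) were axisymmetric, $\tfrac{1}{16\pi}\int_S \star\,dZ^\flat$ would be conserved and no angular momentum could be injected. Any ``spinning short pulse'' must therefore carry genuine $m\ne 0$ angular content. Second, your prescription of a ``seed shear dominated by the $\ell=1$ modes dual to the rotation Killing field'' cannot be taken literally, because symmetric traceless $(0,2)$-tensors on $S^2$ have no $\ell\le 1$ spherical-harmonic content; the shear $\hat\chi$ lives entirely in $\ell\ge 2$. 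The angular-momentum charge $\mathbf L=\mathbf B^{\ell=1,H}$ is sourced through the $\ell=1$ modes of $\Omega\tr\chi$, $\eta$, $\Omega$, which are driven by \emph{quadratic} self-interactions of the $\ell\ge 2$ shear. This is precisely why the charge $\mathbf L$ is linearly conserved and can only be moved by nonlinear (e.g.\ high-frequency) mechanisms, as in Czimek--Rodnianski. Your closing diagnosis is therefore on the mark: the crux is a nonlinear, non-axisymmetric handle on angular-momentum flux along $C$ playing the structural role of the electromagnetic monotonicity in \cite{KU22}, and no analogue of that identity is currently known.
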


If this conjecture holds, Schwarzschild can be spun up to extremality. Arguing as in \cite{KU22} and \cref{cor:main} in the present paper, this would imply

\begin{conj}\label{conj:third-law-vac} The ``third law of black hole thermodynamics'' is already false in vacuum. More precisely, there exist regular one-ended Cauchy data for the  Einstein vacuum equations \eqref{eq:EVE} which undergo gravitational collapse and form an exactly Schwarzschild apparent horizon, only for the spacetime to form an exactly extremal Kerr event horizon at a later advanced time. 
\end{conj}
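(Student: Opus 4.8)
The plan is to mirror the two-stage construction by which \cite{KU22} disproves the third law in the Einstein--Maxwell-charged scalar field model, replacing the ``charging up'' mechanism of \cite{KU22} by the $C^2$ characteristic gluing of a Schwarzschild symmetry sphere to a non-antitrapped Kerr coordinate sphere --- in particular to the \emph{extremal} Kerr event horizon --- conjectured just above. The argument then splits into two parts: a reduction of \cref{conj:third-law-vac} to that gluing conjecture, which should follow the blueprint already used to prove \cref{cor:main,cor:spacelike}; and the gluing conjecture itself, which is the genuine obstacle. For the reduction, fix $M_i>0$ and $2M_i<R_i$. In the first stage, one uses \cref{thm:Schwarzschild-gluing} together with the backwards-solving and Cauchy stability arguments behind \cref{cor:spacelike} to produce one-ended asymptotically flat vacuum Cauchy data on $\Sigma\cong\R^3$ whose maximal future globally hyperbolic development collapses and contains a full double null slab of exact Schwarzschild of mass $M_i$, arranged so as to contain the hypersurface $r=2M_i$; since this slab is \emph{exactly} Schwarzschild, its apparent horizon there is the exactly round sphere of radius $2M_i$, which is strictly subextremal.

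In the second stage, one picks a symmetry sphere $S_\star$ of radius $R_i'$ with $2M_i<R_i'$ in the \emph{same} Schwarzschild slab, and applies the conjectured gluing to connect $S_\star$, to order $C^2$ as a vacuum solution, to a non-antitrapped Kerr coordinate sphere of radius $R_f\gg R_i'$ on the event horizon of an \emph{exactly extremal} Kerr solution with mass $M_f\gg M_i$ and $|a_f|=M_f$. Solving the resulting characteristic initial value problem forward, and completing the data towards spacelike infinity by obstruction-free spacelike gluing in the far region exactly as in the proof of \cref{cor:main}, one obtains a global vacuum spacetime --- with induced Cauchy data of the same $H^{7/2-}_{\mathrm{loc}}\times H^{5/2-}_{\mathrm{loc}}$ regularity --- in which a Minkowskian region surrounds the axis, an exact Schwarzschild slab with its exactly round apparent horizon of radius $2M_i$ forms at finite advanced time, and for all sufficiently late advanced times $v\ge v_0$ the domain of outer communication and the event horizon $\mathcal H^+$ are isometric to those of extremal Kerr with parameters $(M_f,a_f=M_f)$. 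By construction the subextremal Schwarzschild apparent horizon and the later exactly extremal Kerr event horizon coexist in a single regular collapse spacetime evolving from regular one-ended data, which is the assertion of \cref{conj:third-law-vac}.

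The whole difficulty is therefore concentrated in the conjectured $C^2$ vacuum gluing of a Schwarzschild symmetry sphere to the \emph{exactly extremal} ($|a_f|=M_f$) Kerr event horizon. The methods of the present paper reach only $0\le|a|/M\le\mathfrak a_0\ll1$, because they feed into the obstruction-free gluing of Czimek--Rodnianski \cite{Czimek2022-cl}, whose coercivity hypothesis forces the change in angular momentum along $C$ to be much smaller than the distance of $S_1$ and $S_2$ to symmetry spheres in Minkowski space. To reach an arbitrary ratio $a_f/M_f<1$ one should iterate the gluing, adding angular momentum in many small increments and, at each step, exploiting the gauge freedom of the double null formalism --- boosting and rescaling by hand, as in ingredient (i) of \cite{KU22} --- to keep the relevant cross-sections close to reference Kerr spheres to which \cite{ACR1,Czimek2022-cl} apply; the delicate point is to obtain estimates that are uniform as $a_f/M_f\uparrow1$, where the surface gravity of the Kerr horizon degenerates and the Raychaudhuri and mass-aspect transport equations along $C$ lose coercivity. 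Reaching the \emph{exactly} extremal endpoint, rather than merely $a_f/M_f$ close to $1$, appears to require either a compactness and limiting argument applied to the near-extremal gluings, or a gluing construction built directly on the degenerate geometry of the extremal Kerr horizon. This is the vacuum analogue of the step that, in \cite{KU22}, was handled using the monotonicity of Maxwell's equation together with the Borsuk--Ulam argument for transverse scalar derivatives --- tools with no obvious vacuum counterpart near extremality --- and it is where essentially all of the difficulty is expected to lie; everything downstream, including the backwards solving, the far-region spacelike gluing, and the Cauchy stability that upgrades the two gluing slabs to genuine apparent- and event-horizon structures, should go through as in \cref{cor:main,cor:spacelike}.
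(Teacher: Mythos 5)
You have correctly recognized that \cref{conj:third-law-vac} is stated in the paper as an \emph{open conjecture}, not a theorem: the paper offers no proof, and only observes (in the sentence immediately preceding it) that it would follow from the vacuum gluing conjecture stated just above --- gluing a Schwarzschild symmetry sphere to a non-antitrapped Kerr coordinate sphere of prescribed, possibly extremal, parameters --- ``arguing as in \cite{KU22} and \cref{cor:main}.'' Your outline of that reduction (a two-stage gluing producing first an exact Schwarzschild slab containing a round marginally outer trapped sphere at $r=2M_i$, then a connection of a Schwarzschild symmetry sphere $S_\star$ in the same slab to the extremal Kerr event horizon, followed by backwards solving, pasting, and Cauchy stability to obtain genuine one-ended $H^{7/2-}_{\mathrm{loc}}\times H^{5/2-}_{\mathrm{loc}}$ Cauchy data) matches the intended route, and you correctly isolate the sole genuine gap: the extremal gluing conjecture itself. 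Your discussion of why the present methods cap out at $|a|/M\le\mathfrak a_0\ll1$ --- the coercivity hypotheses \eqref{eq:CR-cond-2}--\eqref{eq:CR-cond-3} in \cref{thm:CR}, the degeneration of the Kerr horizon as $a/M\uparrow 1$, and the absence of a vacuum analogue of the monotonicity and Borsuk--Ulam tools of \cite{KU22} --- is a fair appraisal of where the difficulty lies, and it is the same difficulty the paper flags.

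One small inaccuracy in your downstream account: you describe ``completing the data towards spacelike infinity by obstruction-free spacelike gluing in the far region exactly as in the proof of \cref{cor:main},'' but the proof of \cref{cor:main-2} uses no spacelike gluing at all. The region exterior to $\mathcal H^+$ and to the future of $u=0$ is taken to be exactly Kerr, hence already asymptotically flat; the remainder of $\Sigma$ is filled in by applying local existence for the characteristic initial value problem on either side of $u=0$, pasting along $u=0$, and then a Cauchy stability argument together with the cutoff/Hardy lemma (\cref{lem:Sob}) to handle the region labeled ``Cauchy stability'' in \cref{Penrose-diagram-proof}. This does not alter the structure of your reduction, but the mechanism is different from the spacelike gluing results of \cite{Corvino, CorvinoSchoen} that you appear to have in mind. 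With that correction, your proposal is an accurate account of the conditional argument the paper sketches --- but, as you yourself emphasize, it is a reduction to a (harder) open conjecture, not a proof of \cref{conj:third-law-vac}.
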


\begin{rk}
    By using negatively charged pulses in \cite{KU22}, we can design characteristic data that also ``discharges'' the black hole. It would be very interesting to find a mechanism that can both ``spin up'' and ``spin down'' a Kerr black hole, or move the rotation axis without changing the angular momentum much. 
\end{rk}

\begin{rk}
    It is not possible to have a solution of the pure Einstein--Maxwell equations which behaves like one of the solutions in \cref{thm:third-law}. This is because the vacuum Maxwell equation ${d}{\star}F=0$ always gives rise to a conserved electric charge $(4\pi)^{-1}\int_S \star F$, even outside of spherical symmetry. On Schwarzschild, this charge is zero, and on Reissner--Nordstr\"om, it equals the charge parameter $e$. 
\end{rk}

\begin{rk}
    Similarly, if a vacuum spacetime has an axial Killing field $Z$, then the Komar angular momentum $(16\pi)^{-1}\int_S \star dZ^\flat$ is conserved. Therefore, \cref{conj:third-law-vac} cannot be proved in axisymmetry. 
\end{rk}

\subsection{Outline of the proof}\label{sec:proof-outline}

In this section we give a very brief outline of the proof of \cref{thm:Schwarzschild-gluing}. The gluing is performed in two stages and should be thought of as being performed backwards in time. 

 \begin{figure}[ht]
\centering{
\def\svgwidth{13pc}
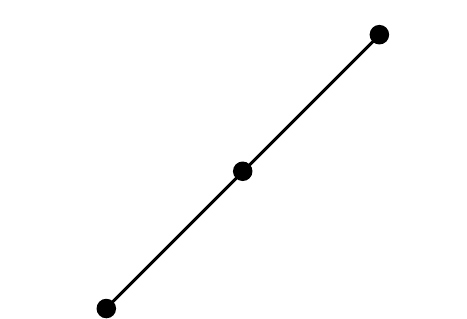}
\caption{Two-step process for the proof of \cref{thm:Schwarzschild-gluing}.}
\label{fig:two-phases}
\end{figure}

First, a fully nonperturbative mechanism is used to connect the exact Schwarzschild sphere $S_2$ of mass $M$ and radius $R$ to a sphere data set $S_*$ which is very close to a Schwarzschild sphere of mass $0<M_* \ll R$ and radius $\approx R$. See already \cref{prop:Schw-perturb-Schw} below. In the second stage of the gluing, we use the main theorem of \cite{Czimek2022-cl}, which we state below as \cref{thm:CR}, as a black box. In order to satisfy the necessary coercivity conditions required for obstruction-free characteristic gluing, we choose $0< \ve_\sharp\ll M_*\ll R$, where $\ve_\sharp$ measures the closeness of $S_*$ to the $(M_*,R)$-Schwarzschild sphere. 

The nonperturbative mechanism which glues $S_*$ to $S_2$ involves the injection of two pulses\footnote{By the Poincar\'e--Hopf theorem, the shear $\hat\chi$ vanishes somewhere on each sphere. This means that in any characteristic gluing problem in vacuum where the null expansion $\tr\chi$ has to change everywhere, the zero of $\hat\chi$ has to move along the sphere as $v$ increases. In our setting, we choose the first pulse to be supported away from the north pole and the second pulse to be supported away from the south pole.} of gravitational waves (described mathematically by the shear $\hat\chi$) along $C=[0,1]_v\times S^2$, of amplitude $\delta^{1/2}=O(\ve_\sharp)$, together with a choice of outgoing null expansion $\tr\chi$ at $S_2$ such that $|\!\tr\chi|\les \delta$. In order to fix the Hawking mass of $S_2$ to be $M$, the ingoing null expansion $\tr\underline\chi$ is then chosen to be $\approx -\delta^{-1}$ at $S_2$.\footnote{$\tr\underline\chi<0$ on $S_2$ means that $S_2$ is not antitrapped.} The Hawking mass of $S_*$ is fixed to be arbitrarily close to $M_*$ by tuning $\hat\chi$ and using the monotonicity of Raychaudhuri's equation; see already \cref{lem:construction-of-lambda}.\footnote{In \cite{KU22}, we use a similar monotonicity to glue the charge.}  We then step through the null structure equations and Bianchi identities as in \cite[Chapter 2]{Christo09} and establish a $\delta$-weighted hierarchy for the sphere data at $S_*$; see already \cref{lem:bottom-sphere-estimates}. Finally, we boost the cone by $\delta$, which brings $S_2$ to a reference Schwarzschild sphere and $S_*$ within $\ve_\sharp$ of a reference Schwarzschild sphere with mass $M_*$. This construction may be thought of as a direct adaptation, in vacuum, of the idea used to prove Schwarzschild event horizon gluing in spherical symmetry for the Einstein-scalar field system in \cite[Section 4.1]{KU22}.

\subsubsection{Relation to Christodoulou's short pulse method}\label{sec:short-pulse-intro}

After the boost, one can  interpret the above approximate nonperturbative gluing mechanism as a ``short pulse'' data set defined on $[0,\delta]\times S^2$ as in \cite{Christo09}, but fired backwards. In this context, our equation \eqref{eq:average} below should be compared with the condition (4.1) in \cite{LiYu} (see also \cite{LiMei, Athanasiou2020}). This condition guarantees that certain components of the sphere data at $S_*$ are \emph{a posteriori} closer to spherical symmetry.

Likewise, our \cref{cor:main} can be compared with the main theorem of Li and Mei in \cite{LiMei}. In particular, we also prove trapped surface formation starting from Cauchy data outside of the black hole region. Their proof utilizes the trapped surface formation mechanism of Christodoulou \cite{Christo09} and Corvino--Schoen spacelike gluing \cite{CorvinoSchoen}. 

Our proof of trapped surface formation starting from Cauchy data is fundamentally different from \cite{LiMei} because it does not appeal to Christodoulou's trapped surface formation mechanism \cite{Christo09}. In fact, the only aspect of the evolution problem we require is Cauchy stability. Furthermore, we can directly prescribe the (very slowly rotating) Kerr parameters of the black hole to be formed. In particular, we may take $a=0$, which guarantees the existence of a spacelike singularity, see \cref{cor:spacelike}. However, our data is of limited regularity (but still in a well-posed class). Nevertheless, by appealing to Cauchy stability once again, \cref{cor:main} has the further corollary of trapped surface formation starting from an \emph{open set} of Cauchy data. 

 \subsection*{Acknowledgments}

The authors thank Mihalis Dafermos and Igor Rodnianski for helpful discussions. C.K.~acknowledges support by Dr.~Max R\"ossler, the Walter Haefner Foundation, and the ETH Z\"urich Foundation.  R.U.~thanks the University of Cambridge and ETH Z\"urich for hospitality as this work was being carried out. 

\section{Characteristic initial data and characteristic gluing}  

In this section, we give a brief review of the characteristic gluing problem for the Einstein vacuum equations \eqref{eq:EVE} in double null gauge \cite{ACR1, ACR2, ACR3, Czimek2022-cl}.  We follow the conventions of \cite{Czimek2022-cl} unless otherwise stated.

If the reader is unfamiliar with double null gauge, we defer to \cite{Christo09} for exposition. We have collected formulas and notions used explicitly in the present paper in \cref{app:A}.

\subsection{Sphere data, null data, and seed data}

The terminology used in this paper is in agreement with \cite{Czimek2022-cl}, which we will be using as a black box, and therefore differs slightly from our previous paper \cite{KU22}. We hope this facilitates the reader in understanding exactly how the main notions and results from \cite{Czimek2022-cl} are being used here. 

\subsubsection{Sphere data} \label{sec:sphere-data}

Given a solution $(\mathcal M^4,g)$ of the Einstein vacuum equations \eqref{eq:EVE} and a sphere $S$ in a double null foliation, the 2-jet of $g$ can be determined from knowledge of the metric coefficients, Ricci coefficients, and curvature components. However, the equations themselves, such as the Codazzi equation \eqref{eq:Codazzi1} allow some of these degrees of freedom to be computed from the others, just in terms of derivatives tangent to $S$. This leads to the following definition:

\begin{defn}[$C^2$ sphere data, {\cite[Definition 2.4]{ACR2}}]\label{def:C2-data}
Let $S$ be a $2$-sphere. Sphere data $x$ on $S$ consists of a choice of round metric $\gamma$ on $S$ and the following tuple of $S$-tensors
\begin{equation}
    x= (\Omega, \slashed g, \Omega \tr\chi , \hat \chi, \Omega \tr \underline\chi, \hat{\underline \chi}, \eta, \omega, D \omega,\underline \omega, \underline{D \omega} , \alpha, \underline \alpha), \label{eq:sphere-defn}
\end{equation}
where $\Omega$ is a positive function, $\slashed g$ a Riemannian metric, $\Omega \tr\chi , \hat \chi, \Omega \tr \underline\chi, \omega, D \omega,\underline \omega, \underline{D \omega} $ are scalar functions, $\eta$ is a vector field, $\hat \chi, \hat{\underline \chi}$, $\alpha$ and $\underline \alpha$ are symmetric $\slashed g$-traceless $2$-tensors. 
\end{defn}

\begin{defn}[$C^2_u C^{2+m}_{v}$ sphere data, 
 {\cite[Definition 2.28]{ACR2}}]\label{def:higher-order-data}
Let $S$ be a $2$-sphere and $m\ge 0$ an integer. Higher order sphere data $x$ on $S$ consists of a choice of round metric $\gamma$ on $S$, the tuple \eqref{eq:sphere-defn}, together with
\begin{equation}
    (\hat D\alpha,\dotsc, \hat D^m\alpha,D^2\omega,\dotsc,D^{m+1}\omega),\label{eq:higher}
\end{equation}
where $\hat D^j\alpha$ are symmetric $\slashed g$-traceless $2$-tensors and $D^j\omega$ scalar functions. We will write $x=(x^\mathrm{low},x^\mathrm{high})$, where $x^\mathrm{low}$ is a $C^2$ sphere data set and $x^\mathrm{high}$ denotes the tuple \eqref{eq:higher}. 
\end{defn}

When sphere data is obtained from a geometric sphere in a vacuum spacetime, one has to make a choice of normal null vector fields $L$ and $\underline L$. See \cref{lem:data-generation} below, for instance. As is well known, the null pair $\{L,\underline L\}$ can be ``boosted''  by the transformation 
\begin{equation}
\tilde L \doteq \frac{1}{\lambda} L,\quad \tilde{\underline L}\doteq \lambda \underline L,\label{eq:boost-1}
\end{equation} where $\lambda\in\Bbb R_+$. This boost freedom was also quite useful in the preceding paper \cite{KU22}.

\begin{defn}[Boosted sphere data]\label{def:boost}
Let $x$ be a sphere data set as in \cref{def:higher-order-data} and $\lambda\in\Bbb R_+$. Then the $\lambda$-\emph{boosted sphere data set} is the $C^2_uC^{2+m}_v$ sphere data set given by 
\begin{align*}
    \mathfrak b_\lambda(x^\mathrm{low})&\doteq (\Omega, \slashed g, \lambda^{-1}\Omega \tr\chi , \lambda^{-1}\hat \chi, \lambda\Omega \tr \underline\chi,\lambda \hat{\underline \chi}, \eta, \lambda^{-1}\omega, \lambda^{-2}D \omega,\lambda\underline \omega,\lambda^2 \underline{D \omega} , \lambda^{-2}\alpha, \lambda^2\underline \alpha), \\
     \mathfrak b_\lambda(x^\mathrm{high})&\doteq (\lambda^{-3}\hat D\alpha,\dotsc,\lambda^{-2-m}\hat D^m\alpha,\lambda^{-3}D\omega,\dotsc,\lambda^{-2-m}D^{m+1}\omega).
\end{align*}
This is the effect that the boost \eqref{eq:boost-1} has on the metric coefficients, Ricci coefficients, and curvature components in double null gauge. 
\end{defn}

There is a norm $\|x\|_{\mathcal X^m}$ defined on $C^{2}_uC^{2+m}_{v}$ sphere data sets employed in \cite{ACR2, Czimek2022-cl}, which is just a sum of high order (in the angular variable $\theta$) Sobolev norms of the sphere data components \cite[Definition 2.5]{ACR2}. We will show very strong pointwise smallness for arbitrary numbers of angular derivatives later and thus will not need the exact form of these norms in order to apply the result of \cite{Czimek2022-cl}. 

\begin{defn}[Sphere diffeomorphisms]\label{def:sphere-diffeo}
    Given a diffeomorphism $\psi:S^2\to S^2$, we let $\psi$ act on $C^2_uC^{2+m}_v$ data sets by pullback on each component. 
\end{defn}

\subsubsection{Null data}

\begin{defn}[Ingoing and outgoing null data {\cite[Definition 2.6]{ACR2}}]
Let $v_1<v_2$. An \emph{outgoing null data set} is given by an assignment $v\mapsto x(v)$, where $x(v)$ is a $C^2$ sphere data set. We may say that the null data lives on the null hypersurface $C\doteq C^{[v_1,v_2]}\doteq [v_1,v_2]\times S^2$. An \emph{ingoing null data set} is defined in the same way, but with the formal variable $v$ replaced with $u$ and $\eta$ replaced by $\underline\eta$ in \eqref{eq:sphere-defn}. 

Higher order null data is defined in the obvious way, with $x(v)$ being $C^2_uC^{2+m}_{v}$ sphere data. Null data on its own is not assumed to satisfy the null structure equations and Bianchi identities. 
\end{defn}

There are several norms on null data that are employed in \cite{ACR1,ACR2,ACR3,Czimek2022-cl}. These include the standard norm $\mathcal X$ defined on ingoing and outgoing null data the high regularity norm $\mathcal X^+$ defined on ingoing null data, and the high frequency norm $\mathcal X^{\text{h.f.}}$ defined on outgoing null data using in obstruction-free characteristic gluing. As we will not need the precise forms of these norms in the present work, we refer the reader to \cite[Definition 2.7]{ACR2} for details. 

\subsubsection{Christodoulou seed data}\label{sec:Chr-seed-data}

We will employ the following method, originating in the work of Christodoulou \cite{Christo09}, for producing solutions of the null structure and Bianchi equations along a null hypersurface $C$.  

For definiteness, we seek a solution of the null constraints on the null cone segment $C\doteq C^{[0,1]}\doteq [0,1]\times S^2$. The coordinate along $[0,1]$ is called $v$ and we set $L=\partial_v$. On $S^2$, we have the round metric
\begin{equation*}
    \gamma \doteq d\vartheta^2+\sin^2\vartheta\,d\varphi^2,
\end{equation*}
where $(\vartheta,\varphi)$ are standard spherical polar coordinates. We interpret $\gamma$ as a symmetric $S$-tensor on $C$ (see \cref{app:A} for this terminology) by imposing $\gamma(\partial_v,\cdot)=0$. We set $S_v\doteq \{v\}\times S^2$. 

\begin{lem}\label{lem:seed-data} Let $\hat{\slashed g}$ be a smooth $S$-(0,2)-tensor field on $C$ which induces a Riemannian metric on the sections of $C$ satisfying 
\begin{equation}
    \tr_{\hat{\slashed g}}D\hat{\slashed g}=0,\label{eq:traceless}
\end{equation}
where $D\hat{\slashed g}\doteq \slashed{\mathcal L}{}_L\hat{\slashed g}$ as in \cref{app:A}.\footnote{Concretely, this means $\hat{\slashed g}=\hat{\slashed g}(v)$ is a smooth 1-parameter family of Riemannian metrics on $S^2$. We identify $S^2$ with $S_v\subset C$. Since $b=0$, $\slashed {\mathcal L}{}_L\hat{\slashed g}=\mathcal L_L\hat{\slashed g}$ and relative to any angular coordinates $\vartheta^A$ defined on $S_1$ extended to $C$ according to $L\vartheta^A=0$, $(D\hat{\slashed g})_{AB}=\partial_v(\hat{\slashed g}{}_{AB})$.}
Let $\slashed g{}_1$ be a Riemannian metric on $S_1$ which is conformal to $\hat{\slashed g}(1)$, $\tr\chi_1$ and $\tr\underline\chi_1$ be functions on $S_1$, $\eta_1$ be a 1-form on $S_1$, and $\hat{\underline\chi}_1$ and $\underline\alpha_1$ be ${\slashed g}{}_1$-traceless symmetric 2-tensors on $S_1$.  

Then there exist uniquely determined ${\slashed g}{}_1$-traceless symmetric 2-tensors $\hat\chi_1$ and $\alpha_1,\hat{D}\alpha_1\dotsc,\hat{D}^m\alpha_1$ on $S_1$, $v_0\in (-1,1)$, and null data 
\begin{align*}
   x^\mathrm{low}(v) &=(\Omega, \slashed g, \Omega \tr\chi , \hat \chi, \Omega \tr \underline\chi, \hat{\underline \chi}, \eta, \omega, D \omega,\underline \omega, \underline{D \omega} , \alpha, \underline \alpha),\\
   x^\mathrm{high}(v)&=(\hat D\alpha,\dotsc, \hat D^m\alpha,D^2\omega,\dotsc,D^{m+1}\omega),
\end{align*}
defined for $v\in (v_0,1]\cap [0,1]$, satisfying the null structure equations and Bianchi identities along $C$, such that 
\begin{align}
      \label{eq:x(1)}  x^\mathrm{low}(1)&= (1,\slashed g{}_1,\tr\chi_1,\hat\chi_1,\tr\underline\chi_1,\hat{\underline\chi}_1,\eta_1,0,0,0,0,\alpha_1,\underline\alpha{}_1)\\
        x^{\mathrm{high}}(1)&=(\hat D\alpha_1,\dotsc,\hat D^{m+1}\alpha_1,0,\dotsc,0), 
\end{align}
For every $v\in (v_0,1]\cap [0,1]$, $\slashed g(v)$ is conformal to $\hat{\slashed g}(v)$ and $\Omega^2(v)=1$ identically, so $D^j\omega(v)=0$ identically for $j=0,\dotsc,m+1$. The number $v_0$ is either strictly negative (in which case $x$ exists on all of $C$), or is nonnegative and satisfies 
\begin{equation}
    \lim_{v\searrow v_0}\inf_{S_v}|\slashed g(v)|_{\hat{\slashed g}}=0.\label{eq:continuation}
\end{equation}
\end{lem}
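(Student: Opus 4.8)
The plan is to realize \cref{lem:seed-data} as an instance of Christodoulou's construction of a solution of the characteristic constraint equations from a conformal seed, cf.~\cite[Chapter~2]{Christo09}, specialized to the gauge $\Omega\equiv 1$ and to data prescribed at the section $S_1$ rather than at a cone vertex. The construction is a triangular hierarchy of transport equations along $C$ whose only genuinely nonlinear link---and hence the only possible source of breakdown at finite $v$---is the first one, governing the conformal factor of $\slashed g$; this is where $v_0$ and the continuation criterion \eqref{eq:continuation} come from.

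I would begin with the gauge reductions: setting $\Omega\equiv 1$ (an admissible choice of the free gauge function on the initial cone) forces $\omega\equiv 0$, hence $D^j\omega\equiv 0$ for all $j$, and, by the standard relation linking $\eta$, $\underline\eta$, and $\log\Omega$, also $\underline\eta=-\eta$. Then, writing $\slashed g=\phi^2\hat{\slashed g}$ with $\phi>0$, the seed constraint \eqref{eq:traceless} yields, by a direct computation, $\Omega\tr\chi=\tr\chi=2\,\partial_v\log\phi$, $\hat\chi=\tfrac12\phi^2 D\hat{\slashed g}$ (which is automatically $\slashed g$-traceless), and $|\hat\chi|^2_{\slashed g}=\tfrac14\,e(v,\theta)$, where $e\doteq|D\hat{\slashed g}|^2_{\hat{\slashed g}}\ge 0$ is a smooth function on $C$ \emph{independent of $\phi$}. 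Raychaudhuri's equation \eqref{eq:Ray-v} then collapses to the linear second-order ODE
\begin{equation*}
    \partial_v^2\phi+\tfrac18\,e(v,\theta)\,\phi=0,
\end{equation*}
posed at each $\theta\in S^2$ with initial data $\phi(1)=\phi_1$---the unique positive function satisfying $\slashed g{}_1=\phi_1^2\hat{\slashed g}(1)$, which exists precisely because $\slashed g{}_1$ is conformal to $\hat{\slashed g}(1)$---and $\partial_v\phi(1)=\tfrac12\phi_1\tr\chi_1$. Since $e$ is smooth and bounded, this ODE has a unique solution on all of $\Bbb R$, smooth in $(v,\theta)$, with $\phi$ and $\partial_v\phi$ bounded on $[0,1]$; in particular there is no blow-up. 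I would then take $I\ni 1$ to be the maximal open interval on which $\min_{S^2}\phi(v,\cdot)>0$. If $[0,1]\subseteq I$, choose $v_0\in(-1,0)\cap I$, so $v_0<0$ and the data exists on all of $C$; otherwise set $v_0\doteq\inf I\in[0,1)$, in which case openness of $I$ and continuity of $\phi$ give $\min_{S^2}\phi(v_0,\cdot)=0$, which, since $|\slashed g|_{\hat{\slashed g}}=\sqrt2\,\phi^2$, is exactly \eqref{eq:continuation}.

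On $(v_0,1]\cap[0,1]$ we now hold $\slashed g$, $\hat\chi$, $\Omega\tr\chi$, $\Omega=1$, $\omega=0$, all smooth, and I would complete the data by stepping through the remaining null structure equations and Bianchi identities in the standard order of \cite[Chapter~2]{Christo09}: $\alpha$ is read off algebraically from the $\hat\chi$ evolution equation (and $\hat D\alpha,\dotsc,\hat D^m\alpha$ by further $\partial_v$-differentiation, using smoothness of $\hat{\slashed g}$), which also determines $\hat\chi_1$ and $\alpha_1,\dotsc,\hat D^m\alpha_1$; $\eta$ solves a linear transport equation along $C$ with initial value $\eta_1$, its source made explicit by eliminating $\beta$ via the Codazzi equation \eqref{eq:Codazzi1}, after which $\beta$ is algebraic; the pair $(\Omega\tr\underline\chi,\hat{\underline\chi})$ solves a coupled linear transport system with initial data $(\tr\underline\chi_1,\hat{\underline\chi}_1)$, with $\rho$ eliminated via the Gauss equation, after which $\rho$, $\sigma$, $\underline\beta$ are algebraic; $\underline\alpha$ solves a linear transport equation with initial value $\underline\alpha_1$; and $\underline\omega$, $\underline{D\omega}$ solve transport equations with vanishing initial data. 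Every equation here is linear with coefficients smooth on $(v_0,1]\times S^2$, hence has a unique smooth solution on that entire interval---no further breakdown is possible once $\slashed g$, $\hat\chi$, $\Omega\tr\chi$ are controlled---and smoothness of $\hat{\slashed g}$ furnishes all the $v$-derivatives required for $x^{\mathrm{high}}$; moreover $\Omega^2\equiv 1$ and $D^j\omega\equiv 0$ hold identically as claimed. That the structure and Bianchi identities not used directly as transport equations are then automatically verified is the standard argument of \cite[Chapter~2]{Christo09}, and uniqueness is immediate since each step solves an ODE with prescribed data.

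The only substantive point is the conformal-factor step: it is the single genuinely nonlinear link in the hierarchy and the only mechanism producing breakdown at finite $v$ (the cone developing a focal point), and pinning $v_0$ to the vanishing of $\phi$---hence to \eqref{eq:continuation}---is the heart of the argument. Everything downstream of $(\slashed g,\hat\chi,\Omega\tr\chi)$ is soft.
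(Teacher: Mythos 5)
Your proposal reproduces the paper's argument: set $\Omega\equiv 1$, make the conformal ansatz $\slashed g=\phi^2\hat{\slashed g}$, recast Raychaudhuri as the linear scalar ODE $\partial_v^2\phi+e\phi=0$ with $e$ determined by $\hat{\slashed g}$ alone, define $v_0$ by the first time $\phi$ touches zero going backwards, and then step through the remaining null structure and Bianchi equations as a triangular hierarchy of linear transport equations in the order of \cite[Chapter 2]{Christo09} -- precisely as the paper outlines (and then carries out in detail in the proof of \cref{lem:bottom-sphere-estimates}). Your bookkeeping even corrects a stray factor in the paper's displayed relation between $\tr\chi$ and $\partial_v\log\phi$: with $\phi(1)=R$ the relation $\tr\chi=2\partial_v\log\phi$ and the final data $\partial_v\phi(1)=\tfrac12\phi_1\tr\chi_1$ are the ones consistent both with the first variation formula and with the values used later in the proof of \cref{prop:Schw-perturb-Schw}.
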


 A \emph{conformal class of Riemannian metrics} on $C$ is the equivalence class $\mathfrak K$ of symmetric $S$-$(0,2)$-tensors on $C$ which are positive definite on each $S_v$ with the equivalence relation ${\slashed g}{}',{\slashed g}''\in \mathfrak K$ if there exists a smooth positive function $\psi$ on $C$ such that ${\slashed g}'=\psi^2{\slashed g}''$.

\cref{lem:seed-data} shows that the free data\footnote{That is, the quantities that may be freely prescribed.} for the characteristic data (in the gauge $\Omega^2=1$) are given by 
\begin{equation*}
    \mathfrak K, \slashed g(1), \tr\chi(1),\tr\underline\chi(1), \eta(1), \hat{\underline\chi}(1), \text{and }\underline\alpha(1),
\end{equation*}
subject to the condition that $\slashed g(1)$ is compatible with $\mathfrak K$ and that $\hat{\underline\chi}(1)$ and $\underline\alpha(1)$ are traceless, which is a notion that depends only on $\mathfrak K$. The desired induced metric $\slashed g$ will be a representative of $\mathfrak K$. One often writes $\mathfrak K=[\slashed g]$, so the prescription of $\mathfrak K$ is the prescription of the conformal class of the induced metric $\slashed g$ which is to be found. 

The condition \eqref{eq:traceless} on the representative $\hat{\slashed g}$ of $\mathfrak K$ can be imposed without loss of generality, i.e., $\mathfrak K$ always contains a representative satisfying \eqref{eq:traceless}. Indeed, let $\tilde{\slashed g}\in[\slashed g]$ and let $\psi\doteq \exp(\int_v^1 \tfrac 14\tr_{\tilde{\slashed g}}D\tilde{\slashed g}\,dv').$ Then $\hat{\slashed g}\doteq \psi^2\tilde{\slashed g}$ satisfies \eqref{eq:traceless}. 

\begin{rk}
In \cref{lem:mathfrak-h} below, we will directly construct a specific $\hat{\slashed g}$ satsifying the volume form condition $D(d\mu_{\hat{\slashed g}})=0$, which easily implies \eqref{eq:traceless} by the first variation formula for area.
\end{rk}

\begin{proof}[Outline of the proof of \cref{lem:seed-data}]
    Let $\phi_1$ be the positive function on $S_1$ satisfying $\slashed g{}_1=\phi^2_1\hat{\slashed g}{}_1$. We make the ansatz
    \begin{equation}
    \slashed g = \phi^2 \hat{\slashed g}, \label{eq:phi-defn}
\end{equation} on $C$, where $\phi$ is now a positive function on $C$ agreeing with $\phi_1$ on $S_1$. 
    
 We define
\begin{equation}
    e \doteq \tfrac 18 \hat{\slashed g}{}^{AB} \hat{\slashed g}{}^{CD} \partial_v \hat{\slashed g}{}_{AC}\partial_v\hat{\slashed g}{}_{BD}\label{eq:e-calc}
\end{equation}
     relative to any Lie-transported angular coordinate system on the spheres. We set
\begin{equation}
    \partial_v\phi_1\doteq 2\phi_1\tr\chi_1
\end{equation}
and let $\phi$ be the unique solution of the second order ODE 
\begin{equation}
    \partial_v^2\phi + e\phi =0, \label{eq:Ray-phi}
\end{equation}
with initial conditions $(\phi_1,\partial_v\phi_1)$. If $\phi$ remains strictly positive on all of $C$, then we let $v_0$ be any strictly negative number. If however $\phi$ has a zero on $C$, then we take $v_0$ to be the supremum of $v\in [0,1]$ for which $\inf_{S_v}\phi \le 0$. This definition gives \eqref{eq:continuation}.

We now set
\begin{align}
    \hat\chi &\doteq \tfrac 12 \phi^2 D\hat{\slashed g}\label{eq:first-var-seed}\quad\text{and}\\
       \tr\chi &\doteq \tfrac 12\partial_v \log\phi\label{eq:tr-chi-phi}
\end{align}
along $C$ and observe that this choice of $\tr\chi$ is consistent with \eqref{eq:x(1)}. By \eqref{eq:traceless}, the shear defined by \eqref{eq:first-var-seed} is $\slashed g$-traceless. From \eqref{eq:phi-defn}, we have
\[D\slashed g = \phi^2 D\hat{\slashed g}+2\phi \partial_v\phi \hat{\slashed g}=\phi^2D\hat{\slashed g}+2\partial_v\log\phi\,\slashed g,\]
and by comparing with the first variation formula \eqref{eq:first-variation-v} written in the form
\[D\slashed g = 2\hat\chi + \tr\chi\,\slashed g,\]
we conclude that \eqref{eq:first-var-seed} and \eqref{eq:tr-chi-phi} are consistent with the first variation formula. 

From \eqref{eq:first-var-seed}, we have
\begin{equation}
    e = \tfrac 12|\hat\chi|^2,\label{eq:e-defn}
\end{equation}
so the ODE \eqref{eq:Ray-phi} is seen to be equivalent to  Raychaudhuri's equation \eqref{eq:Ray-v}.

From here, the full null data along $C$ can be determined by stepping through the null structure and Bianchi equations in the right order, as in \cite{Christo09}. We will outline this procedure in the proof of \cref{lem:bottom-sphere-estimates} below.
\end{proof}

\subsection{Reference sphere data for the Kerr family} \label{sec:ref-sphere-data}

\begin{defn}[The Kerr family of metrics]
    Let $\mathcal M_*\doteq(-\infty,\infty)_v\times (0,\infty)_r\times S^2$, where $S^2$ carries standard spherical polar coordinates $\vartheta$ and $\varphi$. The \emph{Kerr family} of metrics is the smooth two-parameter family of Lorentzian metrics 
     \begin{equation}
        g_{M,a} = -\left(1-\frac{2Mr}{\Sigma}\right) dv^2 + 2\,dv\,dr - \frac{4Mar\sin^2\vartheta}{\Sigma}dv\,d\varphi -2a\sin^2\vartheta \,dr\,d\varphi + \Sigma\,d\vartheta^2 + \rho^2 \sin^2\vartheta\,d\varphi^2
        \label{eq:Kerr-2}
    \end{equation}
    on $\mathcal M_*$, where $M\ge 0$ is the mass, $a\in \Bbb R$ is the specific angular momentum, 
    \begin{align*}
        \Sigma&\doteq r^2+a^2\cos^2\vartheta,\quad\text{and}\\
        \rho^2&\doteq r^2+a^2 +\frac{2Ma^2r\sin^2\vartheta}{\Sigma}.
    \end{align*} When $a=0$, $g_{M,a}$ reduces to the \emph{Schwarzschild metric} 
    \begin{equation}
    g_M= -\left(1-\frac{2M}{r}\right) dv^2 +2\,dv\,dr + r^2 \gamma,
\end{equation} where $\gamma\doteq d\vartheta^2+\sin^2\vartheta\,d\varphi^2$. 
When $M=0$, $g_{M,a}$ reduces to the \emph{Minkowski metric}
\begin{equation*}
    m\doteq -dv^2+2\,dv\,dr+r^2\gamma.
\end{equation*}
\end{defn}

The metrics $g_{M,a}$ solve the Einstein vacuum equations \eqref{eq:EVE}. The spacetime $(\mathcal M_*,g_{M,a})$ is time-oriented by $\partial_v$ for $r\gg 1$. The vector field $\partial_v$ is Killing---the Kerr family is stationary. If $|a|\le M$ and $M>0$, these metrics describe black hole spacetimes. For $0\le |a|<M$, the black hole is said to be \emph{subextremal}, and for $0< |a|=M$, \emph{extremal}.

\begin{rk}
In the context of the Schwarzschild solution, the coordinates $(v,r,\vartheta,\varphi)$ are called \emph{ingoing Eddington--Finkelstein coordinates}. Indeed, defining 
\begin{equation*}
    t\doteq v-r-2M\log|r-2M|
\end{equation*}
brings $g_M$ into the familiar form 
\begin{equation*}
    g_M= -\left(1-\frac{2M}{r}\right)dt^2+\left(1-\frac{2M}{r}\right)^{-1}dr^2+r^2\gamma
\end{equation*} and $(t,r,\vartheta,\varphi)$ are called \emph{Schwarzschild coordinates}. 
    In the context of the Kerr solution, the coordinates $(v,r,\vartheta,\varphi)$ are called \emph{Kerr-star coordinates}. For the relation to the perhaps more familiar \emph{Boyer--Lindquist coordinates}, see \cite{ONeill-book}. The advantage of defining the Kerr family $g_{M,a}$ directly in these coordinates is that we may view it as a smooth two-parameter family of Lorentzian metrics on the \emph{fixed} smooth manifold $\mathcal M_*$, even across the horizons located at $r_\pm = M\pm\sqrt{M^2-a^2}$ when $M>0$. 
\end{rk}

\begin{rk}
    The spacetimes $(\mathcal M_*,g_{M,a})$ defined here do not cover the entire maximal analytic extensions of the Minkowski, Schwarzschild, and Kerr solutions. Most importantly, $(\mathcal M_*,g_{M,a})$ includes the portion of the future event horizon $\mathcal H^+\doteq \{r=r_+\}$ strictly to the future of the bifurcation sphere. 
\end{rk}

We will now define the reference sphere data for the Kerr family. We will use the notion of sphere data $x[g,i,\underline L]$ generated by a Lorentzian metric $g$ on a smooth manifold $\mathcal M$, an embedding $i:S^2\to\mathcal M$, and a choice of null vector field $\underline L$ defined along and orthogonal to $i(S^2)$, which is defined in \cref{lem:data-generation} below. Note that
\begin{equation*}
    Y\doteq -\partial_r
\end{equation*}
is a future-directed null vector field for $(\mathcal M_*,g_{M,a})$. We also define the family of embeddings 
\begin{align*}
    i_R:S^2&\to\mathcal M_*\\
   (\vartheta,\varphi) &\mapsto (0,R,\vartheta,\varphi)
\end{align*}
for $R>0$. 
\begin{defn}[Reference sphere data]
    Let $M\ge 0$, $a\in\Bbb R$, $R>0$, and $m\ge 0$ be an integer. The \emph{reference Kerr sphere data set} of mass $M$, specific angular momentum $a$, and radius\footnote{We use the term radius because it is associated to the Kerr coordinate $r$, but the spheres are not round!} $R$ is the $C^2_uC^{2+m}_v$ sphere data set given by 
    \begin{equation}
        \mathfrak k_{M,a,R}\doteq x[g_{M,a},i_R,Y].
    \end{equation}
    The \emph{reference Schwarzschild data sets} are defined by 
    \begin{equation}
        \mathfrak s_{M,R}\doteq \mathfrak k_{M,0,R}
    \end{equation}
    and the \emph{reference Minkowski data sets} are defined by 
    \begin{equation}
        \mathfrak m_R\doteq \mathfrak s_{0,R}. 
    \end{equation}
\end{defn}
We will colloquially refer to $\mathfrak k_{M,a,R}$ as a ``Kerr coordinate sphere'' and $\mathfrak s_{M,R}$ (resp., $\mathfrak m_R$) as a ``(round)~Schwarzschild symmetry sphere'' (resp., ``(round) Minkowski symmetry sphere''). 

In the notation of \cref{sec:sphere-data}, one can show that
    \begin{align}
        {\mathfrak s}_{M,R}^\mathrm{low}&= \left(1,R^2\gamma,\frac{2}{R}\left(1-\frac{2M}{R}\right),0,-\frac{2}{R},0,\dotsc,0\right), \label{eq:Schw-sphere-data}\\
        {\mathfrak s}_{M,R}^\mathrm{high}&= 0\label{eq:Schw-sphere-data-high}.
    \end{align}
    A similarly simple expression is neither available nor needed for Kerr. Indeed, we have the 
    \begin{lem}
For any integer $m\ge 0$, $\mathfrak k_{M,a,R}$ is a smooth three-parameter family of $C^2_uC^{2+m}_v$ sphere data sets. In particular,
            \begin{equation}
   \lim_{a\to 0} \|\mathfrak k_{M,a,R}-\mathfrak s_{M,R}\|_{\mathcal X^m} =0.\label{eq:Kerr-approx}
\end{equation}
    \end{lem}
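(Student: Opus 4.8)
The plan is to reduce everything to the smoothness of the sphere-data-generation map $x[\,\cdot\,,\,\cdot\,,\,\cdot\,]$ of \cref{lem:data-generation}. First I would unwind that lemma: the output $x[g,i,\underline L]$ is assembled from $g$, the embedding $i$ and the null field $\underline L$ by a finite list of algebraic and differential operations — pulling back $g,\partial g,\partial^2 g$ (and, for the higher-order tuple \eqref{eq:higher}, finitely many further derivatives) along $i$, forming the induced metric $\slashed g$ and an adapted null frame determined by $\underline L$ and a normalization (e.g.\ $\Omega\equiv 1$), and then reading off the metric coefficients, Ricci coefficients and curvature components via their frame formulas. Each such operation is a smooth — indeed rational, with denominators built from $\det\slashed g$ and $g(L,\underline L)$ — function of the relevant jets. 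Hence $(g,i,\underline L)\mapsto x[g,i,\underline L]\in\mathcal X^m$ is a smooth map on the open set of triples for which $i(S^2)$ is a spacelike embedded sphere and $\underline L$ is null and transverse to it, with $\det\slashed g$ and $g(L,\underline L)$ bounded away from $0$.

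Next I would observe that the three inputs depend smoothly on $(M,a,R)$ and stay in this good open set, uniformly on compact parameter sets. The family $g_{M,a}$ is, by its very definition \eqref{eq:Kerr-2} on the \emph{fixed} manifold $\mathcal M_*$, a smooth (real-analytic, even) two-parameter family of Lorentzian metrics with no degeneration near $\{r=R\}$; therefore all of its jets along $i_R(S^2)$ are smooth in $(M,a)$. The embeddings $i_R$ are smooth in $R$, their images are uniformly spacelike (the induced metric is $\Sigma\,d\vartheta^2+\rho^2\sin^2\vartheta\,d\varphi^2$ with $\Sigma,\rho^2>0$), and $Y=-\partial_r$ is a fixed smooth vector field which is $g_{M,a}$-null for every $M,a$ (there is no $dr^2$ term in \eqref{eq:Kerr-2}) and transverse to $i_R(S^2)$. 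Composing, $(M,a,R)\mapsto \mathfrak k_{M,a,R}=x[g_{M,a},i_R,Y]$ is a composition of smooth maps into $\mathcal X^m$, which is the first assertion; since this holds for every $m$, one also gets the claimed $C^2_uC^{2+m}_v$ regularity of the family.

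Finally, for \eqref{eq:Kerr-approx} I would specialise to $a=0$: there $g_{M,0}=g_M$ is Schwarzschild, $i_R(S^2)$ is an honest round symmetry sphere, $Y=-\partial_r$ is its standard ingoing null normal, and $x[g_M,i_R,Y]=\mathfrak s_{M,R}$ — which is in fact the \emph{definition} of $\mathfrak s_{M,R}$, its explicit form being \eqref{eq:Schw-sphere-data}--\eqref{eq:Schw-sphere-data-high}. Smoothness, hence continuity, of $a\mapsto\mathfrak k_{M,a,R}$ into the Banach space $\mathcal X^m$ at $a=0$ then gives
\[
\|\mathfrak k_{M,a,R}-\mathfrak s_{M,R}\|_{\mathcal X^m}=\|\mathfrak k_{M,a,R}-\mathfrak k_{M,0,R}\|_{\mathcal X^m}\xrightarrow[a\to 0]{}0 .
\]

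The only genuine work lies in the first step: one must check that the frame-adaptation in \cref{lem:data-generation} really does reduce to smooth operations with denominators bounded away from $0$ along the whole family, and in particular that it tolerates $Y=-\partial_r$ being merely transverse — not $g_{M,a}$-orthogonal — to the Kerr coordinate spheres when $a\neq0$ (for $a\neq 0$ the $dr\,d\varphi$-term in \eqref{eq:Kerr-2} makes $g_{M,a}(\partial_r,\partial_\varphi)\neq0$, so $\partial_r$ is not in the normal plane of $i_R(S^2)$). Once \cref{lem:data-generation} is set up to accept such $\underline L$ and to depend smoothly on its arguments — which is exactly what is being invoked when one writes $\mathfrak k_{M,a,R}=x[g_{M,a},i_R,Y]$ — the remainder is bookkeeping: tracking that the $\mathcal X^m$-norm, a finite sum of Sobolev norms on the compact sphere $S^2$, depends continuously on the (smoothly varying) components.
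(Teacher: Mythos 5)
Your proposal follows the same route as the paper's (very terse) proof: both reduce the claim to the smooth dependence of the generated sphere data on $(g,i,\underline L)$ asserted in \cref{lem:data-generation}, combined with the observation that $g_{M,a}$, $i_R$ and $Y=-\partial_r$ vary smoothly on the fixed manifold $\mathcal M_*$ and that the $a=0$ case recovers $\mathfrak s_{M,R}$ by definition. The extra detail you supply on why the data-generation map is smooth (jets of $g$ along $i$, frame formulas, denominators built from $\det\slashed g$ and $g(L,\underline L)$ bounded away from zero) is sound and merely fills in what the paper leaves implicit.

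The subtlety you flag at the end is real and worth more than the passing treatment you and the paper both give it: \cref{lem:data-generation} as stated requires $\underline L$ to be \emph{normal} to $S$, but $Y=-\partial_r$ fails this when $a\neq 0$, since $g_{M,a}(\partial_r,\partial_\varphi)=-a\sin^2\vartheta\neq 0$ on $i_R(S^2)$ — so $Y$ is null and transverse but not in the normal bundle of the Kerr coordinate sphere. Thus $\mathfrak k_{M,a,R}=x[g_{M,a},i_R,Y]$ is not literally an instance of \cref{lem:data-generation}, and the hand-off to that lemma in the proof (yours and the paper's) has a small gap. The fix — either extend \cref{lem:data-generation} to accept transverse null $\underline L$ and internally select the normal null direction on the same side, or replace $Y$ by a smoothly $(M,a,R)$-dependent null normal that reduces to $-\partial_r$ at $a=0$ — preserves smooth dependence, hence \eqref{eq:Kerr-approx}, so the lemma's conclusion is unaffected; but you were right to single this out as the one step that is genuine work rather than bookkeeping, and it would strengthen the writeup to actually carry it out rather than assume it away.
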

    \begin{proof}
        The metrics $g_{M,a}$ are defined on the fixed smooth manifold $\mathcal M_*$. By inspection of \eqref{eq:Kerr-2}, $g_{M,a}$ varies smoothly in $M$ and $a$. Therefore, the smooth dependence of $\mathfrak k_{M,a,R}$ on the parameters and \eqref{eq:Kerr-approx} follow from the smooth dependence of the sphere data generated by $(g,i,\underline L)$ on $g$, $i$, and $\underline L$; see \cref{lem:data-generation}. 
    \end{proof}

We conclude this section with several remarks.

\begin{rk}
   As was already mentioned, the Kerr family is stationary. Defining $i_R(\vartheta,\varphi)=(v,R,\vartheta,\varphi)$ for any $v\in\Bbb R$ leads to the same sphere data. 
\end{rk}

\begin{rk}
    We always take the Kerr axis to point along the poles of the fixed identification of the Kerr coordinate spheres with the usual unit sphere. 
\end{rk}

\begin{rk}
    \label{rk:BL-conformal-factor} The induced metric $\slashed g{}_{M,a,R}$ in $\mathfrak k_{M,a,R}$ is not conformal to the round metric $\gamma$ (defined relative to the Kerr angular coordinates). For this reason we have slightly modified the setup in \cref{sec:Chr-seed-data} by imposing \eqref{eq:traceless} instead of simply $d\mu_{\hat{\slashed g}}=d\mu_\gamma$ as in \cite[Chapter 2]{Christo09}. See already \cref{lem:mathfrak-h} below. 
\end{rk}

    \begin{rk}
      The induced metric  $\slashed g{}_{M,a,R}$ is given in Kerr angular coordinates by 
        \begin{equation}
            \slashed{g}_{M,a,R}= \Sigma\,d\vartheta^2+ \rho^2\sin^2\vartheta \,d\varphi^2.
        \end{equation}
            To show that this extends smoothly over the poles relative to the smooth structure defined by the Kerr angular coordinates, we note the identity 
            \begin{equation}
                \Sigma\,d\vartheta^2+ \rho^2\sin^2\vartheta \,d\varphi^2 = \Sigma(d\vartheta^2+\sin^2\vartheta\,d\varphi^2)+a^2\left(1+\frac{2Mr}{\Sigma}\right)\sin^4\vartheta\,d\varphi^2. \label{eq:smooth-ext}
            \end{equation}
            Now $\sin^2\vartheta\,d\varphi$ is a globally defined smooth 1-form on $S^2$ since it is the $\gamma$-dual of the globally defined vector field $\partial_\varphi$, so the right-hand side of \eqref{eq:smooth-ext} can be extended smoothly over the poles. 
    \end{rk}

\subsection{Perturbative characteristic gluing}

Since the characteristic gluing results of \cite{ACR1,ACR2,ACR3,Czimek2022-cl} pass through linear theory, the conserved charges in Minkowski space play an important role. In \cref{sec:conserved-charges}, we give the definition of conserved charges. In \cref{sec:CR}, we state the main result of \cite{Czimek2022-cl} in the form which we will directly apply it. 

\subsubsection{Conserved charges}
\label{sec:conserved-charges}

\begin{defn}[Spherical harmonics]
    For $\ell\in\Bbb N_{0}$ and $m=-\ell,\dotsc,\ell$, let $Y^\ell_m$ denote the standard real-valued spherical harmonics on the unit sphere $(S^2,\gamma)$. We also define the electric and magnetic 1-form spherical harmonics by 
    \begin{equation*}
        E^\ell_m\doteq  -\frac{1}{\sqrt{\ell(\ell+1)}}\slashed\nabla Y^\ell_m\quad \text{and}\quad H^\ell_m\doteq \frac{1}{\sqrt{\ell(\ell+1)}}{}^*\slashed\nabla Y^\ell_m
    \end{equation*}
    for $\ell\ge 1$ and $|m|\le \ell$. By a standard abuse of notation, we will use the same symbol for the vector-valued spherical harmonics, with the understanding that $\gamma$ is used to raise the index. 
\end{defn}

\begin{defn}[Linearly conserved charges]\label{def:charges} Let $x$ be $C^2$ sphere data and define the 1-form $\mathbf B$ and scalar function $\mathbf m$ by
    \begin{align*} \mathbf B&\doteq \frac{\phi^3}{2\Omega^2}\left(\slashed \nabla(\Omega\tr\chi)+\Omega\tr\chi (\eta-2\slashed\nabla\log\Omega)\right) \text{ and}\\
    \mathbf m&\doteq \phi^3\left(K+\tfrac 14 \tr\chi\tr\underline\chi\right)-\phi\adiv\mathbf B.
    \end{align*} The conformal factor $\phi$ is defined as the unique positive function on $S^2$ such that $d\mu_{\slashed g}=\phi^2 d\mu_\gamma$, where $\gamma$ is the distinguished choice of round metric on $S$.
    Then the charges $\mathbf E, \mathbf P,\mathbf L$, and $\mathbf G$ (where the latter three are vectors in $\Bbb R^3$ indexed by $m\in\{-1,0,1\}$) are defined by
    \begin{align*}
    \mathbf E &\doteq \mathbf m^{\ell = 0}\\
    \mathbf P&\doteq \mathbf m^{\ell = 1}\\
    \mathbf L&\doteq \mathbf B^{\ell =1,H}\\
    \mathbf G&\doteq\mathbf B^{\ell =1,E}.
    \end{align*}
    Here the modes are defined by 
    \begin{align*}
       f^{\ell =0} &\doteq \int_{S^2}f \, Y^{0}_0 \,d\mu_\gamma,& (f^{\ell=1})^m &\doteq \int_{S^2}f \, Y^{1}_m \,d\mu_\gamma,\\
       ( X^{\ell = 1, E})^m &\doteq \int_{S^2}\gamma(X,E^1_m)\,d\mu_\gamma,& ( X^{\ell = 1, H})^m &\doteq \int_{S^2}\gamma(X,H^1_m)\,d\mu_\gamma.
    \end{align*}
\end{defn}

\subsubsection{Czimek--Rodnianski obstruction-free perturbative characteristic gluing}
\label{sec:CR}

The following theorem is a combination of \cite[Theorem 3.2]{ACR2}, \cite[Theorem 2.9]{Czimek2022-cl}, and Remark (5) after Theorem 2.9 in \cite{Czimek2022-cl}. 

\begin{thm}[Czimek--Rodnianski obstruction-free characteristic gluing]\label{thm:CR}
For any $C_\mathbf E>0$ and integer $m\ge 0$, there exist constants $C_*>0$ and $\ve_0>0$ such that the following holds. Let $\underline x$ be ingoing null data on an ingoing cone $\underline C=[-\frac{1}{100},\frac{1}{100}]_u\times S^2$ solving the null structure equations and Bianchi identities, and $x_2$ be $C^2_u C^{2+m}_{v}$ sphere data. Let $x_1$ be the sphere data in $\underline x$ corresponding to $u=0$. Let 
\begin{equation*}
    (\Delta \mathbf E,\Delta\mathbf P,\Delta\mathbf L,\Delta\mathbf G)\doteq (\mathbf E,\mathbf P,\mathbf L,\mathbf G)(x_2)- (\mathbf E,\mathbf P,\mathbf L,\mathbf G)(x_1)
\end{equation*}
be the difference of the conserved charges of $x_2$ and $x_1$. If the data sets satisfy the smallness condition
\begin{equation}
    \|\underline x-\mathfrak m\|_{\mathcal X^+(\underline C)}+\|x_2-\mathfrak m_2\|_{\mathcal X^m}<\ve_\flat\label{eq:CR-cond-4}
\end{equation} 
for some $0<\ve_\flat<\ve_0$, where $\mathfrak m$ is reference Minkowski null data\footnote{That is, reference Minkowski sphere data defined along the ingoing cone $\underline{C}$. See \cite{ACR2} for details.} and $\mathfrak m_2$ is reference Minkowski sphere data, and the following ``coercivity'' conditions on the charge differences
\begin{align}
\label{eq:CR-cond-1}    \Delta\mathbf E &> C_\mathbf E \ve_\flat, \\
 \label{eq:CR-cond-2}     |\Delta\mathbf L|&< \ve^2_\flat, \text{and} \\
  \label{eq:CR-cond-3}    |\Delta \mathbf P|+|\Delta\mathbf G| &< C_*\Delta \mathbf E,
\end{align}
then there is a solution $x\in  \mathcal X(C)$ of the null structure equations along a null hypersurface $C=[1,2]_v\times S^2$ such that 
\begin{equation}
    x(1)=x_1' ,\quad x(2)=x_2,\label{eq:matching-1}
\end{equation}
and 
\begin{equation*}
    \|x-\mathfrak m\|_{\mathcal X^{\mathrm{h.f.}}(C)}+\| x_1'-\mathfrak m_1\|_{\mathcal X}\les \|\underline x-\mathfrak m\|_{\mathcal X^+(\underline C)}+\|x_2-\mathfrak m_2\|_{\mathcal X^m}.
\end{equation*}
The sphere data $x_1'$ is 
obtained by applying a sphere diffeomorphism and a transversal sphere perturbation to $x_1$ inside of $\underline C$. See \cite{ACR2, Czimek2022-cl} for the precise definitions of these terms.
\end{thm}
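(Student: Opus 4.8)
The statement is a repackaging of three results already in the literature --- \cite[Theorem 3.2]{ACR2}, \cite[Theorem 2.9]{Czimek2022-cl}, and Remark (5) following the latter --- so the plan is not to reprove anything, but to recall precisely what each provides, check that the hypotheses collected in \eqref{eq:CR-cond-4}--\eqref{eq:CR-cond-3} feed into them, and verify that their conclusions assemble into the matching conditions \eqref{eq:matching-1} and the displayed $\mathcal X^{\mathrm{h.f.}}$-estimate. The two substantive points are the translation of the coercivity hypotheses into the admissibility conditions of Czimek--Rodnianski, and the bookkeeping of the high-frequency norm on the output.

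First I would recall the structure of \emph{perturbative} characteristic gluing from \cite[Theorem 3.2]{ACR2}: for sphere data sufficiently close to Minkowski, one can solve the null structure equations and Bianchi identities along $C=[1,2]_v\times S^2$ interpolating between the sphere $x_1'$ --- obtained from $\underline x$ by a sphere diffeomorphism and a transversal sphere perturbation inside $\underline C$ --- and the sphere $x_2$, \emph{provided} the ten linearly conserved charges $(\mathbf E,\mathbf P,\mathbf L,\mathbf G)$ of $x_1'$ and of $x_2$ coincide; equivalently, $x_1'$ can always be glued to $x_2$ plus the unique element of the ten-dimensional obstruction space that equalizes the charges, with the solution controlled in $\mathcal X(C)$ by the $\mathcal X^+$- and $\mathcal X^m$-sizes of the input. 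This reduces the theorem to arranging that the charge difference $(\Delta\mathbf E,\Delta\mathbf P,\Delta\mathbf L,\Delta\mathbf G)$ is absorbed.

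The mechanism for absorbing it is \cite[Theorem 2.9]{Czimek2022-cl}: an appropriately chosen high-frequency seed data set on the outgoing cone adjusts the ten charges at the nonlinear level. The crucial structural fact is that, via Raychaudhuri's equation, such perturbations can only \emph{increase} the Hawking-mass charge $\mathbf E$, and that the induced changes in $\mathbf P$, $\mathbf L$, $\mathbf G$ are quantitatively dominated by the change in $\mathbf E$, with the $\mathbf L$-change of still higher order --- which is exactly why the hypotheses are posed as \eqref{eq:CR-cond-1}--\eqref{eq:CR-cond-3}: \eqref{eq:CR-cond-1} says the required gain in $\mathbf E$ beats the background scale $\ve_\flat$, \eqref{eq:CR-cond-3} says the required changes in $\mathbf P$ and $\mathbf G$ are controlled by that gain, and \eqref{eq:CR-cond-2} is the super-smallness of the required change in $\mathbf L$. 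I would check that these translate one-for-one into the admissibility conditions of \cite[Theorem 2.9]{Czimek2022-cl} after absorbing the constants $C_\mathbf E$ and $C_*$ into theirs, read off the $\mathcal X^{\mathrm{h.f.}}$-bound on the glued solution from the high-frequency estimates there, and invoke Remark (5) after that theorem to upgrade the $C^2$ statement to the $C^2_uC^{2+m}_v$ statement (with the constants now allowed to depend on $m$, as in the displayed theorem). Assembling: given the hypotheses, apply the Czimek--Rodnianski construction to produce high-frequency bridge data that makes the charges of the perturbed $x_1$ match those of $x_2$, then apply \cite[Theorem 3.2]{ACR2} to obtain the interpolating solution $x\in\mathcal X(C)$ with $x(1)=x_1'$ and $x(2)=x_2$; the composite estimate follows since $\mathcal X^{\mathrm{h.f.}}$ dominates the output of both steps while $\ve_\flat$ controls the input of both.

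The main obstacle I anticipate is purely organizational: matching the exact normalization of the charges $(\mathbf E,\mathbf P,\mathbf L,\mathbf G)$ and the precise form of the coercivity inequalities in the present formulation against those in \cite{Czimek2022-cl} (which carry their own constants and conventions for the cone parameters, here fixed to $[-\tfrac{1}{100},\tfrac{1}{100}]_u$ and $[1,2]_v$), and checking that the sphere diffeomorphism and transversal perturbation producing $x_1'$ from $x_1$ --- which are small in $\mathcal X$ --- do not spoil the strict inequalities \eqref{eq:CR-cond-1}--\eqref{eq:CR-cond-3}; the latter holds with room to spare precisely because those inequalities are strict and scale-separated. No genuinely new estimate is needed.
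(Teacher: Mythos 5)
Your proposal is correct and takes essentially the same route as the paper: the paper does not prove this statement but simply cites \cite[Theorem 3.2]{ACR2}, \cite[Theorem 2.9]{Czimek2022-cl}, and Remark~(5) after the latter, and your sketch is an accurate account of how those three ingredients assemble into the stated result.
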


\begin{rk}
    The matching condition \eqref{eq:matching-1} is to order $C^{2+m}$ in directions tangent to the cone. 
\end{rk}

Since all hypotheses in \cref{thm:CR} are open conditions, we immediately have:

\begin{cor}\label{cor:CR}
    If the sphere data set $x_2$ satisfies the hypotheses of \cref{thm:CR}, there exists an $\ve_*>0$ such that if $\tilde x_2$ is another sphere data set such that 
    \[\|\tilde x_2-x_2\|_{\mathcal X^m} <\ve\]
    for some $0\le \ve < \ve_*$, then the conclusion of the theorem holds for $\tilde x_2$ in place of $x_2$.
\end{cor}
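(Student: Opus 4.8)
The plan is to check that every hypothesis appearing in \cref{thm:CR} cuts out an open subset of the space of $C^2_uC^{2+m}_v$ sphere data sets in the $\mathcal X^m$ topology, with the ingoing data $\underline x$ (and hence $x_1$ and the constants $C_\mathbf E, C_*, \ve_0$) held fixed. Concretely, if $x_2$ satisfies the hypotheses, fix a constant $C_\mathbf E>0$, the resulting $C_*, \ve_0$, a parameter $\ve_\flat\in(0,\ve_0)$, and ingoing data $\underline x$ for which \eqref{eq:CR-cond-4} and \eqref{eq:CR-cond-1}--\eqref{eq:CR-cond-3} all hold. I claim these same choices work verbatim for any $\tilde x_2$ that is sufficiently $\mathcal X^m$-close to $x_2$, and then \cref{thm:CR} applied to $\tilde x_2$ gives the conclusion of the corollary (including the quantitative estimate, now with $\tilde x_2$ on the right-hand side).

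For the smallness condition \eqref{eq:CR-cond-4}: since it holds strictly, $\|\underline x-\mathfrak m\|_{\mathcal X^+(\underline C)}+\|x_2-\mathfrak m_2\|_{\mathcal X^m}=\eta_0<\ve_\flat$, so by the triangle inequality any $\tilde x_2$ with $\|\tilde x_2-x_2\|_{\mathcal X^m}<\ve_\flat-\eta_0$ still satisfies \eqref{eq:CR-cond-4} with the \emph{same} $\ve_\flat$. For the coercivity conditions, the key point is that the linearly conserved charges $(\mathbf E,\mathbf P,\mathbf L,\mathbf G)$ of \cref{def:charges} depend continuously on the underlying $C^2$ sphere data in the $\mathcal X^m$ norm ($m\ge 0$): unwinding \cref{def:charges}, they are integrals over $S^2$ of expressions built polynomially (and rationally, via $1/\Omega^2$) from the components $\Omega,\slashed g,\Omega\tr\chi,\eta,\tr\underline\chi$ and a bounded number of their angular derivatives (the Gauss curvature $K$ and the operator $\adiv$ contributing at most two derivatives of $\slashed g$), while the conformal factor $\phi=(d\mu_{\slashed g}/d\mu_\gamma)^{1/2}$ depends smoothly on $\slashed g$ as long as $\slashed g$ remains Riemannian—which it does for $\tilde x_2$ near $x_2$. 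Since $x_1$ is fixed, the charge differences $\Delta\mathbf E(\tilde x_2),\Delta\mathbf P(\tilde x_2),\Delta\mathbf L(\tilde x_2),\Delta\mathbf G(\tilde x_2)$ converge to those of $x_2$ as $\tilde x_2\to x_2$ in $\mathcal X^m$.

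Each of \eqref{eq:CR-cond-1}, \eqref{eq:CR-cond-2}, \eqref{eq:CR-cond-3} is then a strict inequality between continuous functions of $\tilde x_2$ (with $\ve_\flat$ fixed, the quantities $C_\mathbf E\ve_\flat$ and $\ve^2_\flat$ are constants), so each persists under a sufficiently small $\mathcal X^m$ perturbation; let $\ve'_*>0$ be a common threshold below which all three continue to hold. Setting $\ve_*\doteq\min\{\ve_\flat-\eta_0,\ve'_*\}$, any $\tilde x_2$ with $\|\tilde x_2-x_2\|_{\mathcal X^m}<\ve_*$ satisfies all hypotheses of \cref{thm:CR} with the unchanged ingoing data $\underline x$ and unchanged constants, and the corollary follows. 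The only step that requires any care is the continuity of the charge functionals, and that is routine once one writes out \cref{def:charges}; I do not anticipate a genuine obstacle.
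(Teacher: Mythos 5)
Your proposal is correct and is exactly the argument the paper intends: the paper's ``proof'' is literally the one-line remark ``Since all hypotheses in \cref{thm:CR} are open conditions, we immediately have,'' and your write-up is a faithful unpacking of that openness assertion (triangle inequality for \eqref{eq:CR-cond-4}, continuity of the charge functionals of \cref{def:charges} in the $\mathcal X^m$ topology for \eqref{eq:CR-cond-1}--\eqref{eq:CR-cond-3}, holding $\underline{x}$, $C_{\mathbf E}$, $C_*$, $\ve_0$, and $\ve_\flat$ fixed throughout).
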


\section{Proofs of the main gluing theorems}

\subsection{Gluing an almost-Schwarzschild sphere to a round Schwarzschild sphere with a larger mass}

In this subsection, we prove the main technical lemma of our paper. In essence, we show how to decrease the mass of a Schwarzschild sphere (going backwards in time) by an arbitrary amount, with an arbitrary small error. 

\begin{prop}\label{prop:Schw-perturb-Schw} Given any $0 \leq M_* < M$, $R>0$, integer $m\ge 0$, and any  $\ve_\sharp >0$, there exists a $\delta>0$ and null data $x$ on $C_1^{[0,1]}=[0,1]\times S^2$ solving the null structure equations and Bianchi identities such that 
\begin{equation}
    \mathfrak b_\delta(x(1))= \mathfrak s_{M,R}\label{eq:boost-2}
\end{equation}
and 
\begin{equation}
  \label{eq:boost-est-1}  \|\mathfrak b_\delta(x(0))-\mathfrak s_{M_*,R}\|_{\mathcal X^m} <\ve_\sharp,
\end{equation}
where $\mathfrak b$ is the boost operation defined in \cref{def:boost} and $\mathcal X^m$ is the sphere data norm appearing in \cref{thm:CR}. 
\end{prop}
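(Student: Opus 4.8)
The plan is to realize $x$ as the Christodoulou seed data of \cref{lem:seed-data} with a carefully chosen shear profile $\hat{\slashed g}$ and carefully chosen initial values at $S_1 = S_v|_{v=1}$, and then run the boost \eqref{eq:boost-1} with parameter $\lambda = \delta$. The mnemonic is that after boosting the cone by $\delta$ one should think of $x$ as a Christodoulou short pulse data set on $[0,\delta]\times S^2$ fired \emph{backwards}, with shear amplitude $\delta^{1/2}$ and $\tr\chi = O(\delta)$ at the top; cf.\ \cref{sec:short-pulse-intro}. Concretely, I would first fix the endpoint data at $v=1$ so that $\mathfrak b_\delta(x(1)) = \mathfrak s_{M,R}$: reading off \eqref{eq:Schw-sphere-data}, the unboosted top sphere must have $\slashed g(1) = R^2\gamma$, $\Omega(1)=1$, $\eta_1 = 0$, $\hat{\underline\chi}_1 = 0$, $\underline\alpha_1 = 0$, $\hat\chi_1 = 0$, while $\tr\chi_1 = \tfrac{2}{R}(1-\tfrac{2M}{R})\delta$ and $\tr\underline\chi_1 = -\tfrac{2}{R}\delta^{-1}$; the higher-order pieces $\hat D^j\alpha_1$ vanish. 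Note $\tr\underline\chi_1<0$ encodes that $S_2$ is non-antitrapped, and the sign of $\tr\chi_1$ is controlled by $1-2M/R$; since we only assume $S_2$ is non-antitrapped we must allow both signs but this does not affect the scheme.

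Next I would choose the shear. Since $\hat\chi = \tfrac12\phi^2 D\hat{\slashed g}$ by \eqref{eq:first-var-seed}, I want $D\hat{\slashed g}$ to be a sum of two bumps, one supported away from the north pole on $[0,\tfrac12]_v$ and one supported away from the south pole on $[\tfrac12,1]_v$ (this is the Poincaré--Hopf point in the footnote of \cref{sec:proof-outline}), normalized so that the $v$-integral of $e = \tfrac12|\hat\chi|^2 = \tfrac18\hat{\slashed g}^{AB}\hat{\slashed g}^{CD}\partial_v\hat{\slashed g}_{AC}\partial_v\hat{\slashed g}_{BD}$ over $[0,1]$ achieves a prescribed profile. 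The amplitude of $D\hat{\slashed g}$ should be taken $O(\delta^{1/2})$ so that $e = O(\delta)$ pointwise — this is exactly Christodoulou's $\delta$-scaling, and it makes $\hat\chi$ itself $O(\delta^{1/2})$. Crucially, $\hat{\slashed g}$ is chosen to satisfy the trace-free condition \eqref{eq:traceless}, which I will arrange via the explicit construction alluded to in \cref{lem:mathfrak-h} (a volume-form-preserving family). The conformal factor $\phi$ then solves Raychaudhuri $\partial_v^2\phi + e\phi = 0$ backwards from $(\phi_1,\partial_v\phi_1) = (R, 2R\tr\chi_1) = (R, O(\delta))$; since $e = O(\delta)$ on a $v$-interval of length $1$, $\phi$ stays within $O(\delta)$ of the constant $R$ on all of $[0,1]$, so $v_0 < 0$ and $x$ exists on the whole cone. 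The key free parameter is: by tuning the amount of shear (the $L^2_v$ mass of $e$) I can make the Hawking mass of $S_*=S_0$ land arbitrarily close to $M_*$, using the monotonicity built into Raychaudhuri; this is the vacuum analogue of how charge was glued in \cite{KU22}, and is \cref{lem:construction-of-lambda}.

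Then I would step through the null structure equations and Bianchi identities from $S_1$ downward exactly as in \cite[Chapter 2]{Christo09}, tracking everything in powers of $\delta$: with $\hat\chi = O(\delta^{1/2})$, $\tr\chi = O(\delta)$, $\tr\underline\chi = O(\delta^{-1})$, one gets a Christodoulou-type hierarchy (this is \cref{lem:bottom-sphere-estimates}) in which at $S_*$ the unboosted quantities $\Omega\tr\chi, \hat\chi, \eta, \omega, D\omega$ are $O(\delta)$ or smaller, $\slashed g$ is $R^2\gamma + O(\delta)$, $\Omega = 1 + O(\delta)$, $\alpha, \hat D^j\alpha = O(\delta)$, while $\Omega\tr\underline\chi = O(\delta^{-1})$ and $\hat{\underline\chi},\underline\alpha = O(1)$ (possibly up to logarithms — but finitely many $v$- and angular derivatives suffice). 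I also need pointwise smallness with \emph{arbitrarily many angular derivatives}, so the seed $\hat{\slashed g}$ and initial data should be chosen smooth and uniformly bounded with all angular derivatives independent of $\delta$; the transport estimates then propagate this. Finally I apply the boost $\mathfrak b_\delta$: from \cref{def:boost}, this multiplies $\Omega\tr\chi, \hat\chi, \omega, \alpha$ and the higher-order $\alpha$'s by $\delta^{-1}$ or smaller negative powers, and $\Omega\tr\underline\chi, \hat{\underline\chi}, \underline\alpha$ etc.\ by positive powers of $\delta$. Under the $\delta$-hierarchy all boosted components of $\mathfrak b_\delta(x(0))$ converge to those of $\mathfrak s_{M_*,R}$ as $\delta\to 0$ — e.g.\ $\delta^{-1}\cdot\Omega\tr\chi(0) \to \tfrac2R(1-\tfrac{2M_*}{R})$ because the Hawking mass was tuned to $M_*$, $\delta^{-1}\hat\chi(0)\to 0$, $\delta\cdot\Omega\tr\underline\chi(0)\to -\tfrac2R$, $\delta\hat{\underline\chi}(0)\to 0$, $\delta^2\underline\alpha(0)\to 0$, and so on. Choosing $\delta$ small enough makes $\|\mathfrak b_\delta(x(0)) - \mathfrak s_{M_*,R}\|_{\mathcal X^m} < \ve_\sharp$, which is \eqref{eq:boost-est-1}.

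\textbf{Main obstacle.} The delicate point is the Hawking-mass tuning: I must show that as the total shear ($\int_0^1 e\,dv$, suitably integrated against the right weight) ranges over an interval, the Hawking mass of the boosted bottom sphere sweeps continuously and monotonically through a neighborhood of every value in $(0,M)$ — using precisely the sign in Raychaudhuri's equation — and simultaneously that \emph{all other} sphere-data components at $S_*$ remain within the $\delta$-hierarchy so that they are killed by the boost. In other words, the hard part is proving \cref{lem:construction-of-lambda} and the hierarchy \cref{lem:bottom-sphere-estimates} together, i.e.\ that one genuinely free functional degree of freedom (the shear amplitude/profile) suffices to hit the mass target while everything else is forced to be negligible. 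Keeping the estimates uniform in the number of angular derivatives, and handling the Poincaré--Hopf obstruction by splitting the pulse into a north-avoiding and a south-avoiding piece, are the remaining technical nuisances.
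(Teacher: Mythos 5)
Your proposal follows the paper's proof very closely: same choice of endpoint data consistent with a $\delta$-boost of $\mathfrak s_{M,R}$, same two-pulse shear profile to work around Poincar\'e--Hopf, same $\delta^{1/2}$-amplitude scaling so that $e=O(\delta)$, same Raychaudhuri-driven mass tuning (the paper's \cref{lem:construction-of-lambda}), same stepping through the null structure/Bianchi system to get a $\delta$-weighted hierarchy (\cref{lem:bottom-sphere-estimates}), and the final boost. The proposal is in essence the paper's argument.

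There are two points where your stated estimates would not actually close, and where the paper does something sharper. First, the stated bottom-sphere bounds ``$\hat\chi = O(\delta^{1/2})$, $\alpha,\hat D^j\alpha = O(\delta)$, $\omega,D\omega = O(\delta)$'' are insufficient: \cref{def:boost} multiplies $\hat\chi$ by $\delta^{-1}$, $\alpha$ by $\delta^{-2}$, $\hat D^j\alpha$ by $\delta^{-2-j}$, and $D^{j}\omega$ by $\delta^{-1-j}$, so $O(\delta)$-type smallness produces $O(\delta^{-1-j})$ after the boost and certainly does not vanish. What actually makes the argument work (and is explicit in the paper) is that these quantities vanish \emph{identically} at $v=0$: one takes $\Omega^2\equiv 1$ (so $\omega$ and all $D^j\omega$ vanish), and the two pulses are compactly supported in $(0,\tfrac12)$ and $(\tfrac12,1)$, so $\hat\chi(0)=0$ and, via the second variation formula, $\alpha(0)=\hat D^j\alpha(0)=0$. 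You do choose compactly supported bumps, so the fix is available, but the $O(\delta)$ estimates you wrote down are the wrong thing to aim for. Second, the tuning parameter $\lambda$ matches $\tr\chi(0,\cdot)$ to the Schwarzschild value at a \emph{single} angle $\theta_0$; for the whole bottom sphere to be $\ve_\sharp$-close to $\mathfrak s_{M_*,R}$ you also need the identity $\slashed\nabla\int_0^1 e\,dv=0$, which the paper arranges via the partition-of-unity condition $f_1^2+f_2^2\equiv 1$, and which gives the improved bound $|\slashed\nabla^j(\tr\chi(0)-\tfrac{2}{R}\delta(1-\tfrac{2M_*}{R}))|\lesssim_j\delta^2$ \emph{uniformly in angle}. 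You allude to choosing the pulse ``normalized so that the $v$-integral of $e$ achieves a prescribed profile'' and to ``one free degree of freedom suffices,'' but the angular constancy of $\int_0^1 e\,dv$ is exactly the mechanism that makes one real parameter enough, and it needs to be built into the seed data rather than inferred.
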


The proof of the proposition is given at the end of this subsection. We first give a general construction of seed data $\hat{\slashed g}$ compatible with the hypotheses of \cref{lem:seed-data}.

\begin{lem}\label{lem:mathfrak-h}
    Let $C$ be as in \cref{sec:Chr-seed-data}. Let $\tilde \gamma$ be a Riemannian metric on $S^2$. There exists an explicitly defined smooth assignment $\tilde\gamma\mapsto \mathfrak h$, where $\mathfrak h$ is a traceless $(1,1)$-S-tensor field along $C$, such that for any $\lambda\in\Bbb R$,  \begin{equation}
         \hat{\slashed g}{}_{AB}\doteq \tilde{\gamma}{}_{AC}\exp(\lambda\mathfrak h)^C{}_B \label{defn:g-hat}
     \end{equation}
     defines a Riemannian metric for each $v$, satisfies condition \eqref{eq:traceless}, and $\hat{\slashed g}(1)=\tilde\gamma$. Here $\tilde\gamma$ is defined along $C$ according to $D\tilde\gamma=0$. We have
     \begin{equation}
    \partial_v \hat{\slashed g}{}_{AB} = \lambda \hat{\slashed g}{}_{AC} \partial_v \mathfrak h^C{}_B  \label{eq:D-g-slash-hat}
\end{equation}
and the inverse metric is given by 
\begin{equation}
    \hat{\slashed g}{}^{AB} = \left(\hat{\slashed g}^{-1}\right)^{AB}=({\tilde\gamma}^{-1})^{AC} \exp(-\lambda\mathfrak h)^B{}_C = ({\tilde\gamma}^{-1})^{BC} \exp(-\lambda\mathfrak h)^A{}_C.\label{eq:matrix-inverse}
\end{equation}
\end{lem}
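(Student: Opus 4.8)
The plan is to reverse-engineer the traceless $(1,1)$-tensor $\mathfrak h$ from the condition \eqref{eq:traceless}, which we want $\hat{\slashed g}$ to satisfy \emph{for every} $\lambda$. First I would unwind what \eqref{eq:traceless} means for the ansatz \eqref{defn:g-hat}: writing $\hat{\slashed g}{}_{AB}=\tilde\gamma_{AC}\exp(\lambda\mathfrak h)^C{}_B$ with $D\tilde\gamma=0$, one has $\partial_v\hat{\slashed g}{}_{AB}=\tilde\gamma_{AC}\,\partial_v(\exp(\lambda\mathfrak h))^C{}_B$. Using the matrix inverse formula $\hat{\slashed g}{}^{AB}=(\tilde\gamma^{-1})^{BC}\exp(-\lambda\mathfrak h)^A{}_C$ (which follows from $\exp(\lambda\mathfrak h)$ being the matrix whose inverse is $\exp(-\lambda\mathfrak h)$, and $\tilde\gamma$ symmetric), a direct computation gives
\[
\tr_{\hat{\slashed g}}D\hat{\slashed g}=\hat{\slashed g}{}^{AB}\partial_v\hat{\slashed g}{}_{AB}=\tr\!\left(\exp(-\lambda\mathfrak h)\,\partial_v\exp(\lambda\mathfrak h)\right)=\partial_v\log\det\exp(\lambda\mathfrak h)=\lambda\,\partial_v\tr\mathfrak h,
\]
using Jacobi's formula. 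Hence \eqref{eq:traceless} holds for all $\lambda$ precisely when $\partial_v\tr\mathfrak h=0$; since we also want $\mathfrak h$ traceless pointwise, it suffices to build $\mathfrak h$ with $\tr\mathfrak h\equiv 0$. So the content of the lemma is really just: produce an explicit smooth traceless $(1,1)$-tensor $\mathfrak h(v)$ along $C$ with $\mathfrak h(1)=0$ (to get $\hat{\slashed g}(1)=\tilde\gamma$), depending smoothly on $\tilde\gamma$.

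Next I would record the formal consequences once $\mathfrak h$ is in hand. Formula \eqref{eq:D-g-slash-hat} is immediate from $\partial_v\exp(\lambda\mathfrak h)=\lambda\exp(\lambda\mathfrak h)\,\partial_v\mathfrak h$ (valid because $\mathfrak h$ and $\partial_v\mathfrak h$ need not commute, but the exponential is of $\lambda\mathfrak h$ evaluated along the curve $v\mapsto\mathfrak h(v)$; one should be slightly careful here — the clean identity $\partial_v e^{\lambda\mathfrak h}=\lambda e^{\lambda\mathfrak h}\partial_v\mathfrak h$ requires $[\mathfrak h,\partial_v\mathfrak h]=0$, which will hold if we choose $\mathfrak h(v)=f(v)\mathfrak h_0$ for a fixed traceless matrix $\mathfrak h_0$ and scalar $f$, or more generally if all the $\mathfrak h(v)$ mutually commute). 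For \eqref{eq:matrix-inverse}, since $(\exp(\lambda\mathfrak h))^{-1}=\exp(-\lambda\mathfrak h)$ and $\tilde\gamma$ is symmetric, $\hat{\slashed g}{}^{-1}=\exp(-\lambda\mathfrak h)\tilde\gamma^{-1}$, and transposing using symmetry of $\hat{\slashed g}$ gives the two stated index placements. That $\hat{\slashed g}(v)$ is Riemannian for each $v$ follows because $\exp(\lambda\mathfrak h)$ is conjugate (via $\tilde\gamma^{1/2}$) to the symmetric positive-definite matrix $\exp(\lambda\,\tilde\gamma^{1/2}\mathfrak h\tilde\gamma^{-1/2})$... more simply: $\hat{\slashed g}(X,X)=\tilde\gamma(X,\exp(\lambda\mathfrak h)X)$, and $\tilde\gamma^{-1}\hat{\slashed g}=\exp(\lambda\mathfrak h)$ is $\tilde\gamma$-self-adjoint with positive eigenvalues since $\mathfrak h$ is $\tilde\gamma$-self-adjoint (this is an additional property I should impose on $\mathfrak h$).

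The remaining task — the only place requiring a genuine choice rather than bookkeeping — is the explicit smooth construction $\tilde\gamma\mapsto\mathfrak h$. The natural move is: fix once and for all a smooth compactly-supported bump $f:(0,1)\to\Bbb R$ with $f(1)=0$ (or two bumps, one supported near the north pole and one near the south pole, to match the footnote in the proof outline about moving the zero of $\hat\chi$), fix a smooth vector field or pair of ``directions'' on $S^2$, and let $\mathfrak h(v)$ be $f(v)$ times the traceless part (with respect to $\tilde\gamma$) of a fixed explicit symmetric $(1,1)$-tensor built pointwise and linearly from $\tilde\gamma$ — e.g. $\mathfrak h^A{}_B = f(v)\big(T^A{}_B-\tfrac12(\tr T)\delta^A{}_B\big)$ where $T$ is manufactured from a fixed ambient object (like $\partial_\varphi\otimes d\varphi$ or the $\tilde\gamma$-trace-free part of a coordinate frame). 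Smoothness and smooth dependence on $\tilde\gamma$ are then manifest, $\mathfrak h(1)=0$ since $f(1)=0$, tracelessness is by construction, $\tilde\gamma$-self-adjointness holds since we symmetrized with $\tilde\gamma$, and all the $\mathfrak h(v)$ are scalar multiples of a fixed matrix hence mutually commute, which legitimizes \eqref{eq:D-g-slash-hat}. I expect the main obstacle to be purely expository: stating the construction in a way that is genuinely ``explicit'' and manifestly smooth in $\tilde\gamma$ while also flexible enough to later arrange the pulse supports and the approximate-Schwarzschild conclusion of \cref{prop:Schw-perturb-Schw}; the actual verification of \eqref{eq:traceless}, \eqref{eq:D-g-slash-hat}, \eqref{eq:matrix-inverse} is routine linear algebra with the exponential map as sketched above.
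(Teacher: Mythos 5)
Your proposal is correct and follows essentially the same route as the paper: reduce \eqref{eq:traceless} to $\tr\mathfrak h\equiv 0$ via Jacobi's formula, verify symmetry/positivity/inverse formula as consequences of the $\tilde\gamma$-self-adjointness of $\mathfrak h$, and build $\mathfrak h$ explicitly as a $v$-dependent superposition of two pulses supported in disjoint $v$-intervals and away from the two poles (which also legitimizes \eqref{eq:D-g-slash-hat}, since $\mathfrak h(v)$ and $\mathfrak h(v')$ commute for nearby $v,v'$, exactly the point you flag). The only thing your sketch leaves partly implicit—and which the paper spells out—is the careful pointwise construction of a traceless $\tilde\gamma$-self-adjoint $(1,1)$-tensor via the coordinate-frame tensor $\diag(1,-1)$ in each of two antipodal charts, normalized by its $\tilde\gamma$-norm and weighted by a partition $f_1^2+f_2^2=1$; your suggestion of $\partial_\varphi\otimes d\varphi$ is not quite right since it degenerates at the poles and is not $\tilde\gamma$-self-adjoint, but the surrounding plan is sound.
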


\begin{proof}
    
We first fix some cutoff functions. Let $\chi\in C^\infty_c(0,\tfrac 12)$ be nonnegative and not identically zero. Let $\chi_1\doteq \chi(v)$ and $\chi_2(v) \doteq\chi(v-\tfrac 12)$. Let $p_1$ be the north pole of $S^2$, $p_2$ the south pole, and set $U_i\doteq S^2\setminus\{p_i\}$ for $i=1,2$. Let $f_1\in C^\infty_c(U_1)$ and $f_2\in C^\infty_c(U_2)$ be such that $f_1^2+f_2^2= 1$ on $S^2$. 

Let $(\vartheta_1^1,\vartheta_1^2)$ be a coordinate chart covering $U_1$, $(\vartheta_2^1,\vartheta_2^2)$ be a coordinate chart covering $U_2$, and set
\[\mathring h_i \doteq d\vartheta_i^1\otimes \frac{\partial}{\partial\vartheta_i^1}-d\vartheta_i^2\otimes \frac{\partial}{\partial\vartheta_i^2}\text{ on }U_i.\] As matrices, these tensor fields are given by $\diag(1,-1)$ in the respective coordinate systems. 

We now claim that the symmetric $(0,2)$-tensor fields 
\[ h_{i\,AB}\doteq \tfrac 12\left(\tilde\gamma{}_{AC}\mathring h_i^C{}_B+\tilde\gamma{}_{BC}\mathring h_i^C{}_A\right)\]
are nowhere vanishing on their respective domains of definition. 
This follows from the fact that 
\begin{equation*}
     h_{i\,AA} = \tilde\gamma\left(\frac{\partial}{\partial\vartheta_i^A},\frac{\partial}{\partial\vartheta_i^A}\right),
\end{equation*}
where no summation is implied. Since $\tilde\gamma$ is positive definite, we must at each point have both $ h_{i\,11}$ and $ h_{i\,22}$ nonvanishing, so $ h_i$ is always nonvanishing. Let $ h^\sharp_{i}$ be the $(1,1)$-tensor field obtained by dualizing $ h_i$ with $\tilde\gamma$. We then finally define
\begin{equation}
    \mathfrak h\doteq \chi_1 f_1| h_1|_{\tilde\gamma}^{-1}  h_1^\sharp + \chi_2 f_2| h_2|_{\tilde\gamma}^{-1} h_2^\sharp.\label{eq:h-defn}
\end{equation}
It is clear that $\tr\mathfrak h =0$ and that $\mathfrak h^\flat$ is symmetric, where $\flat$ is taken relative to $\tilde\gamma$. 

We now show that \eqref{defn:g-hat} defines a Riemannian metric. Viewing $\mathfrak h$ as an endomorphism $TS^2\to TS^2$, the power series 
\begin{equation}
    \exp(\lambda\mathfrak h) = \sum_{n=0}^\infty \frac{1}{n!}(\lambda \mathfrak h)^n\label{eq:power-series}
\end{equation}
converges and defines a smooth family of endomorphisms. 

To verify that $\hat{\slashed g}$ is symmetric, we examine \eqref{eq:power-series} term by term:
\begin{align*}
 \tilde\gamma{}_{AC}   \left((\lambda\mathfrak h)^n\right)^C{}_B &=\lambda^n \tilde\gamma{}_{AD_1} \mathfrak h^{D_1}{}_{D_2} \mathfrak h^{D_2}{}_{D_3}\cdots \mathfrak h^{D_{n-1}}{}_{D_n} \mathfrak h^{D_n}{}_B\\
 &=\lambda^n \tilde\gamma{}_{D_1D_2} \mathfrak h^{D_1}{}_{A} \mathfrak h^{D_2}{}_{D_3}\cdots \mathfrak h^{D_{n-1}}{}_{D_n} \mathfrak h^{D_n}{}_B\\
 &=\cdots \\
 &= \lambda^n \tilde\gamma{}_{D_{n-1}D_n}\mathfrak h^{D_1}{}_{D_2}\mathfrak h^{D_2}{}_{D_3}\cdots \mathfrak h^{D_{n-1}}{}_A \mathfrak h^{D_n}{}_B\\
 &=\lambda^n \tilde\gamma{}_{D_{n-1}D_n}\mathfrak h^{D_1}{}_{D_2}\mathfrak h^{D_2}{}_{D_3}\cdots \mathfrak h^{D_{n-1}}{}_B \mathfrak h^{D_n}{}_A\\
 &= \tilde\gamma{}_{BC}   \left((\lambda\mathfrak h)^n\right)^C{}_A,
\end{align*}
where we used the symmetry of $\mathfrak h^\flat$ repeatedly. 
That $\hat{\slashed g}$ is positive definite follows easily from the fact that at the origin of a normal coordinate system for $\tilde\gamma$, $\hat{\slashed g}{}_{AB}$ is the matrix exponential of a symmetric matrix, and hence positive definite. 

To show that \eqref{eq:traceless} is satisfied, we use Jacobi's formula to calculate 
\begin{equation*}
    \det \hat{\slashed g} = \det \tilde\gamma\, \exp(\lambda \tr\mathfrak h) = \det\tilde\gamma
\end{equation*}
relative to any coordinate system, where we used $\tr\mathfrak h =0$. We conclude that the volume form of $\hat{\slashed g}$ satisfies 
\begin{equation}
  d\mu_{\hat{\slashed g}} = d\mu_{\tilde\gamma}.  \label{eq:volume-form}
\end{equation}
Observe that since $D(d\mu_{\tilde\gamma})=0$ by definition of $\tilde{\gamma}$ along $C$, \eqref{eq:volume-form} implies
\[0=D(d\mu_{\hat{\slashed g}})= \tfrac 12 \tr_{\hat{\slashed g}}(D\hat{\slashed g})\,d\mu_{\hat{\slashed g}},\] so $D\hat{\slashed g}$ is $\hat{\slashed g}$-traceless.

To prove \eqref{eq:D-g-slash-hat}, we use the fact that $\mathfrak h(v)$ and $\mathfrak h(v')$ commute for any $v$ and $v'$ sufficiently close to simply differentiate \eqref{defn:g-hat}:
\begin{equation*}
    D\hat{\slashed g}{}_{AB}= \partial_v \hat{\slashed g}{}_{AB} = \gamma_{AC}\exp(\lambda \mathfrak h)^C{}_D \lambda \partial_v \mathfrak h^D{}_B = \lambda \hat{\slashed g}{}_{AC} \partial_v \mathfrak h^C{}_B.  
\end{equation*}
The formula \eqref{eq:matrix-inverse} is immediately seen to hold. 
\end{proof}

\begin{rk}
 By the Poincar\'e--Hopf theorem, the shear $\hat\chi$ must vanish at some point on each $S_v\subset C$. Equivalently, any $\mathfrak h$ for which \eqref{defn:g-hat} satisfies condition \eqref{eq:traceless}, must vanish at some point on each $S_v$. In order to satisfy \eqref{eq:average} below, this zero cannot stay along the same generator of $C$. The simplest solution to this problem is the two-pulse configuration above.  
\end{rk}

With this general construction out of the way, we begin the proof of \cref{prop:Schw-perturb-Schw} in earnest. We specialize now to the case of $\tilde\gamma =\gamma$, the round metric on the unit sphere.

\textbf{Convention.} We now introduce a parameter $\delta>0$ satisfying $0< \delta < \delta_0$, where $\delta_0>0$ is a sufficiently small fixed parameter only depending on $M_*, M, R$ and the fixed choices of $\chi, U_1,U_2, f_1$, and $f_2$. We will further use in this section the notation that implicit constants in $\lesssim, \gtrsim$, and $ \approx$ may depend $M_*, M, R$ and $\chi, U_1,U_2, f_1$, and $f_2$.  We also use the notation $\lesssim_j, \gtrsim_j$, and $ \approx_j$ if the implicit constants in $\lesssim, \gtrsim$, and $ \approx$ depend on an additional parameter $j$.

\begin{lem}
The geometric quantity $e$, defined in \eqref{eq:e-defn}, satisfies 
\begin{equation}
    \slashed\nabla \int_0^1 e\,dv=0 \label{eq:average}
\end{equation}
and 
\begin{equation}
    |\slashed \nabla^j e|\les_j \lambda^2. \label{eq:e-angular-estimate}
\end{equation}
for $j\geq 0$. 
\end{lem}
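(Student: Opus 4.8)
The plan is to compute $e$ in closed form directly from the explicit description of $\hat{\slashed g}$ supplied by \cref{lem:mathfrak-h} (applied with $\tilde\gamma=\gamma$), and to observe that it collapses to an almost trivial expression, from which both \eqref{eq:average} and \eqref{eq:e-angular-estimate} fall out. Since $e$ is built only from $\hat{\slashed g}$ and $\partial_v\hat{\slashed g}$, the first step is to substitute \eqref{eq:D-g-slash-hat} and \eqref{eq:matrix-inverse} into \eqref{eq:e-calc}, which gives $e=\tfrac{\lambda^2}{8}\,\hat{\slashed g}^{CD}\hat{\slashed g}_{EF}\,\partial_v\mathfrak h^E{}_C\,\partial_v\mathfrak h^F{}_D$. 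The second step is to exploit the two structural features of the $\mathfrak h$ from \eqref{eq:h-defn}: writing $N_i\doteq f_i|h_i|_\gamma^{-1}h_i^\sharp$, which is a $v$-\emph{independent} traceless $(1,1)$-tensor field on $S^2$, we have $\mathfrak h=\chi_1 N_1+\chi_2 N_2$, and because $\spt\chi_1$ and $\spt\chi_2$ are disjoint in $v$, on the slab $[0,\tfrac12]\times S^2$ one has $\mathfrak h=\chi_1(v)N_1$ while on $[\tfrac12,1]\times S^2$ one has $\mathfrak h=\chi_2(v)N_2$. Thus on each slab $\mathfrak h(v)$ is a fixed endomorphism field scaled by a function of $v$, so $N_i$ commutes with $\exp(\lambda\mathfrak h)$.

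With this in hand, the key computation is an algebraic collapse: on the first slab $\partial_v\mathfrak h=\chi_1'(v)N_1$, and using the $\gamma$-self-adjointness of $\mathfrak h^\flat$ (hence of $N_1^\flat$) together with the commutativity just noted, the metric-dependent contraction simplifies as $\hat{\slashed g}_{EF}(N_1)^E{}_C(N_1)^F{}_D=(\hat{\slashed g}\,N_1^2)_{CD}$, so that $e=\tfrac{\lambda^2}{8}(\chi_1'(v))^2\,\tr(N_1^2)$, a quantity with no residual $v$-dependence beyond the explicit $(\chi_1')^2$. Next I would simplify $\tr(N_i^2)$: since $h_i^\sharp$ is $\gamma$-self-adjoint, $\tr((h_i^\sharp)^2)=|h_i|_\gamma^2$, and therefore $\tr(N_i^2)=f_i^2$. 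This yields the closed form $e=\tfrac18\lambda^2\,(\chi'(v))^2\,f_1(\theta)^2$ on $[0,\tfrac12]\times S^2$ and, by the identical argument, $e=\tfrac18\lambda^2\,(\chi'(v-\tfrac12))^2\,f_2(\theta)^2$ on $[\tfrac12,1]\times S^2$. From here \eqref{eq:e-angular-estimate} is immediate, since $e$ is $\lambda^2$ times the product of a bounded $\lambda$-independent function of $v$ with one of the fixed smooth functions $f_i^2$ on $S^2$; and \eqref{eq:average} follows by integrating in $v$ and using $f_1^2+f_2^2=1$, giving $\int_0^1 e\,dv=\tfrac{\lambda^2}{8}(f_1^2+f_2^2)\int_0^{1/2}(\chi')^2\,dv=\tfrac{\lambda^2}{8}\int_0^{1/2}(\chi')^2\,dv$, which is constant on $S^2$, so its gradient vanishes.

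The only genuinely delicate point — and hence the step I would treat most carefully — is the algebraic collapse $\hat{\slashed g}_{EF}(N_i)^E{}_C(N_i)^F{}_D=(\hat{\slashed g}\,N_i^2)_{CD}$. It relies on two facts working in tandem: the $\gamma$-self-adjointness of $\mathfrak h^\flat$ (used to rewrite a transpose as conjugation by $\gamma$, after which $\gamma^{-1}\hat{\slashed g}=\exp(\lambda\mathfrak h)$) and the commutativity of $N_i$ with $\exp(\lambda\mathfrak h)$ on each slab — which is precisely why the disjointness of $\spt\chi_1$ and $\spt\chi_2$ is essential, since $N_1$ and $N_2$ need not commute. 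Everything else in the argument is bookkeeping.
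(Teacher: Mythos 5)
Your proof is correct, and it reaches exactly the same closed-form expression $e=\tfrac18\lambda^2\bigl((\chi_1')^2 f_1^2+(\chi_2')^2 f_2^2\bigr)$ that the paper uses; from that point on \eqref{eq:average} and \eqref{eq:e-angular-estimate} follow just as you say. The only stylistic comment is that you expend more effort than necessary on the ``delicate point.'' Since $\partial_v\hat{\slashed g}$ is automatically a symmetric $(0,2)$-tensor (it is the $v$-derivative of a Riemannian metric), the definition \eqref{eq:e-calc} is already $e=\tfrac18\tr\bigl((\hat{\slashed g}^{-1}\partial_v\hat{\slashed g})^2\bigr)$, and \eqref{eq:D-g-slash-hat} reads as $\hat{\slashed g}^{-1}\partial_v\hat{\slashed g}=\lambda\,\partial_v\mathfrak h$; combining these gives $e=\tfrac18\lambda^2\,\partial_v\mathfrak h^A{}_B\,\partial_v\mathfrak h^B{}_A$ in one line, with no need to re-establish $\hat{\slashed g}$-self-adjointness of $N_i$ or the commutativity with $\exp(\lambda\mathfrak h)$ (that commutativity is already absorbed into the proof of \eqref{eq:D-g-slash-hat}, which you are entitled to use as stated). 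Your route through the identity $\hat{\slashed g}_{EF}(N_i)^E{}_C(N_i)^F{}_D=(\hat{\slashed g}N_i^2)_{CD}$ is a valid alternative, just longer.
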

\begin{proof}
We have
\begin{equation}
e = \tfrac 18 \lambda^2 \partial_v \mathfrak h^A{}_B\partial_v\mathfrak h^B{}_A = \tfrac 18 \lambda^2 \left((\chi_1')^2 f_1^2 + (\chi_2')^2 f_2^2\right)\nonumber
\end{equation}
by \eqref{eq:e-calc}, \eqref{eq:matrix-inverse}, and \eqref{eq:h-defn}. Therefore, we have
\begin{equation}
\slashed \nabla^j e = \tfrac 18 \lambda^2 \left((\chi_1')^2 \slashed\nabla^j f_1^2 +(\chi_2')^2\slashed\nabla^j f_2^2\right),\nonumber
\end{equation}
which immediately proves \eqref{eq:e-angular-estimate}. To prove \eqref{eq:average}, we note that 
\begin{equation}
\int_0^1 \left((\chi_1')^2 f_1^2 + (\chi_2')^2 f_2^2\right)\,dv = \left(f_1^2 +f_2^2\right)\int_0^1 (\chi')^2\,dv=\int_0^1 (\chi')^2\,dv,\nonumber
\end{equation}
which is independent of the angle on $S^2$. 
\end{proof}

Along $[0,1]\times S^2$ we impose the gauge condition $\Omega^2=1$ and at $v=1$ we impose
\begin{align}
  \label{eq:trchi1} \tr\chi(1) = \delta \frac{2}{R}\left(1-\frac{2M}{R}\right).
\end{align}
The conformal factor $\phi$ solves Raychaudhuri's equation \eqref{eq:Ray-phi} with final values (see \eqref{eq:tr-chi-phi} and \eqref{eq:trchi1})
\begin{align*}
  \phi(1)  &=R_f\\
   \partial_v\phi(1) &=  \delta \left(1-\frac{2M}{R}\right)
\end{align*}

\begin{lem}
If $0<\delta\leq \delta_0$ and   $0\le \lambda\le \delta^{1/4}$, then 
\begin{equation}
|\slashed\nabla^j(\phi-R)|+|\slashed\nabla^j \partial_v\phi| \les_j \delta +\lambda^2\label{eq:phi-est-1}
\end{equation}
uniformly on $[0,1]\times S^2$ for every integer $j\ge 0$. 
\end{lem}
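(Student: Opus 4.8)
The plan is to treat Raychaudhuri's equation \eqref{eq:Ray-phi}, namely $\partial_v^2\phi + e\phi = 0$, as a small perturbation of the trivial ODE $\partial_v^2\phi = 0$, and run a standard continuity/Grönwall argument on the interval $[0,1]$, integrating \emph{backwards} from $v=1$ where the data $\phi(1)=R_f$ (which equals $R$ up to the $\delta$-correction in $R_f$; note $R_f$ should presumably be $R$ here or very close to it) and $\partial_v\phi(1) = \delta(1-2M/R)$ are prescribed. First I would record that by the previous lemma $|\slashed\nabla^j e| \lesssim_j \lambda^2 \le \delta^{1/2}$ on all of $[0,1]\times S^2$, so in particular $\|e\|_{L^\infty} \lesssim \delta^{1/2}$, which is the smallness that drives everything.

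The key steps, in order: (1) Write the integral (Duhamel/Volterra) form of the backwards problem,
\[
\phi(v) = R_f + (v-1)\,\delta\Bigl(1-\tfrac{2M}{R}\Bigr) - \int_v^1 (v'-v)\, e(v')\,\phi(v')\,dv',
\]
valid for $v$ in the maximal interval of existence with $\phi>0$. (2) Run a bootstrap: assume $|\phi - R_f| \le R_f/2$ on some subinterval, so $\phi \approx R$ and in particular $\phi$ stays positive and bounded; plug into the integral equation to get $|\phi - R_f| \lesssim \delta + \|e\|_{L^\infty}\cdot R \lesssim \delta + \lambda^2$, which for $\delta_0$ small beats $R_f/2$, closing the bootstrap and showing (by the continuation criterion \eqref{eq:continuation} of \cref{lem:seed-data}) that $\phi$ exists and stays positive on all of $[0,1]$, with $v_0<0$. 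This simultaneously gives the $j=0$ case of \eqref{eq:phi-est-1} for $\phi - R$ (absorbing $R_f - R = O(\delta)$), and differentiating the integral equation once in $v$ gives $|\partial_v\phi| = |\delta(1-2M/R) - \int_v^1 e\phi\,dv'| \lesssim \delta + \lambda^2$, the $j=0$ case for $\partial_v\phi$. (3) For the angular derivatives, commute $\slashed\nabla^j$ through the integral equation (the round metric $\gamma$ on $S^2$ is $v$-independent, so $\slashed\nabla$ commutes with $\partial_v$); using the Leibniz rule, $\slashed\nabla^j(e\phi) = \sum_{i\le j}\binom{j}{i}\slashed\nabla^i e\,\slashed\nabla^{j-i}\phi$, and the estimate \eqref{eq:e-angular-estimate} $|\slashed\nabla^i e|\lesssim_i \lambda^2$, one obtains by induction on $j$
\[
|\slashed\nabla^j(\phi - R)| + |\slashed\nabla^j\partial_v\phi| \lesssim_j \delta + \lambda^2 + \lambda^2\!\!\sum_{i=0}^{j-1}\sup_{[0,1]\times S^2}|\slashed\nabla^i \phi|,
\]
and since $\phi = R + (\phi - R)$ with $R$ constant, the sum on the right is controlled by the inductive hypothesis, closing the induction. (Note that for $j\ge 1$, $\slashed\nabla^j R_f = 0$ and $\slashed\nabla^j(\delta(1-2M/R))=0$ since these are constants, so the only contribution is the integral term.)

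The main obstacle is simply making the bootstrap/continuation argument airtight: one must verify that the interval on which $\phi$ stays positive is relatively open and closed in $[0,1]$ — openness is automatic from continuity, and the bootstrap estimate $|\phi - R_f|\lesssim \delta + \lambda^2 < R_f/2$ prevents $\phi$ from degenerating, hence rules out the alternative \eqref{eq:continuation} with $v_0\ge 0$ and forces $v_0<0$. There is also a minor bookkeeping point: the hypothesis is $0\le \lambda\le \delta^{1/4}$, so $\lambda^2 \le \delta^{1/2}$ and one could even absorb $\delta$ into $\lambda^2$ if $\lambda$ is bounded below, but since $\lambda$ may be zero the statement correctly keeps both terms; the estimates above are written to respect this. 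None of the individual calculations is hard — the content is entirely that Raychaudhuri's equation with a tiny right-hand side is a tame perturbation of a linear ODE, uniformly in angle.
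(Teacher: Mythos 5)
Your proof is correct and follows essentially the same route as the paper: integrate Raychaudhuri's equation backward from $v=1$, close a bootstrap using the smallness $|e|\lesssim\lambda^2\le\delta^{1/2}$, and then commute with $\slashed\nabla^j$ and induct. (The paper works with the first-order integral equation $\partial_v\phi(v)=\delta(1-\tfrac{2M}{R})+\int_v^1\phi\,e\,dv'$ and bootstraps $|\phi|\le 10R_f$; your second-order Volterra form is an equivalent variant, and you additionally make explicit the connection to the continuation criterion.) One small bookkeeping remark: in the Leibniz expansion of $\slashed\nabla^j(e\phi)$, the $i=0$ term is $e\,\slashed\nabla^j\phi$, which produces $\lambda^2\sup|\slashed\nabla^j\phi|$ on the right — this top-order contribution is not in your sum $\sum_{i=0}^{j-1}$ but is harmless since the prefactor $\lambda^2$ is small enough to absorb into the left-hand side.
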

\begin{proof}
Integrating \eqref{eq:Ray-phi}, we obtain 
\begin{equation}
\partial_v\phi(v) = \delta \left(1-\frac{2M}{R}\right) + \int_v^1 \phi\,e \, dv'.\label{eq:first-integral}
\end{equation}
Assuming $|\phi|\le 10 R_f$ in the context of a simple bootstrap argument, we see that \eqref{eq:first-integral} and \eqref{eq:e-angular-estimate} imply 
\begin{equation}
|\partial_v\phi|\les\delta + \lambda^2, \label{eq:A1}
\end{equation}
which implies 
\begin{equation}
|\phi - R|\les \delta + \lambda^2. \label{eq:A2}
\end{equation} Since $\lambda^2\le \delta^{1/2}$, taking $\delta_0>0$ sufficiently small closes the boostrap and \eqref{eq:A1} and \eqref{eq:A2} hold on $[0,1]$. Commuting \eqref{eq:Ray-phi} repeatedly with $\slashed\nabla$ and arguing inductively using \eqref{eq:A1} and \eqref{eq:A2} as the base cases, we easily obtain \eqref{eq:phi-est-1}. 
\end{proof}

\begin{lem} Fix an angle $\theta_0\in S^2$. 
For $0< \delta \leq \delta_0$, the function 
\begin{equation}\nonumber
\lambda\longmapsto \tr \chi(0,\theta_0;\lambda)
\end{equation}
is monotonically increasing. 
\end{lem}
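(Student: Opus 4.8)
Fix the angle $\theta_0$. By \eqref{eq:tr-chi-phi} we have $\tr\chi(0,\theta_0;\lambda)=\tfrac12\,\partial_v\phi(0)/\phi(0)$, where $\phi=\phi_\lambda$ solves Raychaudhuri's equation \eqref{eq:Ray-phi}, $\partial_v^2\phi_\lambda+e\,\phi_\lambda=0$, as a scalar ODE in $v$ with final data $\phi_\lambda(1)$ and $\partial_v\phi_\lambda(1)$ prescribed in the construction above; these depend only on $\delta,M,R$ and are therefore \emph{independent of $\lambda$}. Recall the formula $e=e(v;\lambda)=\tfrac18\lambda^2\big((\chi_1')^2 f_1(\theta_0)^2+(\chi_2')^2 f_2(\theta_0)^2\big)\ge0$ established above, so that $\partial_\lambda e=\tfrac14\lambda\big((\chi_1')^2 f_1(\theta_0)^2+(\chi_2')^2 f_2(\theta_0)^2\big)\ge0$. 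By the estimate \eqref{eq:phi-est-1} (valid since $0<\delta\le\delta_0$ and $\lambda$ lies in the admissible range $0\le\lambda\le\delta^{1/4}$), $\phi_\lambda$ stays uniformly close to $R$, hence strictly positive, on all of $[0,1]$; in particular $v_0<0$ and dividing by $\phi_\lambda$ is legitimate. Standard smooth dependence of ODE solutions on parameters shows that $\psi\doteq\partial_\lambda\phi_\lambda$ is well defined and solves the linearized equation $\partial_v^2\psi+e\,\psi=-(\partial_\lambda e)\,\phi_\lambda$ with vanishing final data $\psi(1)=0$ and $\partial_v\psi(1)=0$.

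\textbf{Key step.} I would then consider the Wronskian-type quantity $W\doteq(\partial_v\psi)\,\phi_\lambda-(\partial_v\phi_\lambda)\,\psi$ and compute, using $\partial_v^2\phi_\lambda=-e\phi_\lambda$ and $\partial_v^2\psi=-e\psi-(\partial_\lambda e)\phi_\lambda$, that $W'=(\partial_v^2\psi)\phi_\lambda-(\partial_v^2\phi_\lambda)\psi=-(\partial_\lambda e)\,\phi_\lambda^2\le0$. Since $W(1)=0$ from the vanishing final data for $\psi$, integrating backwards gives $W(0)=\int_0^1(\partial_\lambda e)\,\phi_\lambda^2\,dv\ge0$. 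Differentiating \eqref{eq:tr-chi-phi} in $\lambda$ yields $\partial_\lambda\tr\chi(0,\theta_0;\lambda)=\tfrac12\,W(0)/\phi_\lambda(0)^2\ge0$, which is the claimed monotonicity. For strict monotonicity, observe that $f_1^2+f_2^2\equiv1$ forces at least one of $f_1(\theta_0),f_2(\theta_0)$ to be nonzero, while $\chi'\not\equiv0$ and $(\chi_1')^2,(\chi_2')^2$ have disjoint supports in $(0,1)$; hence $e(\cdot;\lambda)$, and therefore $\partial_\lambda e(\cdot;\lambda)$ for $\lambda>0$, is not identically zero on $[0,1]$, so $W(0)>0$ whenever $\lambda>0$, and integrating $\partial_\lambda\tr\chi$ shows $\lambda\mapsto\tr\chi(0,\theta_0;\lambda)$ is strictly increasing.

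\textbf{Main obstacle.} There is no real obstacle: the argument is morally the statement that increasing the shear pushes the focal behavior of Raychaudhuri's equation in a monotone direction, packaged as a one-line Wronskian identity. The only points requiring care are (i) invoking \eqref{eq:phi-est-1} to guarantee that $\phi_\lambda>0$ throughout $[0,1]$, so that $\psi$, the division by $\phi_\lambda$, and the backward integration are all valid; and (ii) the sign bookkeeping in \eqref{eq:tr-chi-phi} and in the final data for $\partial_v\phi_\lambda(1)$ — and here I would note explicitly that the sign of $1-2M/R$ is irrelevant, since it enters only through the $\lambda$-independent final data and drops out of both $W(1)=0$ and the formula for $W'$.
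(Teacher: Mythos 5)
Your proof is correct. The paper proceeds differently in detail but not in spirit: it works directly with the first-order Riccati equation for $\tr\chi$ obtained from Raychaudhuri's equation, namely $\partial_v\tr\chi=-2\lambda^2 e_1-\tfrac12(\tr\chi)^2$, differentiates this in $\lambda$, and then solves the resulting linear transport equation for $\partial_\lambda\tr\chi$ explicitly by an integrating factor, obtaining
\[
\partial_\lambda\tr\chi(v)=4\lambda\,e^{\int_v^1\tr\chi\,dv'}\int_v^1 e^{-\int_{v'}^1\tr\chi\,dv''}e_1\,dv',
\]
which is manifestly nonnegative (and strictly positive for $\lambda>0$) at $v=0$. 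You instead linearize the second-order equation $\partial_v^2\phi+e\phi=0$ and run a Wronskian argument for $\psi=\partial_\lambda\phi$. The two are of course related by the standard Riccati-to-linear substitution $\tr\chi\propto\partial_v\log\phi$: unwinding the paper's integrating factor $e^{\int_v^1\tr\chi}$ as a ratio of squares of $\phi$ reproduces exactly your expression $W(0)=\int_0^1(\partial_\lambda e)\phi^2\,dv$. So the two arguments are essentially isomorphic; the paper's version avoids introducing $\psi$ and $W$ at the cost of carrying the exponential weights, whereas yours is a one-line Wronskian identity that makes the sign structure slightly more transparent. Both correctly invoke that the final data for $\phi$ (equivalently for $\tr\chi$) at $v=1$ is $\lambda$-independent, and both correctly use the bound \eqref{eq:phi-est-1} to ensure $\phi>0$ down to $v=0$ so that the quantities are defined. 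One small bookkeeping note: you quote $\tr\chi=\tfrac12\partial_v\log\phi$ from \eqref{eq:tr-chi-phi}; the constant in front should really be $2$ for consistency with the first variation formula and the Riccati form \eqref{eq:Ray-mod-1}, but since this only rescales $\partial_\lambda\tr\chi$ by a positive constant, it does not affect your conclusion.
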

\begin{proof}
Since $\Omega=1$ identically, Raychaudhuri's equation \eqref{eq:Ray-v} becomes
\begin{equation}
\partial_v \tr\chi = -2\lambda^2e_1 - \tfrac 12 (\tr\chi)^2,\label{eq:Ray-mod-1}
\end{equation}
where $e_1\doteq \tfrac 18 \partial_v \mathfrak h^A{}_B \partial_v\mathfrak h^B{}_A$. 
Taking the $\partial_\lambda$ derivative of \eqref{eq:Ray-mod-1} gives
\begin{equation}
\partial_v\left(\partial_\lambda \tr\chi \right) = -4\lambda e_1 - \tr\chi(\partial_\lambda\tr\chi)\nonumber
\end{equation}
This is at once solved for 
\begin{equation}\nonumber
\partial_\lambda \tr\chi(v) = 4\lambda  e^{\int_v^1 \tr\chi\,dv'}\int_v^1 e^{-\int_{v'}^1\tr\chi\,dv''}e_1\,dv',
\end{equation}
which is strictly positive at $v=0$ for $\lambda>0$.
\end{proof}

\begin{lem}
   Let $0< \delta \leq \delta_0$ and  $0\le \lambda\le \delta^{1/4}$. Then $\tr\chi$ is monotonically decreasing along each generator and
    \begin{equation}
      \inf_{[0,\tfrac 12]\times S^2}  \tr\chi \gtrsim \delta +\lambda^2.\label{eq:tr-chi-lower-bnd-1}
    \end{equation}
\end{lem}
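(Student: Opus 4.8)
The plan is to read both conclusions off Raychaudhuri's equation in the form~\eqref{eq:Ray-mod-1}, integrated backwards from the prescribed sphere at $v=1$. The monotonicity is immediate: since $\Omega\equiv 1$ and, by~\eqref{eq:e-defn}, $\lambda^2 e_1=e=\tfrac12|\hat\chi|^2\ge 0$, the right-hand side of~\eqref{eq:Ray-mod-1} is pointwise non-positive, so $\partial_v\tr\chi\le 0$ along every generator; in particular $\tr\chi(v,\theta)\ge\tr\chi(1,\theta)$ for all $v\in[0,1]$ and $\theta\in S^2$.

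For the quantitative lower bound I would integrate~\eqref{eq:Ray-mod-1} from $v\in[0,\tfrac12]$ up to $1$ and use the final value~\eqref{eq:trchi1}:
\[
  \tr\chi(v,\theta)=\delta\,\frac{2}{R}\Big(1-\frac{2M}{R}\Big)+\int_v^1\Big(2\lambda^2 e_1(v',\theta)+\tfrac12\tr\chi(v',\theta)^2\Big)\,dv'.
\]
The boundary term is $\gtrsim\delta$ (here $R>2M$, which is what makes $\mathfrak s_{M,R}$ non-antitrapped with $\tr\chi>0$), and both terms in the integrand are non-negative, so they only help; this already gives the $\delta$ in the claim. For the $\lambda^2$ contribution I would keep the term $2\lambda^2\int_v^1 e_1\,dv'$ and use the explicit form $e_1=\tfrac18\big((\chi_1')^2f_1^2+(\chi_2')^2f_2^2\big)$: since the first pulse lives in $(0,\tfrac12)$ and the second in $(\tfrac12,1)$, and since $f_1^2+f_2^2\equiv 1$, the total accumulated shear $\int_0^1 e\,dv'=\tfrac18\lambda^2\!\int_0^{1/2}(\chi')^2\,dv'$ is a fixed positive multiple of $\lambda^2$, \emph{uniform in the angle} — this is the content of~\eqref{eq:average} — and tracking how much of it has been integrated through once $v\le\tfrac12$ upgrades the bound to $\tr\chi\gtrsim\delta+\lambda^2$ on all of $[0,\tfrac12]\times S^2$.

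Two points need care. To integrate~\eqref{eq:Ray-mod-1} over all of $[0,1]$ in the first place one needs $\tr\chi=\tfrac12\partial_v\log\phi$ to remain finite, which is guaranteed by the bound~\eqref{eq:phi-est-1} keeping $\phi$ positive and comparable to $R$; the nonlinear term $\int_v^1\tfrac12\tr\chi^2\,dv'$ is then $O((\delta+\lambda^2)^2)$, negligible against both $\delta$ and $\lambda^2$. The genuine obstacle is obtaining the $\lambda^2$ bound \emph{uniformly} in $\theta$: by the Poincar\'e--Hopf theorem $\hat\chi$, and hence $e$, must vanish somewhere on every $S_v$, so along an individual generator the shear input is not uniformly bounded from below. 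This is exactly why the two-pulse configuration is used to build $\mathfrak h$ — the zero of $\hat\chi$ is forced to migrate from a neighbourhood of the north pole to a neighbourhood of the south pole as $v$ increases, making the angular shortfalls of the two pulses complementary, which is the geometric meaning of~\eqref{eq:average} — and carrying out this complementarity bookkeeping quantitatively over $[0,\tfrac12]$ is the step I expect to be the most delicate.
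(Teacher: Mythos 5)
Your monotonicity argument and the $\gtrsim\delta$ part of the lower bound match the paper's proof. The paper likewise integrates Raychaudhuri's equation from $v$ to $1$, just in the form of an explicit integral representation \eqref{eq:tr-chi-lower-bnd-2} obtained with the integrating factor $\exp(\pm\tfrac12\int\tr\chi\,dv')$, which is $\approx 1$ since $\tr\chi\lesssim\delta+\lambda^2\lesssim\delta^{1/2}$ by \eqref{eq:phi-est-1}; it then reads off the boundary contribution ($\gtrsim\delta$) and the shear contribution $2\lambda^2\int_v^1 e_1\,dv'$ from that formula, exactly as you propose.

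The ``complementarity bookkeeping'' you flag as the delicate step is, however, a genuine gap, and it does not close in the form you describe. The identity \eqref{eq:average} only says that $\int_0^1 e_1\,dv'$, the \emph{total} accumulated shear seen from $v=0$, is angle-independent; it says nothing about the partial integral $\int_v^1 e_1\,dv'$ for $0<v\le\tfrac12$. At the south pole $p_2$ one has $f_2(p_2)=0$, so the second pulse vanishes identically along that generator, while the first pulse $\chi_1=\chi\in C^\infty_c(0,\tfrac12)$ is compactly supported strictly below $\tfrac12$; hence for $v\in[0,\tfrac12]$ exceeding $\sup\operatorname{supp}\chi$ one has $\int_v^1 e_1(\cdot,p_2)\,dv'=0$, and the representation formula gives only $\tr\chi(v,p_2)\approx\delta\tfrac{2}{R}(1-\tfrac{2M}{R})$, which is $\gtrsim\delta$ but not $\gtrsim\lambda^2$ once $\lambda^2\gg\delta$ (e.g.\ $\lambda=\delta^{1/4}$). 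So the uniform $\lambda^2$ bound on all of $[0,\tfrac12]\times S^2$ cannot be extracted by the route you sketch. To be fair, the paper is terse at exactly this point --- it asserts that \eqref{eq:tr-chi-lower-bnd-2} ``implies'' \eqref{eq:tr-chi-lower-bnd-1} without elaboration --- and deserves the same caveat; in the only place the lemma is actually invoked, $\lambda=\lambda_0(\delta)\approx\delta^{1/2}$, so $\delta+\lambda^2\approx\delta$ and the boundary term alone (monotonicity together with $\tr\chi(1)\gtrsim\delta$) supplies everything that is used downstream.
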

\begin{proof} Monotonicity of $\tr\chi$ follows at once from Raychaudhuri's equation \eqref{eq:Ray-mod-1}. 
    We can immediately integrate \eqref{eq:Ray-mod-1} to obtain 
    \begin{equation}
        \tr\chi(v)=\delta\frac{2}{R}\left(1-\frac{2M}{R}\right) +2 \lambda^2 e^{\tfrac 12\int_v^1\tr\chi \,dv'}\int_{v'}^1e^{-\tfrac 12\int_{v'}^1\tr\chi\,dv''}e_1dv', \label{eq:tr-chi-lower-bnd-2}
    \end{equation}
    By \eqref{eq:phi-est-1}, $\tr\chi \les \delta+\lambda^2\les \delta^{1/2}\le 1$, so \eqref{eq:tr-chi-lower-bnd-2} implies \eqref{eq:tr-chi-lower-bnd-1}.
\end{proof}

\begin{lem}\label{lem:construction-of-lambda} Fix an angle $\theta_0 \in S^2$.
For $0<\delta\leq \delta_0$, there exists a unique $\lambda_0=\lambda_0(\delta)\in (0,\delta^{1/4})$ (depending also on $\theta_0$) such that 
\begin{equation}
    \tr\chi(0,\theta_0;\lambda_0) = \delta\frac{2}{R}\left(1-\frac{2M_*}{R}\right),\label{eq:tr-chi-target}
\end{equation}
which also satisfies 
\begin{equation}
    \lambda_0(\delta) \approx \delta^{1/2}.\label{eq:lambda-0-est}
\end{equation}
\end{lem}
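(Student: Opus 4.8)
The plan is to produce $\lambda_0$ via the intermediate value theorem, using the strict monotonicity of $\lambda\mapsto\tr\chi(0,\theta_0;\lambda)$ already established, and then to read off the asymptotics $\lambda_0\approx\delta^{1/2}$ from a uniform two-sided bound on this function which isolates the effect of the two pulses. Throughout I would set $c\doteq\tfrac2R\big(1-\tfrac{2M}{R}\big)$ and $c_*\doteq\tfrac2R\big(1-\tfrac{2M_*}{R}\big)$, so that the final value in \eqref{eq:trchi1} is $\delta c$, the target value in \eqref{eq:tr-chi-target} is $\delta c_*$, and $c_*-c=\tfrac{4(M-M_*)}{R^2}$ is a fixed \emph{positive} constant.

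First I would record that $\lambda\mapsto\tr\chi(0,\theta_0;\lambda)$ is continuous (indeed smooth): for $0\le\lambda\le\delta^{1/4}$ the function $\tr\chi(\cdot,\theta_0;\lambda)$ solves the Riccati equation \eqref{eq:Ray-mod-1}, whose right-hand side depends polynomially on $\lambda$ (through $\lambda^2$), and this solution exists on all of $[0,1]$ with no blow-up since $|\tr\chi|\les\delta^{1/2}$ on $[0,1]\times S^2$ by \eqref{eq:phi-est-1}; continuous dependence on $\lambda$ is then standard ODE theory. The heart of the argument is then the following estimate. Evaluating the integrated form \eqref{eq:tr-chi-lower-bnd-2} of Raychaudhuri's equation at $v=0$, $\theta=\theta_0$ and using $|\tr\chi|\les\delta^{1/2}$ to replace every exponential weight by $1+O(\delta^{1/2})$, together with the explicit form $e_1=\tfrac18\big((\chi_1')^2f_1^2+(\chi_2')^2f_2^2\big)$ and the identities $\int_0^1(\chi_i'(v))^2\,dv=\int_0^1(\chi'(v))^2\,dv$, one gets
\[
\int_0^1 e_1(\theta_0,v)\,dv=\tfrac18\big(f_1(\theta_0)^2+f_2(\theta_0)^2\big)\!\int_0^1(\chi'(v))^2\,dv=\tfrac18\!\int_0^1(\chi'(v))^2\,dv=:A>0,
\]
which is a fixed positive constant, independent of $\theta_0$, because $f_1^2+f_2^2\equiv1$ and $\chi\not\equiv0$. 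Carrying the $O(\delta^{1/2})$-errors along and shrinking $\delta_0$ so as to absorb the resulting $O(\delta^{1/2}\lambda^2)$ contributions into the main pulse term, this yields
\[
\tr\chi(0,\theta_0;\lambda)=\delta c+\lambda^2\,g(\delta,\lambda,\theta_0)+O(\delta^{3/2}),\qquad A\le g(\delta,\lambda,\theta_0)\le 4A,
\]
valid for all $0\le\lambda\le\delta^{1/4}$, $0<\delta\le\delta_0$, with the $O(\delta^{3/2})$ depending only on the fixed data $\chi,f_1,f_2,M,M_*,R$.

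With this bound in hand I would finish via the intermediate value theorem. At $\lambda=0$ it gives $\tr\chi(0,\theta_0;0)=\delta c+O(\delta^{3/2})<\delta c_*$ once $\delta_0$ is small, since $(c_*-c)\delta$ dominates $O(\delta^{3/2})$; at $\lambda=\delta^{1/4}$ it gives $\tr\chi(0,\theta_0;\delta^{1/4})\ge\delta c+A\delta^{1/2}-O(\delta^{3/2})\ge\tfrac{A}{2}\delta^{1/2}>\delta c_*$ once $\delta_0$ is small, since $\delta^{1/2}$ dominates $\delta$. By continuity there is some $\lambda_0\in(0,\delta^{1/4})$ with $\tr\chi(0,\theta_0;\lambda_0)=\delta c_*$, and by the strict monotonicity lemma this $\lambda_0$ is unique. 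Finally, substituting $\lambda_0$ back into the bound gives $(c_*-c)\delta=\lambda_0^2\,g+O(\delta^{3/2})$ with $A\le g\le 4A$, so $A\lambda_0^2\le(c_*-c)\delta+O(\delta^{3/2})$ and $4A\lambda_0^2\ge(c_*-c)\delta-O(\delta^{3/2})$, hence $\lambda_0^2\approx\delta$, i.e.\ $\lambda_0(\delta)\approx\delta^{1/2}$; this in particular re-confirms $\lambda_0<\delta^{1/4}$ for $\delta_0$ small.

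I do not expect a serious obstacle here: the only delicate point is the second step, namely keeping track of the $O(\delta^{1/2})$-errors coming from the exponential weights in \eqref{eq:tr-chi-lower-bnd-2} and verifying they are genuinely subleading compared with the $(c_*-c)\delta$ gap that must be bridged. This is precisely where the sharp bound \eqref{eq:phi-est-1}—available exactly because $\lambda\le\delta^{1/4}$—is used essentially, together with the lower bound $\int_0^1 e_1(\theta_0,\cdot)\,dv\gtrsim1$ guaranteed by $f_1^2+f_2^2\equiv1$ and $\chi\not\equiv0$.
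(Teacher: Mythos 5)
Your proposal is correct and follows essentially the same route as the paper's proof. The paper integrates the Riccati equation \eqref{eq:Ray-mod-1} over $[0,1]$ to obtain the balance $c\delta = C\lambda^2 + \tfrac12\int_0^1(\tr\chi)^2\,dv$ with $C=2\int_0^1 e_1\,dv$, and concludes from \eqref{eq:phi-est-1} and \eqref{eq:tr-chi-lower-bnd-1} that the quadratic term is negligible, forcing $\lambda_0^2\approx\delta$; you instead work with the closed-form solution \eqref{eq:tr-chi-lower-bnd-2}, absorb the exponential weights into the coefficient $g\in[A,4A]$, and make the intermediate value and uniqueness steps explicit. These are equivalent bookkeeping choices resting on the same inputs (the a priori bounds $|\tr\chi|\lesssim\delta^{1/2}$, the monotonicity lemma, and the $\theta_0$-independence of $\int_0^1 e_1\,dv$ guaranteed by $f_1^2+f_2^2\equiv1$, which the paper records via \eqref{eq:average}).
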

\begin{proof}
Let 
\begin{equation}
     c\doteq \frac{4}{R}(M-M_*)\quad\text{and}\quad
    C\doteq 2\int_0^1 e_1(v,\vartheta_0)\,dv. \nonumber
\end{equation}
Then the condition \eqref{eq:tr-chi-target} becomes (see \eqref{eq:trchi1}  and \eqref{eq:Ray-mod-1})
\begin{equation}
    c\delta = C\lambda^2 +\tfrac 12 \int_0^1(\tr\chi)^2\,dv.\label{eq:lambda-condition}
\end{equation}
Combining \eqref{eq:phi-est-1} with \eqref{eq:tr-chi-lower-bnd-1} shows immediately that \eqref{eq:lambda-condition} can be achieved by a $\lambda_0(\delta)$ satisfying \eqref{eq:lambda-0-est}.
\end{proof}

From now on, we always take $\lambda = \lambda_0(\delta)$ as constructed in Lemma~\ref{lem:construction-of-lambda}.  
With this $\lambda$, our main estimates \eqref{eq:phi-est-1} are improved to:
\begin{equation}
    |\slashed \nabla^j e|+ |\slashed\nabla^j(\phi-R)|+|\slashed\nabla^j \tr\chi| + \delta^{1/2}|\slashed\nabla^j \hat\chi| \les_j \delta \label{eq:improved}
\end{equation}
for any $j\ge 0$ and uniformly on $[0,1]\times S^2$. Importantly, we also have 

\begin{lem}
    Let $0<\delta\leq \delta_0$. Then,
   \begin{equation}
       |\slashed\nabla^j \tr\chi(0)|\les_j \delta^2\label{eq:tr-chi-0}
   \end{equation}
   at $v=0$ for $j\ge 1$. Hence,
   \begin{equation}
       \left|\slashed\nabla^j\left(\tr\chi(0)-\delta\frac{2}{R}\left(1-\frac{2M_*}{R}\right)\right)\right|\les_j \delta^2\label{eq:tr-chi-0-improved}
   \end{equation}
   for all $j\ge 0$ at $v=0$.
\end{lem}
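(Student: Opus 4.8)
The plan is to integrate Raychaudhuri's equation \eqref{eq:Ray-mod-1} backwards from $v=1$, where the data is angularly trivial, and then to use that \emph{both} sources of angular dependence in the resulting formula have been arranged to cancel. Integrating \eqref{eq:Ray-mod-1} over $[0,1]$ and inserting the prescribed value \eqref{eq:trchi1} at $v=1$ yields the exact identity
\begin{equation*}
    \tr\chi(0)=\delta\frac{2}{R}\Big(1-\frac{2M}{R}\Big)+2\lambda^2\int_0^1 e_1\,dv+\frac12\int_0^1(\tr\chi)^2\,dv .
\end{equation*}
Since we always take $\lambda=\lambda_0(\delta)$, which is constant on $S^2$, I would apply $\slashed\nabla^j$ with $j\ge 1$ to this identity.

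The first term on the right is constant and hence annihilated by $\slashed\nabla^j$. For the second term, note that $e=\lambda^2 e_1$, so the averaging identity \eqref{eq:average} gives $\slashed\nabla\int_0^1 e_1\,dv=0$, and therefore $\slashed\nabla^j\int_0^1 e_1\,dv=0$ for all $j\ge 1$; this is exactly what the two-pulse configuration was designed to achieve, and it is the one structural ingredient the argument cannot do without. We are then left with $\slashed\nabla^j\tr\chi(0)=\tfrac12\int_0^1\slashed\nabla^j\big((\tr\chi)^2\big)\,dv$, which I would estimate by expanding with the Leibniz rule into a sum of terms $\slashed\nabla^{k}\tr\chi\otimes\slashed\nabla^{j-k}\tr\chi$, $0\le k\le j$, and invoking the already-established bound \eqref{eq:improved}, namely $|\slashed\nabla^{k}\tr\chi|\les_k\delta$ for every $k\ge 0$. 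Each term is then $\les_j\delta^2$ pointwise on $[0,1]\times S^2$, which after integration in $v$ proves \eqref{eq:tr-chi-0}.

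For \eqref{eq:tr-chi-0-improved} the case $j\ge 1$ is immediate from \eqref{eq:tr-chi-0}, since subtracting the constant $\delta\frac{2}{R}(1-\frac{2M_*}{R})$ does not affect angular derivatives. For $j=0$, I would use that $\lambda_0(\delta)$ was constructed in \cref{lem:construction-of-lambda} precisely so that \eqref{eq:tr-chi-target} holds, i.e.\ $\tr\chi(0,\theta_0)=\delta\frac{2}{R}(1-\frac{2M_*}{R})$ at the base angle $\theta_0$. Then for an arbitrary $\theta\in S^2$, integrating $\slashed\nabla\tr\chi(0)$ along a minimizing $\gamma$-geodesic from $\theta_0$ to $\theta$ gives
\begin{equation*}
    \Big|\tr\chi(0,\theta)-\delta\frac{2}{R}\Big(1-\frac{2M_*}{R}\Big)\Big|=|\tr\chi(0,\theta)-\tr\chi(0,\theta_0)|\le\dist_\gamma(\theta,\theta_0)\sup_{S^2}|\slashed\nabla\tr\chi(0)|\les\delta^2 ,
\end{equation*}
using \eqref{eq:tr-chi-0} with $j=1$ and $\dist_\gamma\le\pi$ on $(S^2,\gamma)$.

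All the quantitative inputs --- \eqref{eq:improved}, \eqref{eq:average}, and the size \eqref{eq:lambda-0-est} of $\lambda_0$ --- are already in hand, so I do not anticipate any genuine analytic difficulty. The only point requiring care is the bookkeeping: one must verify that \emph{both} the data at $v=1$ and the full integral $\int_0^1 e\,dv$ are angularly constant, so that the entire angular dependence of $\tr\chi(0)$ is carried by the manifestly quadratically small term $\int_0^1(\tr\chi)^2\,dv$.
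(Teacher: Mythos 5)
Your proof is essentially the paper's own argument, down to the individual ingredients and in the same order: integrate Raychaudhuri's equation \eqref{eq:Ray-mod-1} from $v=1$, use that the data \eqref{eq:trchi1} is angularly constant, kill the $\lambda^2\int_0^1 e_1\,dv$ contribution via the averaging identity \eqref{eq:average}, estimate the remaining $\tfrac12\int_0^1\slashed\nabla^j(\tr\chi)^2\,dv$ by Leibniz and \eqref{eq:improved}, and for $j=0$ transfer \eqref{eq:tr-chi-target} from the base angle $\theta_0$ by integrating $\slashed\nabla\tr\chi(0)$ along geodesics. The only superficial difference is that you apply $\slashed\nabla^j$ to the already-integrated identity rather than to the ODE before integrating; the paper's displayed formula carries a sign opposite to yours, but yours is the correct one and the discrepancy is immaterial for the magnitude estimate.
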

\begin{proof}
Applying $\slashed\nabla^j$ to \eqref{eq:Ray-mod-1}, integrating in $v$, and applying \eqref{eq:average} yields 
\begin{equation*}\slashed\nabla^ j \tr\chi(0) = -\tfrac 12 \int_0^1 \slashed \nabla^j (\tr\chi)^2\,dv.\end{equation*}
We arrive at \eqref{eq:tr-chi-0} after applying \eqref{eq:improved}. This also proves \eqref{eq:tr-chi-0-improved} for $j\ge 1$. For $j=0$, we integrate \eqref{eq:tr-chi-0} along geodesics emanating from $\theta_0$ and use \eqref{eq:tr-chi-target}.
\end{proof} 

The remaining sphere data at $v=1$ is now specified as follows:
\begin{align}
\nonumber \tr\underline\chi(1) &= -\frac{1}{\delta}\frac{2}{R}\\
\nonumber\hat{\underline\chi}(1)&=0\\
\nonumber\eta(1)&=0\\
\nonumber\underline\omega(1) =\underline D\underline\omega(1)& = 0\\
\nonumber\alpha(1) =\underline\alpha(1)&=0.
\end{align}
Combining everything and using the null structure and Bianchi equations to solve the rest of the system, we have 

\begin{lem}\label{lem:bottom-sphere-estimates}
For $0<\delta\leq \delta_0$  we have \underline{at $v=0$}  
\begin{align}
\nonumber \delta^{-1}|\slashed\nabla^j(\slashed g-R^2\gamma)|+\delta^{-2}\left|\slashed\nabla^j\left(\tr\chi(0)-\delta\frac{2}{R}\left(1-\frac{2M_*}{R}\right)\right)\right|+\delta^{-1} \left|\slashed\nabla^j\left(K-\frac{1}{R^2}\right)\right|\\
 \nonumber  {}  + \delta^{-1/2} |\slashed\nabla^j \eta|+ \left|\slashed\nabla^j \left(\tr\underline\chi +\frac{1}{\delta}\frac{2}{R}\right)\right|+ \delta^{1/2}|\slashed\nabla^j \hat{\underline\chi}|+\delta^{-3/2}|\slashed\nabla^j\beta|&\\
  {} + \delta^{-1/2}\left|\slashed\nabla^j\left(\rho +\frac{2M_*}{R^3}\right)\right|+\delta^{-1/2}|\slashed\nabla^j\sigma|+|\slashed\nabla^j \underline\beta|+\delta^{1/2}|\slashed\nabla^j\underline\alpha|+|\slashed\nabla^j\underline\omega|+\delta |\slashed\nabla^j\underline D\underline\omega|&\les_j 1\label{eq:bottom-sphere-estimates}
\end{align}
 for every $j\ge 0$
and 
\begin{equation}
   \hat\chi(0)=0,\qquad  \alpha(0)=0.\label{eq:trivial-2}
\end{equation}
The terms in \eqref{eq:bottom-sphere-estimates} are displayed in the order in which they are estimated. 
\end{lem}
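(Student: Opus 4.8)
The plan is to run a single $\delta$-weighted bootstrap for the full system of null structure equations and Bianchi identities along $C_1^{[0,1]}$ in the gauge $\Omega^2 = 1$, in the spirit of the short pulse hierarchy of \cite[Chapter~2]{Christo09}, and then to specialize to $v=0$. The key structural fact is that $\mathfrak h$ was built in \cref{lem:mathfrak-h} from cutoffs $\chi_1,\chi_2\in C^\infty_c$ that vanish identically near $v=0$ and near $v=1$, so that $\mathfrak h$, together with all of its $v$-derivatives, vanishes on neighbourhoods of $v=0$ and $v=1$; hence there $\hat{\slashed g}=\gamma$, $D\hat{\slashed g}=0$, and $e=0$. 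Consequently $x(1)=\mathfrak b_{1/\delta}(\mathfrak s_{M,R})$, which is precisely \eqref{eq:boost-2}, and, at $v=0$, one has $\slashed g(0)=\phi(0)^2\gamma$, $\hat\chi(0)=0$, and $\alpha(0)=0$ (the last from $D\hat\chi = \tfrac12\phi^2 D^2\hat{\slashed g}+\cdots$ and $D^2\hat{\slashed g}(0)=0$), which is \eqref{eq:trivial-2}. Combined with the bound $|\slashed\nabla^j(\phi-R)|\les_j\delta$ from \eqref{eq:improved}, the identity $\slashed g(0)=\phi(0)^2\gamma$ controls the first term in \eqref{eq:bottom-sphere-estimates}, and the third via $K(\phi(0)^2\gamma)=\phi(0)^{-2}(1-\Delta_\gamma\log\phi(0))$; the second term is exactly \eqref{eq:tr-chi-0-improved}.

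It then remains to march through the remaining Ricci coefficients and curvature components, in the order displayed in \eqref{eq:bottom-sphere-estimates}, using the equations of \cref{app:A}. The coefficient $\eta$ satisfies a $\slashed\nabla_4$ transport equation along the generators of $C$ whose right-hand side is controlled by $\tr\chi$, $\hat\chi$, $\beta$ --- with $\beta$ eliminated in favour of $\tr\chi$, $\hat\chi$, $\eta$ via the Codazzi equation \eqref{eq:Codazzi1} --- so integration from $\eta(1)=0$ gives $|\slashed\nabla^j\eta|\les_j\delta^{1/2}$ uniformly on $C$. Next, $\tr\underline\chi$ satisfies its $\slashed\nabla_4$ equation, in which the curvature term $\rho$ is first replaced, via the Gauss equation, by $-K-\tfrac14\tr\chi\tr\underline\chi+\tfrac12\hat\chi\cdot\hat{\underline\chi}$; all resulting sources are then $O(1)$, and a short bootstrap gives $|\slashed\nabla^j(\tr\underline\chi+\tfrac1\delta\tfrac2R)|\les_j 1$. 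Then $\hat{\underline\chi}$ satisfies its $\slashed\nabla_4$ equation, with dominant source $\tfrac12\tr\underline\chi\,\hat\chi=O(\delta^{-1/2})$, giving $\delta^{1/2}|\slashed\nabla^j\hat{\underline\chi}|\les_j 1$. For the curvature, $\beta$ is recovered algebraically on each sphere from \eqref{eq:Codazzi1}, $\rho$ from the Gauss equation, and $\sigma$ from the relation expressing it through $\arot\eta$ and the shears, while $\underline\beta$ and $\underline\alpha$ are propagated along $C$ by their $\slashed\nabla_4$ Bianchi identities from their vanishing values at $v=1$, the source of the $\underline\beta$ equation being $\slashed\nabla\rho+{}^*\slashed\nabla\sigma+\cdots=O(1)$ and that of the $\underline\alpha$ equation containing $\hat{\underline\chi}\,\rho=O(\delta^{-1/2})$. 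Finally $\underline\omega$ and $\underline D\underline\omega$ are propagated from $0$ at $v=1$ by their transport equations, with sources $O(1)$ and $O(\delta^{-1})$ respectively (the latter because $\tr\underline\chi$ enters), giving $|\slashed\nabla^j\underline\omega|\les_j 1$ and $\delta|\slashed\nabla^j\underline D\underline\omega|\les_j 1$. At $v=0$ several bounds then improve because $\hat\chi(0)=0$: for instance $\beta(0)=-\adiv\hat\chi(0)+\tfrac12\slashed\nabla\tr\chi(0)-\tfrac12\tr\chi(0)\eta(0)+\hat\chi(0)\cdot\eta(0)=O(\delta^{3/2})$ by \eqref{eq:tr-chi-0-improved}, and $\rho(0)=-K(0)-\tfrac14\tr\chi(0)\tr\underline\chi(0)+\tfrac12\hat\chi(0)\cdot\hat{\underline\chi}(0)$ telescopes to $-\tfrac{2M_*}{R^3}+O(\delta)$ --- this last computation is where the tuning of $\lambda=\lambda_0(\delta)$ in \cref{lem:construction-of-lambda}, which forced $\tr\chi(0)\approx\delta\tfrac2R(1-\tfrac{2M_*}{R})$ in place of $\delta\tfrac2R(1-\tfrac{2M}{R})$, manifests as the decreased mass $M_*$. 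The $\slashed\nabla^j$ bounds for each quantity are obtained by commuting its transport equation with $\slashed\nabla$ and inducting on $j$, with \eqref{eq:improved} and \eqref{eq:e-angular-estimate} as base cases.

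The principal obstacle is bookkeeping: one must organize a single bootstrap in which every Ricci coefficient and curvature component --- and all of its angular derivatives --- carries the correct power $\delta^{s(\psi)}$, and then verify that no nonlinear term in any null structure or Bianchi equation produces a worse power. This is delicate precisely for the quantities that blow up as $\delta\to 0$, namely $\tr\underline\chi\sim\delta^{-1}$, $\hat{\underline\chi}\sim\delta^{-1/2}$, $\underline\alpha\sim\delta^{-1/2}$, and $\underline D\underline\omega\sim\delta^{-1}$, where the transport equations carry large source terms; the integrating factors, all of the form $\exp(\tfrac12\int\tr\chi\,dv)=1+O(\delta)$, pose no difficulty, so the issue is purely that of tracking the $\delta$-order of the sources through the nonlinear terms. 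The two structural points that make the endpoint estimates sharp are (i) substituting the Gauss equation into the $\slashed\nabla_4\tr\underline\chi$ equation, so that a large $\rho$ is never fed into that transport, and (ii) the exact vanishing at $v=0$ of the dangerous quantities $\hat\chi(0)$, $\adiv\hat\chi(0)$, and $\hat\chi(0)\cdot\hat{\underline\chi}(0)$, which would otherwise spoil the weights of $\beta$ and $\rho$. Everything else is a routine, if lengthy, iteration through the equations in the indicated order, exactly as in \cite[Chapter~2]{Christo09}.
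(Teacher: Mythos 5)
Your argument follows essentially the same route as the paper's: derive $\hat\chi(0)=\alpha(0)=0$ from the fact that $\mathfrak h$ (hence $\hat{\slashed g}-\gamma$ and all $v$-derivatives of $\hat{\slashed g}$) vanishes identically near $v=0$, then step through the null structure and Bianchi equations in the displayed order, replacing $\rho$ in the $D\tr\underline\chi$ equation by the Gauss equation, eliminating $\beta$ from the $\eta$-transport via Codazzi, and propagating $\hat{\underline\chi}$, $\underline\beta$, $\underline\alpha$, $\underline\omega$, $\underline D\underline\omega$ backwards from their trivial data at $v=1$ with Gr\"onwall and the integrating factor $\exp(\int\tr\chi)=1+O(\delta)$. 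The small additions you make --- the explicit observation that $\underline\beta(1)=0$ follows from $\hat{\underline\chi}(1)=0$, $\eta(1)=0$, and $\tr\underline\chi(1)$ being constant; the explicit Leibniz telescoping of the Gauss equation at $v=0$ to read off $\rho(0)+2M_*/R^3=O(\delta)$; and the remark that the vanishing of $\hat\chi(0)$ is what rescues the $\beta$ and $\rho$ weights at $v=0$ --- are all correct and consistent with the paper's order of estimation, so this is a faithful (and somewhat more explanatory) rendering of the same proof.
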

\begin{proof}
The proof follows the procedure of \cite[Chapter 2]{Christo09}, which we now outline. The first term is estimated using \eqref{eq:improved}. The second term was estimated in \eqref{eq:tr-chi-0-improved}. The third term is estimated using the formula 
\begin{equation*}
    K_{\slashed g} = \phi^{-2}(K_{\hat{\slashed g}}-\slashed\Delta_{\hat{\slashed g}} \log\phi).
\end{equation*}
Note that the first and third terms are estimated by $\delta^{1/2}$ on the whole cone, but are improved at $v=0$. To estimate the fourth term, the transport equation \eqref{eq:eta-transport} combined with the Codazzi equation \eqref{eq:Codazzi1} and \eqref{eq:eta-etabar} yields
\begin{equation*}
    \partial_v \eta_A+\tr\chi\,\eta_A = (\adiv\chi)_A-\slashed\nabla_A\tr\chi.
\end{equation*}
Now $|\eta|$ can be estimated using Gr\"onwall's inequality and \eqref{eq:improved}. To estimate the fifth term, the transport equation \eqref{eq:tr-chi-bar-transport} is combined with the Gauss equation \eqref{eq:Gauss-eqn} to give
\begin{equation*}
    \partial_v\tr\underline\chi + \tr\chi\,\tr\underline\chi = -2K-2\adiv\eta +|\eta|^2.
\end{equation*}
Gr\"onwall gives $|\!\tr\underline\chi|\les\delta^{-1}$, which then easily implies the desired estimate by Gr\"onwall applied to 
\[\left(\partial_v +\tr\chi\right)\left(\tr\underline \chi +\frac{1}{\delta}\frac{2}{R}\right)=-2K-2\adiv\eta +|\eta|^2+\frac{1}{\delta}\frac{2}{R}\tr\underline\chi.\]
To estimate the sixth term, apply Gr\"onwall directly to \eqref{eq:chi-bar-hat-transport}. The first variation formula \eqref{eq:first-var-seed}, the second variation formula \eqref{eq:second-variation-v}, and \eqref{eq:D-g-slash-hat} imply \eqref{eq:trivial-2}. The seventh term in \eqref{eq:bottom-sphere-estimates} is estimated directly from the Codazzi equation \eqref{eq:Codazzi1}. The eighth term is estimated directly from the Gauss equation \eqref{eq:Gauss-eqn}. The ninth term is estimated directly from the curl equation \eqref{eq:curl-eqn}. The tenth term is estimated using Gr\"onwall and the Bianchi equation \eqref{eq:Bianchi-beta-bar}. The eleventh term is estimated by using the Einstein equations \eqref{eq:alpha-traceless} and the first variation formula \eqref{eq:first-variation-v} to compute
\[0=D\tr\underline\alpha = \tr D\underline\alpha - 2\Omega(\chi,\underline\alpha).\]
Combined with the Bianchi identity \eqref{eq:Bianchi-alpha}, this yields
\[\partial_v \underline\alpha_{AB}-(\hat\chi,\underline\alpha)\slashed g{}_{AB}-\tfrac 12 \tr\chi\underline\alpha_{AB}=-(\slashed \nabla\ohat\underline\beta)_{AB}+5(\eta\ohat \underline\beta)_{AB}-3\hat{\underline\chi}_{AB}\rho + 3{}^*\hat{\underline\chi}_{AB}\sigma,\]
from which the desired estimate follows by Gr\"onwall. The twelfth term is estimated by integrating \eqref{eq:omega-bar-transport} and the thirteeth term is estimated by integrating \eqref{eq:D-bar-omega-bar-transport}. 
\end{proof}

We are now ready to prove the main result of this subsection.

\begin{proof}[Proof of \cref{prop:Schw-perturb-Schw}]
 Let $x^\mathrm{low}(v)$, $v\in [0,1]$, be the null data constructed above. We have defined
\begin{equation*}
    x^\mathrm{low}(1)=\left(1,R^2\gamma,\delta\frac{2}{R}\left(1-\frac{2M}{R}\right),0,-\frac{1}{\delta}\frac{2}{R},0,\dotsc,0\right)
\end{equation*} and we set $x^\mathrm{high}(1)\doteq 0$. 
Immediately from the definition of the boost $\mathfrak b_\delta$ in \cref{def:boost}, we have \eqref{eq:boost-2}. 

Since $\Omega^2=1$ along $C$ and $\hat\chi$ is compactly supported away from $v=0$, we have $x^\mathrm{high}(0)=0$. The boost $\mathfrak b_\delta$ changes every positive power of $\delta$ on the left-hand side of \eqref{eq:bottom-sphere-estimates} into a negative power, so that 
\begin{equation*}
    \|\mathfrak b_\delta(x(0))-\mathfrak s_{M_*,R}\|_{C^j}\les_j \delta^{1/2}
\end{equation*} 
for any $j\ge 0$. Therefore, taking $j$ sufficiently large, we have 
\begin{equation*}
   \|\mathfrak b_\delta(x(0))-\mathfrak s_{M_*,R}\|_{\mathcal X^m}\les \|\mathfrak b_\delta(x(0))-\mathfrak s_{M_*,R}\|_{C^j}\les \delta^{1/2},
\end{equation*}
where $\mathcal X^m$ is the sphere data norm appearing in \cref{thm:CR}. Now \eqref{eq:boost-est-1} follows follows by taking $\delta$ sufficiently small.
\end{proof}

\subsection{Gluing Minkowski space to any round Schwarzschild sphere}
\label{sec:thmA}

\begin{thmA}\label[thmA]{thm:Schwarzschild-gluing-2}
\hypertarget{thmA}{Let} $M>0$, $R>0$, and $k\in\Bbb N$. For any $\ve>0$ there exists a solution $x$ of the null constraints on a null cone $C^{[0,1]}$ such that $x(1)$ equals $\mathfrak s_{M,R}$ after a boost and $x(0)$ can be realized as a sphere in Minkowski space in the following sense: There exists a $C^k$ spacelike 2-sphere $S$ in Minkowski space and a choice of $L$ and $\underline L$ on $S$ such that the induced $C_u^2C^{k}_v$ sphere data on this sphere equals $x(0)$ after a boost and a sphere diffeomorphism. 
\end{thmA}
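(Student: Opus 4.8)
The plan is to chain together the two gluing mechanisms already in hand: the fully nonperturbative \cref{prop:Schw-perturb-Schw} and the Czimek--Rodnianski black box \cref{thm:CR} (in the open form \cref{cor:CR}). Fix $m=m(k)$ large enough that the $\mathcal X^m$-norm controls the $C^k$-norm on sphere data. Given $M,R,k,\ve$, one chooses parameters $0<\ve_\sharp\ll M_*<M$ as below and applies \cref{prop:Schw-perturb-Schw} to obtain a $\delta>0$ and null data $x^{\mathrm I}$ on $[0,1]\times S^2$ with $\mathfrak b_\delta(x^{\mathrm I}(1))=\mathfrak s_{M,R}$ and $\|\mathfrak b_\delta(x^{\mathrm I}(0))-\mathfrak s_{M_*,R}\|_{\mathcal X^m}<\ve_\sharp$. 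It then remains to glue $x_2\doteq\mathfrak b_\delta(x^{\mathrm I}(0))$ --- a slight perturbation of the \emph{round} Schwarzschild sphere of small mass $M_*$ --- to a sphere in Minkowski, which is exactly the regime of \cref{thm:CR}. One applies that theorem with the ingoing cone $\underline C$ carrying \emph{exactly} reference Minkowski data $\mathfrak m$, normalized so that its sphere $x_1$ at $u=0$ is the round symmetry sphere $\mathfrak m_R$; its output is a cone $C^{\mathrm{II}}=[1,2]_v\times S^2$ solving the null constraints with $x^{\mathrm{II}}(1)=x_1'$ and $x^{\mathrm{II}}(2)=x_2$, where $x_1'$ is by construction realized as a $C^k$ spacelike sphere in Minkowski within $O(\ve_\flat)$ of a round symmetry sphere.

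The crux --- and, I expect, the main obstacle --- is verifying the coercivity conditions \eqref{eq:CR-cond-1}--\eqref{eq:CR-cond-3}. By the explicit formula \eqref{eq:Schw-sphere-data}, $\mathfrak s_{M_*,R}$ differs from $\mathfrak m_R$ only in the component $\Omega\tr\chi$, and there only by the constant $-\tfrac{4M_*}{R^2}$; hence $\|\mathfrak s_{M_*,R}-\mathfrak m_R\|_{\mathcal X^m}\approx M_*$ with an $R$-dependent constant, so the smallness parameter $\ve_\flat$ may be taken $\approx M_*$ once $\ve_\sharp\ll M_*$. On the other hand, a direct computation from \cref{def:charges} gives $\mathbf B(\mathfrak s_{M_*,R})=0$ and $\mathbf m(\mathfrak s_{M_*,R})\equiv 2M_*$, whence $\mathbf E(\mathfrak s_{M_*,R})\approx M_*>0$ while $\mathbf P=\mathbf L=\mathbf G=0$ on every round sphere. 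Since all four charges vanish on $x_1=\mathfrak m_R$, this gives $\Delta\mathbf E=\mathbf E(x_2)\gtrsim M_*\approx\ve_\flat$ and $|\Delta\mathbf L|+|\Delta\mathbf P|+|\Delta\mathbf G|\lesssim\ve_\sharp$. The constants are then fixed in the following order: (1) choose $C_\mathbf E>0$ small, depending on $R$, so that the (purely $R$-dependent) ratio $\Delta\mathbf E/\ve_\flat$ forces \eqref{eq:CR-cond-1}; (2) let \cref{thm:CR} supply $C_*$ and $\ve_0$; (3) choose $M_*\lesssim\min(M,\ve_0,\ve)$, up to $R$-dependent constants, so that $\ve_\flat<\ve_0$ and $x_1'$ is $\ve$-close to round; (4) choose $\ve_\sharp$ small, depending on $M_*$ and $C_*$, so that \eqref{eq:CR-cond-2} and \eqref{eq:CR-cond-3} hold. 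There is no circularity, since $\ve_\flat$ depends on $\ve_\sharp$ only through an additive error $\ll M_*$. Feeding this $\ve_\sharp$ into \cref{prop:Schw-perturb-Schw} fixes $\delta$, and all hypotheses of \cref{thm:CR} are then satisfied.

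It remains to assemble the cone. A constant boost is a symmetry of the null structure equations which acts on a cone by an affine reparametrization, so applying $\mathfrak b_{1/\delta}$ to $x^{\mathrm{II}}$ produces a solution on a rescaled $v$-interval whose future endpoint is $\mathfrak b_{1/\delta}(x_2)=x^{\mathrm I}(0)$ and whose past endpoint $\mathfrak b_{1/\delta}(x_1')$ is still the induced data of a (boosted-frame) $C^k$ sphere in Minkowski close to round. Concatenating this with $x^{\mathrm I}$ along their common sphere data $x^{\mathrm I}(0)$, and rescaling the total affine parameter back to $[0,1]$ (another boost), yields the desired solution $x$ of the null constraints on $C^{[0,1]}$: at $v=1$ it equals $\mathfrak s_{M,R}$ after a boost, and at $v=0$ it equals, after a boost and a sphere diffeomorphism, the induced $C^2_uC^k_v$ data of a $C^k$ spacelike sphere in Minkowski close to a round symmetry sphere. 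The concatenated null data is at least $C^1$ in $v$, and since the null structure equations are $v$-transport equations with right-hand sides continuous in the sphere data, they hold across the junction; the higher $v$-regularity used in the corollaries is arranged by a routine smoothing near the junction, exploiting that $x^{\mathrm I}$ already has vanishing shear near $v=0$.
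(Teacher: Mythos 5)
Your proposal is correct and follows essentially the same route as the paper's proof: apply \cref{prop:Schw-perturb-Schw} with a small intermediate mass $M_*$, then feed the boosted bottom sphere $x_2=\mathfrak b_\delta(x(0))$ into \cref{thm:CR} with exact Minkowski ingoing data, and verify the coercivity conditions \eqref{eq:CR-cond-1}--\eqref{eq:CR-cond-3} by computing that a round Schwarzschild sphere of mass $M_*$ has $\mathbf m\equiv 2M_*$, $\mathbf B\equiv 0$, so $\Delta\mathbf E\approx M_*$ while $|\Delta\mathbf P|+|\Delta\mathbf L|+|\Delta\mathbf G|\lesssim\ve_\sharp$. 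The order in which you fix $C_\mathbf E, C_*, \ve_0, M_*, \ve_\sharp$ matches the paper's (the paper hard-codes $M_*=\ve_\sharp^{1/4}$ after scaling $R=2$, whereas you choose $M_*$ more abstractly; both resolve the apparent circularity in the same way, since $\ve_\flat\approx M_*$ up to an $O(\ve_\sharp)$ correction).

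The one genuine addition in your write-up is the explicit assembly of the final cone: undoing the boost on the Czimek--Rodnianski output, concatenating with the cone from \cref{prop:Schw-perturb-Schw} along the common sphere data, checking that the transport equations propagate across the junction, and affinely reparametrizing back to $[0,1]$. The paper leaves this step implicit, and your treatment of it is correct and worth making explicit. One small imprecision: the parameter $m$ in $\mathcal X^m$ counts transverse $v$-derivatives ($\hat D^j\alpha$, $D^j\omega$) rather than being a generic Sobolev index, so the relation you want is simply $m\ge k-2$ to obtain $C^2_uC^k_v$ sphere data; the angular regularity is already built into $\mathcal X^m$ at high order for every $m$.
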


\begin{rk}
    The sphere $S$ can be made arbitrarily close to round and the sphere diffeomorphism can be made arbitrarily close to the identity. 
\end{rk}

\begin{proof}	
By scaling, it suffices to prove the theorem when $R=2$. We first use \cref{prop:Schw-perturb-Schw} to connect $\mathfrak b_\delta(\mathfrak s_{M,R})$ to the sphere data set $\mathfrak b_\delta(x(0))$ with $M_* = \ve_\sharp^{1/4}\ll 1$. We now aim to use \cref{thm:CR} to connect $x_2\doteq \mathfrak b_\delta(x(0))$ to a sphere in Minkowski space. Let $\underline x$ be the usual ingoing Minkowski null data passing through the unit sphere at $u=0$.\footnote{Note that \cref{thm:CR} was formulated for $C=[1,2]_v\times S^2$, but we are applying it on $C=[0,1]_v\times S^2$ here, which is merely a change of notation. We refer the reader back to \cref{fig:two-phases} for the setup of this proof.}

By a direct computation, $ \|\mathfrak s_{M_*,2}-\mathfrak m_2\|_{\mathcal X^m} \approx M_*.$ It follows that 
\begin{equation}
\|x_2-\mathfrak m_2\|_{\mathcal X^m} < C_1  M_*\label{eq:good-smallness-1}
\end{equation}
if $\ve_\sharp$ is sufficiently small, where $C_1$ does not depend on $\ve_\sharp$. 

We must estimate the conserved charge deviation vector
\begin{equation*}
(\Delta\mathbf E,\Delta\mathbf P,\Delta\mathbf L,\Delta\mathbf G) = (\mathbf E,\mathbf P,\mathbf L,\mathbf G)(\mathfrak b_\delta(x(0))).
\end{equation*}
By \eqref{eq:boost-est-1}, 
\begin{equation*}
|\mathbf B|+|\phi\adiv\mathbf B|\les \ve_\sharp.
\end{equation*}
We then compute 
\begin{align*}
\phi^3 \left(K+\tfrac 14 \tr\chi\tr\underline\chi\right)&= 2^3\left(\frac{1}{2^2}+\frac 14 \left(\frac{2}{2}\left(1-\frac{2M_*}{2}\right)\right)\left(-\frac{2}{2}\right)\right)+O(\ve_\sharp)\\
&= 2M_*+O(\ve_\sharp),
\end{align*}
where $O(\ve_\sharp)$ denotes a function all of whose angular derivatives are $\les \ve_\sharp$.
 It follows that for $\ve_\sharp$ sufficiently small,
\begin{equation}
\Delta\mathbf E\ge  \tfrac 32 M_* >M_* \label{eq:proof-2}
\end{equation}
and 
\begin{equation}
|\Delta\mathbf P|+|\Delta\mathbf L|+|\Delta\mathbf G|\les \ve_\sharp. \label{eq:proof-1}
\end{equation}

Let and $C_*$ and $\ve_0$ as in \cref{thm:CR} for the choice $C_\mathbf E=C_1^{-1}$. For $0<\ve_\sharp <(\ve_0/C_1)^4$, set $\ve_\flat = C_1M_*$. Then \eqref{eq:CR-cond-4} and \eqref{eq:CR-cond-1} are satisfied,
\begin{equation*}
    |\Delta \mathbf L| < C\ve_\sharp = CM_*^4 \le (C_1M_*)^2  
\end{equation*}
if $\ve_\sharp$ is sufficiently small, so \eqref{eq:CR-cond-2} is satisfied, and finally 
\begin{equation}
    |\Delta\mathbf P|+|\Delta \mathbf G| < C\ve_\sharp =   CM_*^3 \cdot M_* \le C_*\Delta\mathbf E
\end{equation}
if $\ve_\sharp$ is sufficiently small, so \eqref{eq:CR-cond-3} is also satisfied.

 By applying \cref{thm:CR}, we obtain a null data set for which the bottom sphere $x_1'$ is a sphere diffeomorphism of a genuine Minkowski sphere data set and satisfies 
 \begin{equation}
     \|x_1'-\mathfrak m_1\|_{\mathcal X} \les \|x_2-\mathfrak m_2\|_{\mathcal X}\les \ve_\sharp^{1/4},
 \end{equation}
 which can be made arbitrarily small and hence completes the proof of the theorem. 
\end{proof}

\subsection{Gluing Minkowski space to any Kerr coordinate sphere in very slowly rotating Kerr}
\label{sec:thmB}

In this section, we perform Kerr gluing for small angular momentum essentially as a corollary of the Schwarzschild work. 

\begin{thmB}\label[thm]{thm:Kerr-gluing-2} \hypertarget{thmB}{For any} $k\in\Bbb N$, there exists a function $\mathfrak a_0:(0,\infty)^2\to (0,\infty)$ with the following property.  Let $M>0$ and $R>0$. If $0\le |a| \le \mathfrak a_0(M,R)M$, there exists a solution $x$ of the null constraints on a null cone $C_0^{[0,1]}$ such that $x(1)$ equals $\mathfrak k_{M,a,R}$ after a boost and $x(0)$ can be realized as a sphere in Minkowski space in the following sense: There exists a $C^k$ spacelike 2-sphere $S$ in Minkowski space and a choice of $L$ and $\underline L$ on $S$ such that the induced $C_u^2C^{k}_v$ sphere data on this sphere equals $x(0)$ after a boost and a sphere diffeomorphism. 
\end{thmB}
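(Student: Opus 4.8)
The plan is to run the two-stage scheme behind \hyperlink{thmA}{Theorem~A} almost verbatim, the only change being that its fully nonperturbative first stage must now terminate (going backwards in time) at the Kerr coordinate sphere $\mathfrak k_{M,a,R}$ rather than at a round Schwarzschild sphere. This is feasible because, by the lemma in \cref{sec:ref-sphere-data}, $\mathfrak k_{M,a,R}$ is a smooth function of $a$ with $\mathfrak k_{M,0,R}=\mathfrak s_{M,R}$, so for $|a|$ small the first stage is a controlled perturbation of \cref{prop:Schw-perturb-Schw} whose bottom sphere still lands inside the $\ve_\sharp$-neighbourhood of $\mathfrak s_{M_*,R}$ that \eqref{eq:boost-est-1} delivers with room to spare. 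Concretely, I would establish the following Kerr version of \cref{prop:Schw-perturb-Schw}: given $0\le M_*<M$, $R>0$, an integer $m\ge 0$, and $\ve_\sharp>0$, there exist $\delta>0$ and $a_0>0$ such that for every $|a|\le a_0$ there is null data $x$ on $C_1^{[0,1]}$ solving the null structure equations and Bianchi identities with $\mathfrak b_\delta(x(1))=\mathfrak k_{M,a,R}$ and $\|\mathfrak b_\delta(x(0))-\mathfrak s_{M_*,R}\|_{\mathcal X^m}<\ve_\sharp$.

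To build this data I keep the gauge $\Omega^2\equiv 1$ and feed \cref{lem:seed-data} the free data $\mathfrak b_{1/\delta}(\mathfrak k_{M,a,R})$ at $v=1$: the conformal class is $\mathfrak K=[\slashed g{}_{M,a,R}]$, the induced metric is $\slashed g(1)=\slashed g{}_{M,a,R}$, and $\tr\chi(1),\tr\underline\chi(1),\eta(1),\hat{\underline\chi}(1),\underline\alpha(1)$ are the corresponding boosted Kerr components, each of which converges, as $a\to 0$ with $\delta$ fixed, to the Schwarzschild prescription used in \cref{prop:Schw-perturb-Schw}. The traceless seed $\hat{\slashed g}$ is produced by \cref{lem:mathfrak-h} applied with $\tilde\gamma=\slashed g{}_{M,a,R}$ in place of the round metric---precisely the generality for which that lemma was stated (cf.\ \cref{rk:BL-conformal-factor})---so the two-pulse structure and the volume-form condition \eqref{eq:traceless} survive unchanged. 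The pulse amplitude is fixed by choosing $\lambda=\lambda_0(\delta,a)$ as in \cref{lem:construction-of-lambda}: its defining relation is perturbed only by a quantity vanishing as $a\to 0$, so by the strict monotonicity proved there it remains uniquely solvable with $\lambda_0(\delta,a)\approx\delta^{1/2}$ and depending continuously on $a$.

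With this seed data, $x(0)$ depends continuously on $a$ and converges as $a\to 0$ to the Schwarzschild bottom data of \cref{prop:Schw-perturb-Schw}; this is just continuous dependence of the solution of the null structure equations and Bianchi identities---a $v$-transport system along $C$ coupled to the second-order equation \eqref{eq:Ray-phi} for $\phi$ and to angular operators on the spheres---on its seed data, which is valid since $\phi$ stays bounded away from zero for $|a|$ small by the Kerr analogue of \eqref{eq:phi-est-1}. Crucially, $\delta$ is fixed \emph{before} $a$---it depends only on $M,R,M_*,m$, hence, after the internal choices of \hyperlink{thmA}{Theorem~A}, only on $k,M,R$---so $\mathfrak b_\delta$ is a fixed bounded operation on sphere data, and since \eqref{eq:boost-est-1} holds strictly in the Schwarzschild case, taking $|a|\le a_0$ for a suitable threshold $a_0=a_0(k,M,R)$ preserves $\|\mathfrak b_\delta(x(0))-\mathfrak s_{M_*,R}\|_{\mathcal X^m}<\ve_\sharp$. (Alternatively, one reruns the $\delta$-weighted hierarchy of \cref{lem:bottom-sphere-estimates} carrying $a$ as a second small parameter: each transport/constraint equation picks up an additive $O(|a|)$ source relative to its stated bound, which is absorbed into the pre-existing $\delta$-slack; this is the more hands-on way to see the same thing.)

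Finally I run the proof of \hyperlink{thmA}{Theorem~A} from here without change: with $M_*=\ve_\sharp^{1/4}$ and $x_2\doteq\mathfrak b_\delta(x(0))$, the conserved charges of $x_2$ lie within $O(\ve_\sharp)$ of those of $\mathfrak s_{M_*,R}$ (namely $\mathbf E\approx 2M_*$ and $\mathbf P=\mathbf L=\mathbf G=0$), so the smallness \eqref{eq:CR-cond-4} and the coercivity conditions \eqref{eq:CR-cond-1}--\eqref{eq:CR-cond-3} hold exactly as there and \cref{thm:CR}---or equivalently \cref{cor:CR}, since $x_2$ is a small $\mathcal X^m$-perturbation of the Schwarzschild bottom data, which already verifies the hypotheses of \cref{thm:CR}---produces a null cone $C_2$ gluing $x_2$ to a genuine Minkowski sphere up to a boost and a sphere diffeomorphism. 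Joining $C_1$ with $C_2$ yields the claimed solution $x$ on $C_0^{[0,1]}$, and $\mathfrak a_0(M,R)\doteq a_0/M$ is the desired function. I expect the only genuine work to be the Kerr analogue of the first stage---rerunning \cref{lem:mathfrak-h} and the $\delta$-weighted hierarchy of \cref{lem:bottom-sphere-estimates} with the non-round conformal class $[\slashed g{}_{M,a,R}]$ in place of the round one---but there is no new analytic obstacle: the robustness it requires (the general $\tilde\gamma$ in \cref{lem:mathfrak-h}, the openness in \cref{thm:CR} and \cref{cor:CR}) is built in upstream, and the smallness of $a$ serves only to fit the Kerr corrections into slack already present in the Schwarzschild construction.
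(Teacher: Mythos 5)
Your proposal is correct and follows essentially the same route as the paper's own proof: feed $\mathfrak b_{1/\delta}(\mathfrak k_{M,a,R})$ into the Christodoulou seed-data machinery, take $\tilde\gamma$ in \cref{lem:mathfrak-h} to be (a multiple of) $\slashed g_{M,a,R}$ rather than the round metric, invoke smooth dependence of $\mathfrak k_{M,a,R}$ on $a$ (\eqref{eq:Kerr-approx}) together with Cauchy stability of the transport system, and close by the openness built into \cref{cor:CR}. The one small difference is that the paper simply reuses the Schwarzschild pulse amplitude $\lambda_0(\delta)$ unchanged, letting Cauchy stability absorb the resulting $O(|a|)$ mismatch at $v=0$, whereas you propose re-solving \cref{lem:construction-of-lambda} for a Kerr-dependent $\lambda_0(\delta,a)$; both are valid, the paper's choice being marginally cleaner since it avoids having to argue that the $\lambda_0$ defined implicitly by \eqref{eq:lambda-condition} varies continuously in $a$.
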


\begin{proof}
Again, it suffices to prove the theorem for $R=2$ and $M>0$ fixed but otherwise arbitrary. Let $x(v)$ and $\delta$ be the associated null data set and boost parameter constructed in \cref{prop:Schw-perturb-Schw}, where $M$ and $R$ are as in the statement of the present theorem and $\ve_\sharp$ is sufficiently small that the argument of \hyperlink{thmA}{Theorem A} applies. 

Let $\tilde\gamma \doteq  2^{-2} \slashed g_{M,a,2}$ and define $\hat{\slashed g}$ by \eqref{defn:g-hat} with $\lambda=\lambda_0(\delta)$ from \cref{lem:construction-of-lambda}. By \eqref{eq:Kerr-approx}, Cauchy stability for the proof of \cref{prop:Schw-perturb-Schw}, and \cref{cor:CR}, we conclude that $\mathfrak k_{M,a,2}$ can be glued to Minkowski space as in \cref{fig:char-gluing-setup} for $a$ sufficiently small.
\end{proof}

\section{Gravitational collapse to a Kerr black hole of prescribed mass and angular momentum}\label{sec:proofs-of-corollaries}

In this section we give the proof of \cref{cor:main} and the sketch of the proof of \cref{cor:spacelike}. Recall the fractional Sobolev spaces $H^s$, $s\in\Bbb R$, and their local versions $H^s_\mathrm{loc}$. Recall also the notation $f\in H^{s-}_\mathrm{loc}$ which means $f\in H^{s'}_\mathrm{loc}$ for every $s'<s$. 

 We refer the reader back to \cref{main-corollary-diagram} for the Penrose diagram associated to the following result. 

\setcounter{cor}{0}
\begin{cor}\label{cor:main-2}
There exists a constant $\mathfrak a_0>0$ such that the following holds. For any mass $M>0$ and specific angular momentum $a$ satisfying $a/M\in[-\mathfrak a_0,\mathfrak a_0]$, there exist one-ended asymptotically flat Cauchy data $(g_0,k_0)\in H^{7/2-}\times H^{5/2-}$ for the Einstein vacuum equations \eqref{eq:EVE} on $\Sigma\cong\Bbb R^3$, satisfying the constraint equations
\begin{align}
 \label{eq:Hamiltonian}   R_{g_0} + (\tr_{g_0}k_0)^2 - |k_0|_{g_0}^2 &=0\text{ and}\\
\label{eq:momentum}   \Div_{g_0} k_0- {}^{g_0}\nabla \tr_{g_0}k_0 &=0,
\end{align} 
such that the maximal future globally hyperbolic development $(\mathcal M^4,g)$ has the following properties: 
 \begin{itemize}
     \item Null infinity $\mathcal I^+$ is complete.
     \item The black hole region is non-empty, $\mathcal B\mathcal H\doteq \mathcal M\setminus J^-(\mathcal I^+)\neq \emptyset$.
     \item The Cauchy surface $\Sigma$ lies in the causal past of future null infinity, $\Sigma\subset J^-(\mathcal I^+)$. In particular, $\Sigma$ does not intersect the event horizon $\mathcal H^+\doteq\partial(\mathcal{BH})$ or contain trapped surfaces. 
      \item $(\mathcal M,g)$ contains trapped surfaces. 
     \item For sufficiently late advanced times $v\ge v_0$, the domain of outer communication, including the event horizon, is isometric to that of a Kerr solution with parameters $M$ and $a$. For $v\ge v_0$, the event horizon of the spacetime can be identified with the event horizon of Kerr. 
 \end{itemize}
\end{cor}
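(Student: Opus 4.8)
The plan is to feed the characteristic data produced by \hyperlink{thmB}{Theorem~B} into a characteristic initial value problem, patch the resulting $C^2$ transition region between an exact Minkowski region and an exact Kerr region, and finally read off Cauchy data on a carefully chosen spacelike slice; the only input from the evolution problem will be Cauchy stability. By scaling it suffices to treat one value of the mass, which also makes $\mathfrak a_0$ a genuine constant, so fix $M>0$, let $r_+=M+\sqrt{M^2-a^2}$ be the event-horizon radius, and apply \hyperlink{thmB}{Theorem~B} with $R=r_+$ and $k$ large. This produces a solution $x$ of the null structure equations on an outgoing cone $C_0^{[0,1]}$ such that $x(1)$ equals the Kerr event-horizon sphere data $\mathfrak k_{M,a,r_+}$ after a boost, while $x(0)$ is realized, after a boost and a sphere diffeomorphism, as the induced $C^2_uC^k_v$ sphere data of a $C^k$ spacelike $2$-sphere $S_1$ sitting inside Minkowski space. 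The cone $C_0$ will be the segment of the future event horizon of the spacetime along which the geometry transitions from the Minkowski sphere $S_1$ to the Kerr horizon cross-section $S_2=x(1)$ --- this is Christodoulou's short pulse \cite{Christo09} fired backwards in $v$.

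Next I would assemble the spacetime. Solve the characteristic initial value problem whose outgoing cone carries the data $x$ on $C_0$, extended below $S_1$ by a Minkowski outgoing cone, and whose ingoing cone is $\underline C_{v_0}$, the ingoing cone through $S_2$ carrying the characteristic data induced by Kerr; the data are consistent at the corner sphere $S_2$ because $x(1)=\mathfrak k_{M,a,r_+}$ as a full $C^2_uC^k_v$ sphere data set (encoding the $2$-jet of the metric and its tangential derivatives up to order $k\ge 2$), and at $S_1$ because $x(0)$ is \emph{genuine} Minkowski sphere data. The resulting spacetime is $C^2$ --- with higher regularity away from the transition region adjacent to $C_0$ --- is isometric to an open subset of Minkowski space near the center $r=0$, and has $\underline C_{v_0}$ as a boundary hypersurface. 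I would then attach to it, across $\underline C_{v_0}$, the Kerr spacetime $(\mathcal M_*,g_{M,a})$ restricted to $\{v\ge v_0\}$: the domain of outer communication and the event horizon for $v\ge v_0$ together with a double-null slab of the black hole interior (for $a=0$ this slab reaches a piece of the ``$r=0$'' spacelike singularity, cf.\ \cref{cor:spacelike}), the match across $\underline C_{v_0}$ being smooth since both sides carry Kerr data there. Call the result $(\mathcal M,g)$; it contains trapped $2$-spheres in the Kerr interior slab (constant-$(v,r)$ spheres with $r$ slightly below $r_+$), and its exterior for $v\ge v_0$ is exactly Kerr, hence possesses a complete $\mathcal I^+$ --- compare the Penrose diagram in \cref{main-corollary-diagram}.

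I would then choose a smooth spacelike hypersurface $\Sigma\cong\Bbb R^3$ in $(\mathcal M,g)$ which is a $\{t=\mathrm{const}\}$ ball inside the Minkowski region, threads the transition region, and becomes an asymptotically flat one-ended slice of the Kerr exterior reaching $i^0$, arranged to be a Cauchy surface for $(\mathcal M,g)$ lying strictly in the exterior, so that $\Sigma\cap\mathcal H^+=\emptyset$ and $\Sigma\subset J^-(\mathcal I^+)$. Its first and second fundamental forms $(g_0,k_0)$ satisfy the constraints \eqref{eq:Hamiltonian}--\eqref{eq:momentum} automatically, being induced from a spacetime, and lie in $H^{7/2-}_\mathrm{loc}\times H^{5/2-}_\mathrm{loc}$, the drop in regularity occurring exactly where $\Sigma$ crosses the transition region in accordance with the $C^2$ transverse regularity of the glued metric. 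Invoking the low-regularity well-posedness of \eqref{eq:EVE} in this class \cite{HKM}, the maximal future globally hyperbolic development of $(\Sigma,g_0,k_0)$ exists, contains an isometric copy of $(\mathcal M,g)$ by domain of dependence, and by uniqueness of the Cauchy problem coincides with the Kerr piece throughout the domain of outer communication for $v\ge v_0$; in particular $\mathcal I^+$ is complete, $\mathcal{BH}$ is non-empty, and $\Sigma$ contains no trapped surfaces, since these must lie in $\mathcal{BH}$ by the standard argument using $\Sigma\subset J^-(\mathcal I^+)$, completeness of $\mathcal I^+$, and the vacuum null energy condition.

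The main obstacle, I expect, is the interplay of the patching in the second step with the limited regularity in the third: one must verify that the rough transition region is genuinely realized inside the maximal development and that the global causal structure --- $\mathcal I^+$, $\mathcal H^+$, $\mathcal{BH}$, and the isometry with Kerr for $v\ge v_0$ --- is correctly inherited from the explicit Minkowski and Kerr building blocks, despite the metric being only $C^2$ and the Cauchy data only $H^{7/2-}_\mathrm{loc}$. This is precisely where Cauchy stability enters, and why no genuine evolutionary theorem (long-time existence à la Christodoulou, or a trapped-surface-formation mechanism) is needed: stability lets one locate $\mathcal H^+$ and $\mathcal I^+$ near the portions of $\Sigma$ carrying non-smooth data and confirms the persistence of the black hole, the trapped surfaces, and the exact Kerr exterior. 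A secondary technical point is the construction of $\Sigma$ itself --- simultaneously $\cong\Bbb R^3$, one-ended, asymptotically flat, Cauchy, and disjoint from $\mathcal H^+$ --- which requires choosing it low enough to stay outside the black hole while still threading the transition region and matching an asymptotically flat Kerr slice near infinity.
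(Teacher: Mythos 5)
Your high-level architecture matches the paper's -- Theorem B provides the characteristic data, local existence/characteristic evolution fills in a neighborhood of the gluing cone, the exact Kerr piece is attached for $v\ge v_0$, and induced data on a suitable spacelike slice is extracted. However, there is a genuine gap in how you use Cauchy stability, and it is not merely technical.

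The characteristic IVP you set up (outgoing data on $C_0$ extended into Minkowski by a full cone, ingoing data on $\underline C_{v_0}$ from Kerr) together with the attached Kerr region for $v\ge v_0$ does \emph{not} cover the full region that the Cauchy surface $\Sigma$ must thread. Specifically, the region labeled ``Cauchy stability'' in \cref{Penrose-diagram-proof} --- roughly the exterior of the black hole at advanced times $v<v_0$, lying between $\Sigma$ and the portion of the spacetime reached by characteristic local existence near the gluing cone --- is not in the future domain of dependence of your cones, nor is it part of the exact Kerr block. Your $\Sigma$ is described as ``threading the transition region'' while ``lying strictly in the exterior,'' but the transition region constructed by your characteristic IVP abuts the event horizon from the inside, and the only exterior piece you have built is the Kerr DOC for $v\ge v_0$; there is no exterior region at $v<v_0$ in your $(\mathcal M,g)$ for $\Sigma$ to pass through on its way from the Minkowski ball to $i^0$. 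In the paper, this missing region is \emph{constructed}, not merely verified, via the Cauchy stability argument: induced data $(g_*,k_*)$ on a slice $\Sigma_*$ through the bottom gluing sphere is cut off to be exactly Minkowski outside a large ball (\cref{lem:Sob}(ii), using Taylor's theorem and an iterated Hardy inequality to control the $H^s\times H^{s-1}$ norm of the cut-off difference), and low-regularity well-posedness \cite{HKM} plus Cauchy stability in $H^{3-}\times H^{2-}$ then produces a solution covering the missing region close to Minkowski. You present Cauchy stability as something that ``confirms the persistence'' of structures already built, whereas in fact it is the step that produces a chunk of the spacetime that the gluing cone cannot reach by local existence alone.

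A second, smaller point: your assembly proceeds by a single characteristic IVP followed by attaching Kerr across $\underline C_{v_0}$, whereas the paper constructs the regions on either side of the horizon separately (each by characteristic local existence, referencing \cite[Proposition~3.1]{KU22}) and then pastes them along $u=0$. These are morally equivalent, but the paper's formulation makes explicit that the resulting metric is $C^2$ across the pasting locus, which is what feeds into the Sobolev regularity count $(g_0,k_0)\in H^{7/2-}\times H^{5/2-}$. Your detection of trapped surfaces (constant-$(v,r)$ spheres slightly inside $r_+$ in the Kerr slab) is a legitimate alternative to the paper's perturbative argument via $\underline D(\Omega\tr\chi)<0$ and \eqref{eq:Kerr-approx}, \eqref{eq:wave-eqn-r}; this difference is harmless.
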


\begin{rk}
    The spacetime metric $g$ is in fact $C^2$ everywhere away from the region labeled ``Cauchy stablity'' in \cref{Penrose-diagram-proof} below. Near the set $\dot{\mathcal H}^+_-$ (see \cite[p.~187]{HE73} for notation), the spacetime metric might fail to be $C^2$, but is consistent with the regularity of solutions constructed in \cite{HKM} with $s=\tfrac 72-$. See also \cite{C-low-reg} for the notion of the maximal globally hyperbolic development in low regularity. 
\end{rk}

\begin{figure}[ht]
\centering{
\def\svgwidth{25pc}
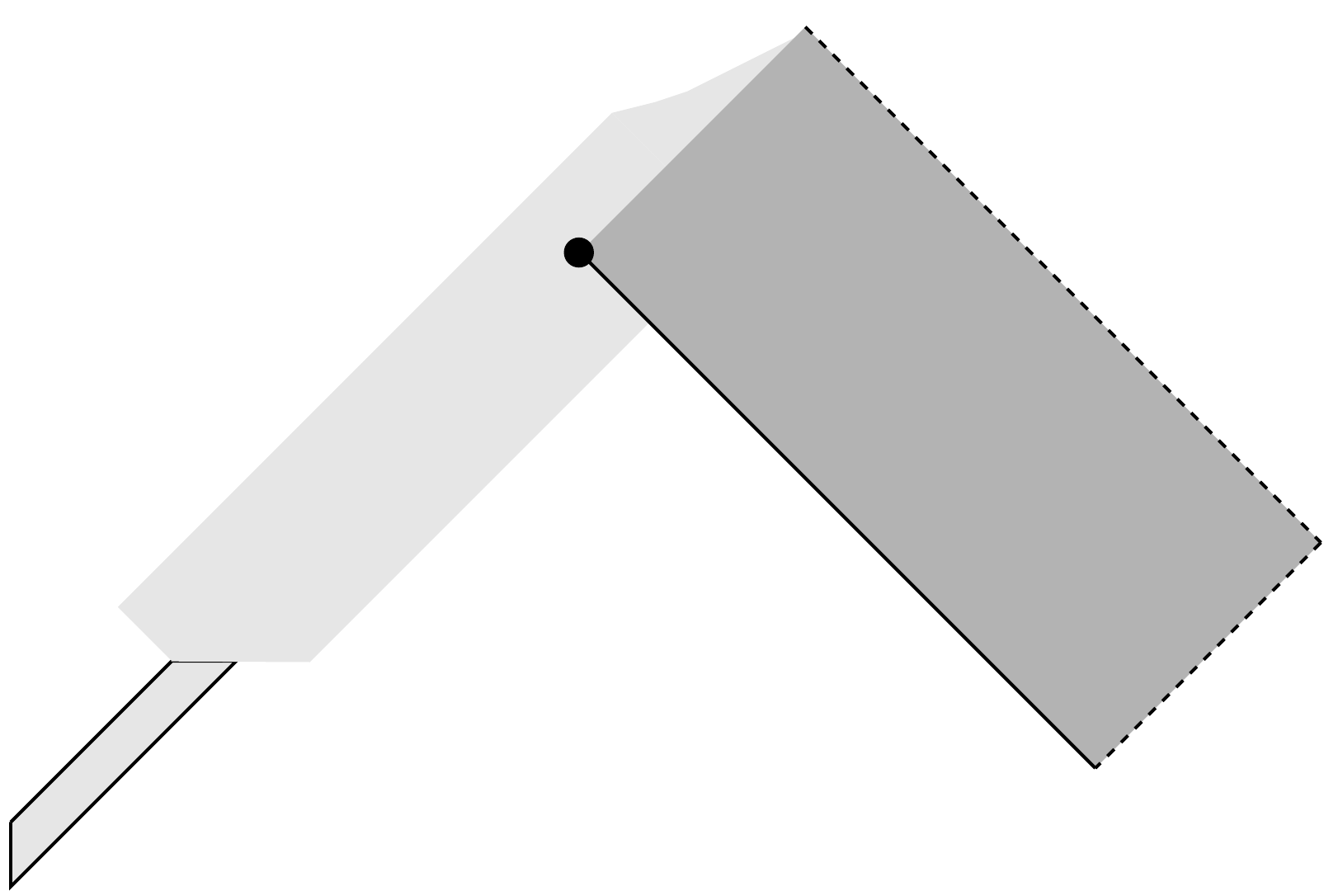}
\caption{Penrose diagram for the proof of \cref{cor:main}. The diagram does not faithfully represent the geometry of the spacetime near the ``bottom'' of the event horizon $\dot{\mathcal H}^+_-$ (i.e., the locus where the null geodesic generators ``end''). The event horizon does not necessarily end in a point since the distinguished Minkowski sphere is not necessarily round.} 
\label{Penrose-diagram-proof}
\end{figure}

\begin{proof}
We refer the reader to \cref{Penrose-diagram-proof} for the anatomy of the proof, which is essentially the same as Corollary~1 in \cite{KU22}.  The region to the left of $\mathcal H^+$ is constructed using our gluing theorem, \cref{thm:Kerr-gluing-2}, and local existence (in this case we appeal to \cite{Luk-local-existence}). See \cite[Proposition 3.1]{KU22}. The region to the right of the horizon, save for the part labeled ``Cauchy stability'' in \cref{Penrose-diagram-proof}, is constructed in the same manner. These two regions can now be pasted along $u=0$ and the resulting spacetime will be $C^2$. 

We can now use a Cauchy stability argument to construct the remainder of the spacetime. A very similar argument in carried out in \cite[Lemma 5.1]{KU22}, but the lower regularity of our gluing result in the present paper forces us to use slightly more technology here. As in \cite[Lemma 5.1]{KU22}, we take the induced data $(g_*,k_*)$ on a suitably chosen spacelike hypersurface $\Sigma_*$ passing through the bottom gluing sphere. See \cref{Penrose-diagram-proof}. This data lies in the regularity class $H^{7/2-}_\mathrm{loc}\times H^{5/2-}_\mathrm{loc}$ by part (i) of \cref{lem:Sob} below and satisfies the constraint equations. 
The cutoff argument presented in \cite[Lemma 5.1]{KU22} goes through using (ii) of \cref{lem:Sob} and the low regularity well-posedness theory in \cite{HKM}. Note that for simplicity we have applied well-posedness in the class $H^{3-}_\mathrm{loc}\times H^{2-}_\mathrm{loc}$ because of a loss of half a derivative in our Hardy inequality argument below, but since $(g_*,k_*)$ actually lies in the better space $H^{7/2-}_\mathrm{loc}\times H^{5/2-}_\mathrm{loc}$, the spacetime metric has regularity consistent with $H^{7/2-}_\mathrm{loc}\times H^{5/2-}_\mathrm{loc}$ initial data by propagation of regularity. 

To show that $(\mathcal M,g)$ contains trapped surfaces, it suffices to observe that 
$\underline D(\Omega\tr\chi)<0$ on $\mathfrak k_{M_f,a_f,r_+}$ for $\mathfrak a_0$ sufficiently small by \eqref{eq:Kerr-approx} and \eqref{eq:wave-eqn-r}.\footnote{For convenience, we have deduced the presence of trapped surfaces in very slowly rotating Kerr perturbatively from Schwarzschild. However, it is well known that any subextremal Kerr black hole contains trapped surfaces right behind the event horizon, and one may invoke that fact instead since the spacetime metric constructed here is $C^2$ across the event horizon $\mathcal H^+$.}

Having constructed the spacetime, we can finally extract a Cauchy hypersurface $\Sigma$, which completes the proof.   \end{proof}

\begin{lem}\label{lem:Sob}
 Let $f$ and $g$ be functions defined on $B_2\subset \Bbb R^3$, the ball of radius two, such that $f\in C^2(B_2)$, $f|_{B_1}\in C^3(\overline{B_1})$, $f|_{B_2\setminus\overline{B_1}}\in C^3(\overline{B_2}\setminus B_1)$, $g\in C^1(B_2)$, $g|_{B_1}\in C^2(\overline{B_1})$, and $g|_{B_2\setminus\overline{B_1}}\in C^2(\overline{B_2}\setminus B_1)$. Then:
 \begin{enumerate}[(i)]
     \item  $(f,g)\in H^{7/2-}\times H^{5/2-}(B_2)$.
     \item Suppose that $f=g=0$ identically on $B_1$. For $0<\ve<\tfrac 12$, let $\theta_\ve$ be a cutoff function which is equal to one on $B_{1+\ve}$ and zero outside of $B_{2+\ve}$. Then $f_\ve\doteq \theta_\ve f$ and $g_\ve\doteq \theta_\ve g$ satisfy 
   \begin{equation}
       \lim_{\ve\to 0}\|(f_\ve,g_\ve)\|_{H^s\times H^{s-1}(B_2)}=0\label{eq:Sob-1}
   \end{equation}
   for any $s<3$. 
 \end{enumerate}
\end{lem}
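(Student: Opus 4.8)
The plan is to localize everything near the interface $\Gamma\doteq\partial B_1$, where all of the (limited) irregularity sits, and to identify the worst local behaviour explicitly.

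\textbf{Part (i).} I would first take a partition of unity subordinate to a finite cover of $\overline{B_2}$ by coordinate balls. On a ball disjoint from $\Gamma$, $f$ (resp.\ $g$) coincides with one of the two smooth pieces, so it lies in $H^{7/2-}$ (resp.\ $H^{5/2-}$) there with nothing to prove. On a ball $U$ meeting $\Gamma$, I would flatten $\Gamma\cap U$ to $\{x_n=0\}$ by a diffeomorphism (Sobolev and $C^k$ classes are diffeomorphism invariant), extend $f|_{\{x_n\le0\}}$ to a smooth $F_-$ on $U$ and $f|_{\{x_n\ge0\}}$ to a smooth $F_+$ (Seeley/Whitney extension), and write $f=F_++\mathbf 1_{\{x_n<0\}}G$ with $G\doteq F_--F_+$. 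Since $f\in C^2(U)$ and $F_\pm$ are smooth, $\mathbf 1_{\{x_n<0\}}G\in C^2$, which forces the $2$-jet of $G$ to vanish on $\{x_n=0\}$; Hadamard's lemma then gives $G=x_n^3H$ with $H$ smooth, so the local singular part is exactly $(x_n)_-^3H$ with $(x_n)_-\doteq\min(x_n,0)$. Now the elementary identities $\mathbf 1_{\{x_n<0\}}x_n^3=\tfrac12(x_n^3-x_n^2|x_n|)$ and $\partial_n^3(x_n^2|x_n|)=6\,\sign(x_n)$, together with $\sign(x_n)\in H^{1/2-}_{\mathrm{loc}}$, show $x_n^2|x_n|\in H^{7/2-}$; multiplying by the smooth $H$ preserves this, so $f\in H^{7/2-}(B_2)$. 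The argument for $g$ is identical with one derivative fewer: the $1$-jet of $G$ vanishes, $G=x_n^2H$, $\mathbf 1_{\{x_n<0\}}x_n^2=\tfrac12(x_n^2-x_n|x_n|)$, $\partial_n^2(x_n|x_n|)=2\,\sign(x_n)\in H^{1/2-}_{\mathrm{loc}}$, hence $g\in H^{5/2-}(B_2)$.

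\textbf{Part (ii).} Here the key observation is that $f\equiv0$ on $B_1$ together with $f\in C^2(B_2)$ makes $f$, $\nabla f$, $\nabla^2 f$ vanish on $\Gamma$ (they vanish on $B_1$ and are continuous across $\Gamma$), so — using that $f$ is $C^3$ up to $\Gamma$ from the annulus side — Taylor's theorem yields $|\nabla^m f(x)|\lesssim\dist(x,\Gamma)^{3-m}$ for $m=0,1,2,3$ near $\Gamma$, and similarly $|\nabla^m g(x)|\lesssim\dist(x,\Gamma)^{2-m}$ for $m=0,1,2$. I would then choose $\theta_\varepsilon$ so that, in addition to the stated requirements, it drops from $1$ to $0$ across the thin shell $B_{1+2\varepsilon}\setminus B_{1+\varepsilon}$, so $|\nabla^j\theta_\varepsilon|\lesssim_j\varepsilon^{-j}$. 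Since $f$ vanishes on $B_1$ and $\theta_\varepsilon$ vanishes outside $B_{1+2\varepsilon}$, $f_\varepsilon=\theta_\varepsilon f$ is supported in the shell $\mathcal A_\varepsilon\doteq B_{1+2\varepsilon}\setminus\overline{B_1}$, of volume $\lesssim\varepsilon$, and there the Leibniz rule and the decay bounds give $|\nabla^m f_\varepsilon|\lesssim_m\varepsilon^{3-m}$ for $m\le3$; moreover $f_\varepsilon\in C^2(B_2)$ still has vanishing $2$-jet on $\Gamma$, so $\nabla^3 f_\varepsilon\in L^\infty$ carries no singular part. Hence $\|f_\varepsilon\|_{H^3(B_2)}^2\lesssim\sum_{m=0}^{3}\varepsilon^{2(3-m)}|\mathcal A_\varepsilon|\lesssim\varepsilon\to0$, so $\|f_\varepsilon\|_{H^s(B_2)}\le\|f_\varepsilon\|_{H^3(B_2)}\to0$ for $s\le3$; the identical computation with one derivative fewer gives $\|g_\varepsilon\|_{H^2(B_2)}^2\lesssim\varepsilon\to0$, hence $\|g_\varepsilon\|_{H^{s-1}(B_2)}\to0$ for $s\le3$. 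This is \eqref{eq:Sob-1}.

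\textbf{Main obstacle.} The only place any real care is needed is the interface bookkeeping in (i): recognising that the $C^2$-matched jump across $\Gamma$ is precisely a $(x_n)_-^3\times(\mathrm{smooth})$ singularity — this is what forces the pieces to be smooth (or at least $C^{3,\alpha}$) up to $\Gamma$, so that the smooth factor $H$ is regular enough that multiplication preserves $H^{7/2-}$; this is automatic in the applications, where the pieces are pieces of Kerr and of Minkowski. Everything else — diffeomorphism and partition-of-unity invariance of the function spaces, the standard extension theorems, and in (ii) the balancing of the $\varepsilon^{-j}$ cutoff derivatives against the $\varepsilon^{3-m}$ decay of $\nabla^m f$ and the $O(\varepsilon)$ shell volume — is routine.
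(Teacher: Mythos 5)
Both parts are correct and your write-up complements the paper's rather terse treatment, but the routes differ in detail.

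For part (i), the paper only sketches the claim: it invokes the physical-space characterization of fractional Sobolev spaces (citing \cite{hitchhiker}) and says the claim is ``an elaboration of the fact that $\mathbf 1_{B_1}\in H^{1/2-}$.'' You make that elaboration explicit by flattening the interface, extending the two pieces, isolating the singular part as $(x_n)_-^3 H$ (resp.\ $(x_n)_-^2 H$), and reducing to $\partial_n^3(x_n^2|x_n|)=6\,\sign(x_n)\in H^{1/2-}_{\mathrm{loc}}$. The one genuine subtlety, which you rightly flag at the end, is that multiplication by $H$ preserves $H^{7/2-}$ only if $H$ is better than merely $C^0$; with bare $C^3$ pieces Hadamard gives only $H\in C^0$, which is not enough, so some Hölder (or smoothness, as in the applications) of the pieces is implicitly being used. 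The paper's ``straightforward manner'' glosses over precisely the same point, so you are not doing worse than the source, and the extra regularity is indeed automatic where the lemma is actually applied (Kerr and Minkowski pieces).

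For part (ii), the paper takes a three-step route: Taylor's theorem (as in \cite[Lemma 5.1]{KU22}) gives decay of $\|(f_\ve,g_\ve)\|_{H^2\times H^1}$; an iterated Hardy inequality gives \emph{boundedness} (not decay) of $\|(f_\ve,g_\ve)\|_{H^3\times H^2}$; and interpolation then yields decay in $H^s\times H^{s-1}$ for $s<3$. Your route is shorter and slightly sharper: you push the same Taylor expansion one order further to get $|\nabla^m f|\lesssim\dist(\cdot,\partial B_1)^{3-m}$ for $m\le 3$, and combine this with the cutoff scaling $|\nabla^j\theta_\ve|\lesssim\ve^{-j}$ and the $O(\ve)$ volume of the shell on which $f_\ve$ is supported to conclude $\|f_\ve\|_{H^3(B_2)}^2\lesssim\ve$ (and $\|g_\ve\|_{H^2}^2\lesssim\ve$) directly. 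This gives decay rather than mere boundedness at the top order, so the Hardy inequality and the interpolation step both become unnecessary, and you in fact obtain the conclusion for $s\le 3$. Your implicit choice of cutoff (transitioning on the thin shell $B_{1+2\ve}\setminus B_{1+\ve}$) is exactly the one the paper's Hardy estimate uses, and is permitted by the lemma's loose specification of $\theta_\ve$.
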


\begin{proof}
The proof of (i) follows in a straightforward manner from the physical space characterization of fractional Sobolev spaces (such as in \cite{hitchhiker}) and is effectively an elaboration of the fact that the characteristic function of $B_1$ lies in $H^{1/2-}$. 

Proof of (ii): Using Taylor's theorem as in \cite[Lemma 5.1]{KU22}, we see that $\|(f_\ve,g_\ve)\|_{H^2\times H^1(B_2)}\to 0$ as $\ve\to 0$. By iterating Hardy's inequality, we see that 
    \begin{equation*}
        \int_{B_2}f^2 |\partial^3\theta_\ve|^2\,dx\les \int_{B_{1+2\ve}\setminus B_{1+\ve}} \frac{f^2}{\ve^6}\,dx \les
        \int_{B_2} f^2+|\partial f|^2 +|\partial^2 f|^2+|\partial^3 f|^2\,dx,
    \end{equation*}
    so $(f_\ve,g_\ve)$ is bounded in $H^3\times H^2$. We now obtain \eqref{eq:Sob-1} by interpolation. 
\end{proof}
\begin{rk}
    In fact \eqref{eq:Sob-1} holds for $s<\frac 72$, but this requires a fractional Hardy inequality. 
\end{rk}

We now sketch the proof of \cref{cor:spacelike} and refer the reader back to \cref{fig:spacelike-singularity} for the associated Penrose diagram.

\begin{proof}[Sketch of the proof of \cref{cor:spacelike}]
Using \hyperlink{thmA}{Theorem A}, we glue Minkowski space to a round Schwarzschild sphere of mass $1$ and radius $R=2-\ve$ for $0\le\ve\ll 1$. As $\ve\to 0$ (perhaps only along a subsequence $\ve_j\to 0$), the gluing data converge to the horizon gluing data used in the proof of \cref{cor:main-2}, in an appropriate norm. It then follows by Cauchy stability that the spacetimes constructed by solving backwards as in the proof of \cref{cor:main-2} contain the full event horizon, for $\ve$ sufficiently small. 
\end{proof}

\begin{appendix}
\section{Double null gauge} \label{app:A}

In this appendix we briefly recall the basic notion of \emph{double null gauge} \cite{double-null, Christo09}.

\subsection{Spacetimes in double null gauge}

\subsubsection{Double null gauge} 

Let $\mathcal W\subset \Bbb R^2_{u,v}$ be a domain and define $\mathcal M^{3+1}\doteq \mathcal W\times S^2$. Denote $S_{u,v}\doteq\{(u,v)\}\times S^2\subset \mathcal M$. The distinguished foliation of $\mathcal M$ by these spheres carries a tangent bundle $TS $ and cotangent bundle $T^*S\doteq(TS)^*$. An \emph{$S$-tensor (field)} is a section of a vector bundle consisting of tensor products of $TS$ and $T^*S$. Let $\slashed g$ be a positive-definite $(0,2)$ $S$-tensor field, let $\Omega^2$ be a positive function on $\mathcal M$, and $b$ be an $S$-vector field. Under these assumptions, the formula
\begin{equation*}
    g= -4\Omega^2\,du\,dv + \slashed g{}_{AB}(d\vartheta^A-b^A\,dv)(d\vartheta^B-b^B\,dv)
\end{equation*}
defines a Lorentzian metric on $\mathcal M$, where $(\vartheta^1,\vartheta^2)$ are arbitrary local coordinates on $S^2$ and $\slashed g{}_{AB}$ (resp., $b^A$) are the components of $\slashed g$ (resp., $b$) relative to this coordinate basis. The coordinate functions $u$ and $v$ satisfy the eikonal equation, i.e.,
\begin{equation}
    g^{\mu\nu}\partial_\mu u\partial_\nu u=0\quad\text{and}\quad g^{\mu\nu}\partial_\mu v \partial_\nu v=0.\nonumber
\end{equation}
Consequently, the hypersurfaces $C_u\doteq\{u=\mathrm{const.}\}$ and $\underline C_v\doteq \{v=\mathrm{const.}\}$ are null hypersurfaces. We time orient $(\mathcal M,g)$ by declaring $\partial_u+\partial_v+b^A\partial_{\vartheta^A}$ to be future-directed. 

The vector fields
\begin{align*}
    L'\doteq -2(du)^\sharp \quad\text{and}\quad
    \underline L'\doteq -2(dv)^\sharp
\end{align*}
are future-directed null geodesic vector fields. We set 
\[L\doteq\Omega^2 L'\quad\text{and}\quad \underline L \doteq \Omega^2\underline L',\]
which then satisfy 
\begin{align*}
   Lu &=  0,\quad Lv =1,\\
   \underline Lu &=1 ,\quad \underline Lv=0.
\end{align*}
Finally, we set 
\[e_4\doteq \Omega L',\quad e_3\doteq \Omega\underline L'.\]
Given arbitrary coordinates $\vartheta^A$ on $S^2$ and defining $e_A\doteq \partial_{\vartheta^A}$, the quadruple $\{e_1,e_2,e_3,e_4\}$ is called a (normalized) \emph{null frame}, which satisfies 
\begin{align}
    g(e_A,e_3)=g(e_A,e_4)=0,\quad  g(e_3,e_4)=-2, \quad g(e_3,e_3)=g(e_4,e_4)=0.\nonumber
\end{align}

\subsubsection{Algebra and calculus of \texorpdfstring{$S$}{S}-tensors}

Let $(\mathcal M,g)$ be a spacetime equipped with a double null gauge as above. For vector fields on $\mathcal M$, we define the orthogonal projection to $S$ vector fields by 
\[\Pi :T\mathcal M\to TS,\quad \Pi X \doteq X+\tfrac 12g(X,e_3)e_4+\tfrac 12g(X,e_4)e_3\]
which we extend componentwise to contravariant tensors of higher rank. We note that $   \Pi \circ i= \mathrm{id}$ on $TS$, where $i\colon TS \subset T\mathcal M$ is the natural inclusion. By duality, this defines a ``promotion'' operator $\Pi^\ast \colon T^*S \to T^*\mathcal M$ which extends componentwise to covariant $S$-tensors and satisfies $i^*\circ \Pi^* = \mathrm{id}$ on $T^\ast S$.

We now define projected Lie derivatives $\slashed{\mathcal L}_L$ and $\slashed{\mathcal L}_{\underline L}$ on $S$-tensors. If $X$ is an $S$-vector field, then 
\[\slashed{\mathcal L}_L X\doteq \mathcal L_L X,\quad \slashed{\mathcal L}_{\underline L}X\doteq \mathcal L_{\underline L}X\]
are already $S$-vector fields. If $\xi$ is an $S$-1-form, then 
\[\slashed{\mathcal L}_L\xi \doteq i^\ast  \mathcal L_L ( \Pi^* \xi) =  \mathcal L_L(\xi\circ \Pi)|_{TS},\quad\slashed{\mathcal L}_{\underline L}\xi \doteq i^\ast  \mathcal L_{\underline L} ( \Pi^* \xi) = \mathcal L_{\underline L}(\xi\circ \Pi)|_{TS}, \]
where we have explicitly written the ``promotion'' operation which will be consistently omitted in the sequel. The operation is extended to general $S$-tensor fields via the Leibniz rule. As a shorthand, we write 
\[D\doteq \slashed{\mathcal L}_L,\quad \underline D\doteq \slashed{\mathcal L}_{\underline L}.\]

The symbol $\slashed\nabla$ acts on functions and $S$-vector fields as the induced covariant derivative on the spheres and is extended to general $S$-tensors by the Leibniz rule.

We will frequently make use of the following notation: 
Let $\xi,\eta$ be $S$-1-forms and $\theta,\phi$ symmetric covariant $S$-2-tensor fields. We then define
\begin{align*}
    (\xi\ohat \eta)_{AB}&\doteq \xi_A\eta_B + \xi_B \eta_A - (\xi\cdot\eta)\slashed g{}_{AB} \\
    (\slashed\nabla\ohat \xi)_{AB}&\doteq \slashed\nabla_A\xi_B+\slashed\nabla_B\xi_A - (\adiv\xi)\slashed g{}_{AB}\\
   \adiv \xi &\doteq \slashed g{}^{AB}\slashed\nabla_A\xi_B\\
   \arot\xi &\doteq \slashed\varepsilon{}^{AB}\slashed\nabla_A \xi_B\\
   ({}^*\xi)_A&\doteq \slashed\ve{}_{AB}\slashed g{}^{BC}\xi_C\\
  \hat\theta_{AB} &\doteq \theta_{AB}- \tfrac 12 \tr\theta \,\slashed g{}_{AB}\\ 
  \theta\wedge\phi&\doteq \slashed\ve{}^{AB}\slashed g{}^{CD} \theta_{AC}\phi_{BD},
\end{align*}
where $\slashed\ve$ is the induced volume form on $S_{u,v}$. 
The notation $\slashed g{}^{AB}$ denotes the inverse of the induced metric $\slashed g{}_{AB}$. Indices of $S$-tensors are raised and lowered with $\slashed g$ and $\slashed g{}^{-1}$. 

\subsubsection{Ricci and curvature components}

The Ricci components are given by the null second fundamental forms 
\begin{equation*}
    \chi_{AB} \doteq g(\nabla_A e_4,e_B),\quad  \underline\chi_{AB} \doteq g(\nabla_A e_3, e_B),
\end{equation*}
the torsions
\begin{equation*}
    \eta_A\doteq -\tfrac 12 g(\nabla_3 e_A,e_4), \quad \underline\eta_A\doteq -\tfrac 12 g(\nabla_4 e_A,e_3),   
\end{equation*}
and 
\begin{equation*}
    \omega \doteq D\log\Omega, \quad \underline{\omega} \doteq \underline D\log\Omega.
\end{equation*}
The 1-form 
\[\zeta \doteq \eta - \slashed\nabla \log\Omega\]
is also referred to as the torsion. Note that 
\begin{equation}
    \eta+\underline \eta = 2\slashed\nabla\log\Omega.\label{eq:eta-etabar} 
\end{equation}

The null curvature components are given by 
\begin{align*}
\alpha_{AB}    &\doteq R(e_A,e_4,e_B,e_4), & \underline\alpha_{AB} &\doteq R(e_A,e_3,e_B,e_3), \\
\beta_A    &\doteq \tfrac 12 R(e_A,e_4,e_3,e_4), & \underline\beta_A&\doteq \tfrac 12 R(e_A,e_3,e_3,e_4),\\
\rho     &\doteq \tfrac 14 R(e_4,e_3,e_4,e_3), & \sigma&\doteq \tfrac 14 {}^*R(e_4,e_3,e_4,e_3),
\end{align*}
where $R(W,Z,X,Y)=g(R(X,Y)Z,W)$ is the Riemann tensor.

\subsubsection{Normalized sphere data determined by a geometric sphere}

For the notion of sphere data used here, see \cref{sec:sphere-data}. 

\begin{lem}\label{lem:data-generation}
    Let $(\mathcal M^4,g)$ be a spacetime satisfying the Einstein vacuum equations \eqref{eq:EVE} and $i:S^2 \to \mathcal M$ an embedding with spacelike image $S\doteq i(S^2)$. Let $\underline L$ be a null vector field along $S$ which is normal to $S$. Then for any $m\ge 0$ there exists a unique associated $C^2_uC^{2+m}_v$ sphere data set $x[g,i,\underline L]$ such that $\slashed g=i^*g$, $\Omega^2=1$, and  $\omega=D\omega=\cdots = D^{m+1}\omega=\underline\omega=\underline D\underline\omega=0$.

    The sphere data $x[g,i,\underline L]$ depends smoothly on $(g,i,\underline L)$ in the natural way. 
\end{lem}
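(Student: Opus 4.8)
The plan is to build the null frame and the two null cones through $S$, read off the entries of the sphere data from the geometry of $g$, and then verify that the construction is well defined (the asserted uniqueness) and depends smoothly on the inputs.

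First I would fix $e_3|_S\doteq\underline L$ and take $e_4|_S$ to be the unique future-directed null vector field along $S$ that is normal to $S$, linearly independent from $\underline L$, and normalized by $g(e_3,e_4)=-2$. I would then extend the frame along the two null cones emanating from $S$: let $C_0$ be swept out by the affinely parametrized null geodesics with initial velocity $e_4|_S$, with affine parameter $v$ (vanishing on $S$) and Lie-transported angular coordinates $\vartheta^A$; along $C_0$ set $e_4\doteq\partial_v$ (geodesic, hence $\Omega\equiv 1$ there), $e_A\doteq\partial_{\vartheta^A}$, and $e_3$ the unique future null vector normal to each section $S_v$ with $g(e_3,e_4)=-2$; construct $\underline C_0$, with affine parameter $u$ and its frame, symmetrically. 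This produces the full frame $\{e_3,e_4,e_1,e_2\}$ at every point of $C_0\cup\underline C_0$, which is all that is required, since every entry of the sphere data at $S$ is a geometric quantity built from $g$ and this frame using at most finitely many derivatives along the two cones; it is not necessary to exhibit a double null gauge on a full neighbourhood of $S$, and indeed \cref{def:higher-order-data} imposes no relations among the entries. One then sets $\slashed g\doteq i^*g$ and $\Omega\doteq 1$ on $S$, defines $\Omega\tr\chi$, $\hat\chi$, $\Omega\tr\underline\chi$, $\hat{\underline\chi}$, $\eta$, $\alpha$, $\underline\alpha$, $\hat D\alpha,\dotsc,\hat D^m\alpha$ as the corresponding Ricci and null curvature components of $g$ in this frame, and fixes the distinguished round metric $\gamma$ to be the pushforward under $i$ of the standard round metric on $S^2$. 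Since $\Omega\equiv 1$ on $C_0$ and $L=e_4$ is tangent to $C_0$ there, $\omega=D\log\Omega$ and all its $L$-derivatives vanish on $C_0$, so $\omega|_S=D\omega|_S=\dotsb=D^{m+1}\omega|_S=0$; symmetrically $\underline\omega|_S=\underline D\underline\omega|_S=0$. This defines $x[g,i,\underline L]$ and shows it satisfies the stated normalizations.

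For uniqueness it suffices to check that each entry is independent of the choices made: the cones $C_0$ and $\underline C_0$ and their affine parameters are determined by $e_3|_S$, $e_4|_S$ and $g$; $e_3|_S=\underline L$ is prescribed and $e_4|_S$ is forced by $g(e_3,e_4)=-2$ and normality; the components $\slashed g$, $\chi$, $\underline\chi$, $\alpha$, $\underline\alpha$ at $S$ are algebraic in the frame at $S$ and its tangential covariant derivatives; for $\eta$ one checks directly that altering the angular extension of $e_A$ off $S$ alters $g(\nabla_{e_3}e_A,e_4)$ only by the $g$-pairing of an $S$-tangent vector with $e_4$, which vanishes; and since $\alpha_{AB}=R(e_A,e_4,e_B,e_4)$ does not involve $e_3$, each $\hat D^j\alpha$ depends only on the canonical geodesic extension of $e_4$ along $C_0$. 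Smooth dependence on $(g,i,\underline L)$ follows by tracking the construction: $e_4|_S$ is algebraic in $(g,i,\underline L)$, the geodesic cones and their jets at $S$ depend smoothly on their initial data by smooth dependence of ODE solutions on parameters, and the Ricci and curvature components and their $D$-derivatives along $C_0$ are smooth functions of the frame and finitely many of its derivatives.

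I expect the work here to be organizational rather than conceptual: the one substantive point is the invariance check in the uniqueness step (most cleanly, the computation that $\eta$ is insensitive to the off-$S$ extension of the angular frame, and the analogous statements for the higher-order data), together with being careful that the frame data actually produced along $C_0\cup\underline C_0$ is all that the definition requires. All of this is standard in the double null formalism (cf.\ \cite{double-null,Christo09}), and the statement is essentially contained in \cite{ACR2}; I would cite that reference for the routine bookkeeping and record only the points above that are used elsewhere in the paper.
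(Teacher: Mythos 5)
Your proposal is correct and takes essentially the same approach as the paper: fix $e_4$ (equivalently $L$) at $S$ by the normality and normalization $g(\underline L,L)=-2$, construct the bifurcate null hypersurface through $S$, impose $\Omega\equiv 1$ there (equivalently, use affine parametrization), and read off the sphere data, with uniqueness and smoothness coming from the canonical nature of this construction. The one cosmetic difference is that you explicitly note one only needs the frame along $C_0\cup\underline C_0$ rather than a full double null foliation in a neighborhood of $S$ (the paper asserts existence and uniqueness of the latter and then restricts), and you spell out the invariance check for $\eta$ and the higher-order entries which the paper leaves implicit; these are refinements of detail, not a different argument.
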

We say that $x[g,i,\underline L]$ is \emph{generated} by $(g,i,\underline L)$. If $\psi:S^2\to S^2$ is a diffeomorphism, then $x[g,i\circ\psi,\underline L]$ is related to $x[g,i,\underline L]$ by a sphere diffeomorphism as in \cref{def:sphere-diffeo}. 
\begin{proof}
    The geometric sphere $S$ is identified with the round sphere by $i$, which endows $S$ with a choice of round metric $\gamma$. The dual null vector field $L$ is uniquely determined by the requirement that $L\perp TS$ and $g(L,\underline L)=-2$. Let $C\cup \underline C$ be the (locally defined) bifurcate null hypersurface passing through $S$ such that $L$ is tangent to $C$ and $\underline C$ is tangent to $\underline L$. Let $\Omega^2=1$ identically on $C\cup\underline C$. Given this data, there is a unique double null foliation with respect to $g$ covering a neighborhood of $S$ in $\mathcal M$. Now $x[g,i,\underline L]$ is constructed by computing the corresponding quantities in this double null foliation and taking the values at $S$. 
\end{proof}

\subsection{The Einstein equations in double null gauge}

We now assume that the spacetime metric $g$ satisfies the Einstein vacuum equations \eqref{eq:EVE}. In double null gauge, the Einstein equations are equivalent to the \emph{null structure equations} (with the Ricci coefficients on the left-hand side) and the \emph{Bianchi equations} (with the curvature components on the left-hand side). The Einstein equations \eqref{eq:EVE} imply 
\begin{equation}
    \tr\alpha=0,\qquad \tr\underline\alpha=0. \label{eq:alpha-traceless}
\end{equation}

\subsubsection{The null structure equations}

First variation formulas:
\begin{align}
 D\slashed g &= 2\Omega\chi = 2\Omega\hat\chi + \Omega \tr\chi\,\slashed g\label{eq:first-variation-v}\\
\underline D\slashed g &= 2\Omega\hat{\underline\chi} = 2\Omega\hat{\underline\chi}+2\Omega\tr\underline\chi\,\slashed g
\end{align}
Raychaudhuri's equations: 
\begin{align}
D\tr \chi +\tfrac 12 \Omega (\tr\chi)^2 -\omega\tr\chi &= - \Omega|\hat\chi|^2 \label{eq:Ray-v}\\
 \underline D\tr\underline\chi +\tfrac 12 \Omega (\tr\underline\chi)^2 -\underline\omega\tr\underline \chi &=-\Omega |\hat{\underline\chi}|^2\label{eq:Ray-u}
\end{align}
Transport equations for Ricci components:
\begin{align}
\label{eq:second-variation-v} D\hat\chi &= \Omega|\hat\chi|^2 \slashed g +\omega\hat\chi -\Omega\alpha\\
\label{eq:second-variation-u}\underline D\hat{\underline\chi} &= \Omega|\hat{\underline\chi}|^2 \slashed g +\underline\omega\hat{\underline\chi} -\Omega\underline\alpha\\
 D\eta &= \Omega(\chi\cdot \underline \eta-\beta)\label{eq:eta-transport}\\
 \underline D \underline \eta &= \Omega(\underline\chi \cdot\eta +\underline\beta)\\
 \label{eq:omega-bar-transport}D\underline\omega &= \Omega^2 (2\eta\cdot\underline\eta-|\eta|^2-\rho)\\
 \underline D\omega &= \Omega^2 (2\eta\cdot\underline\eta-|\eta|^2-\rho)\\
 D\underline \eta &= -\Omega(\chi\cdot\underline\eta -\beta) + 2\slashed\nabla \omega\\
\underline D \eta &= -\Omega(\underline\chi\cdot\eta -\beta) + 2\slashed\nabla \omega\\
\label{eq:tr-chi-bar-transport} D(\Omega\tr\underline \chi)&=2\Omega^2\adiv\underline\eta+2\Omega^2|\underline\eta|^2-\Omega^2 (\hat\chi,\hat{\underline\chi})-\tfrac 12 \Omega^2 \tr\chi\,\tr\underline\chi + 2\Omega^2\rho\\
\label{eq:wave-eqn-r}\underline D(\Omega\tr\chi)&=2\Omega^2\adiv\eta+2\Omega^2|\eta|^2-\Omega^2 (\hat\chi,\hat{\underline\chi})-\tfrac 12 \Omega^2 \tr\chi\,\tr\underline\chi + 2\Omega^2\rho\\
\label{eq:chi-bar-hat-transport}D(\Omega\hat{\underline\chi})&=\Omega^2 \left((\hat\chi,\hat{\underline\chi})+\tfrac 12 \tr\chi\,\hat{\underline\chi}+\slashed \nabla \ohat \underline\eta + \underline\eta\ohat\underline\eta - \tfrac 12 \tr\underline\chi \,\hat\chi \right)\\
\underline D(\Omega\hat{\chi})&=\Omega^2 \left((\hat\chi,\hat{\underline\chi})+\tfrac 12 \tr\underline\chi\,\hat{\chi}+\slashed \nabla \ohat \eta + \eta\ohat\eta - \tfrac 12 \tr\chi \,\hat{\underline\chi} \right)
\end{align}
Gauss equation:
\begin{equation}
    K+\tfrac 14 \tr\chi\,\tr\underline\chi - \tfrac 12 (\hat\chi,\hat{\underline\chi}) = -\rho\label{eq:Gauss-eqn}
\end{equation}
Codazzi equations:
\begin{align}
 \label{eq:Codazzi1}  \adiv \hat\chi - \tfrac 12 \slashed\nabla \tr\chi + \hat \chi\cdot \zeta - \tfrac 12 \tr\chi\,\zeta &=-\beta\\
\label{eq:Codazzi2}  \adiv \hat{\underline\chi} - \tfrac 12 \slashed\nabla \tr\underline\chi - \hat{\underline\chi} \cdot\zeta + \tfrac 12 \tr\underline\chi\, \zeta &= \underline\beta 
\end{align}
Curl equations:
\begin{equation}
    \arot\eta = -\arot\underline\eta = \arot\zeta = - \tfrac 12 \hat\chi\wedge\hat{\underline\chi} -\sigma\label{eq:curl-eqn}
\end{equation}
We also require
\begin{align}
\label{eq:D-bar-omega-bar-transport}D\underline D\underline\omega =&-12\Omega^2(\eta-\slashed\nabla \log\Omega, \slashed\nabla\underline\omega) +2\Omega^2\underline\omega \left((\eta,-3\eta+4\slashed\nabla\log\Omega)-\rho\right)\\
& + 4\Omega^3 \underline\chi (\eta,\slashed\nabla\log\Omega)+\Omega^3(\underline\beta, 7\eta - 3\slashed\nabla\log\Omega)+\tfrac 32 \Omega^3\rho \tr\underline\chi + \Omega^3\adiv\underline\beta + \tfrac 12 \Omega^3 (\hat\chi,\underline\alpha). \nonumber
\end{align}

\subsubsection{The Bianchi identities}

In this paper, we only need the following two Bianchi identities: 
\begin{align}
  \label{eq:Bianchi-alpha}  \hat{ D}\underline\alpha +\left(2\omega- \tfrac 12 \Omega\tr\chi\right)\underline\alpha&=\Omega\left(-\slashed\nabla\ohat \underline\beta -(4\underline\eta-\zeta)\ohat\zeta-3\hat{\underline\chi}\rho+3{}^*\hat{\underline\chi}\sigma\right),\\
    D\underline\beta +\left(\tfrac 12 \Omega\tr\chi -\Omega\hat\chi +\omega\right)\underline\beta &= \Omega\left(-\slashed\nabla\rho +{}^*\slashed\nabla\sigma - 3\underline\eta\rho + 3{}^*\underline\eta\sigma +2\hat{\underline\chi}\cdot\beta\right).\label{eq:Bianchi-beta-bar}
\end{align}

\end{appendix}

\printbibliography[heading=bibintoc]

\end{document}